\documentclass[a4paper]{article}

\usepackage{graphicx}
\usepackage{wrapfig}
\usepackage{subcaption}
\usepackage{amsmath, amssymb, physics, amsthm, mathtools, dsfont}
\usepackage{multicol}
\usepackage{multirow}
\usepackage{hyperref}
\usepackage{geometry}
\usepackage{wrapfig}
\usepackage{etoolbox}
\usepackage{authblk}

\usepackage{todonotes}
\todostyle{holl}{color=green!30, size=\footnotesize}
\todostyle{sas}{color=red!30, size=\footnotesize}

\usepackage[inline]{enumitem}
\geometry{hmargin={3cm},vmargin=2cm, marginparwidth=4.5cm}

\usepackage{tabularx}

\usepackage{framed}

\usepackage{xcolor}
\definecolor{quantumviolet}{HTML}{53257F}
\definecolor{quantumgray}{HTML}{555555}
\hypersetup{
    colorlinks,
    citecolor=magenta,
    linkcolor=cyan,
    urlcolor=magenta
    }

\providetoggle{darkmode}
\settoggle{darkmode}{false}
\iftoggle{darkmode}{
    \pagecolor{darkbackground}
    \color{white}
    \hypersetup{
    colorlinks,
    citecolor=Red,
    linkcolor=Red,
    urlcolor=Red
    }
}

\usepackage{hyperref}
\usepackage{cleveref}

\newcommand{\Acc}{\mathsf{Acc}}
\newcommand{\Rej}{\mathsf{Rej}}

\renewcommand{\b}{\mathbf b}
\renewcommand{\a}{\mathbf a}
\renewcommand{\r}{\mathbf r}
\renewcommand{\c}{\mathtt{cheat}}
\newcommand{\z}{\mathbf z}
\newcommand{\x}{\mathbf x}
\newcommand{\rate}{r}

\newcommand{\Q}{\mathsf Q}

\newcommand{\X}{\mathsf{X}}
\newcommand{\Y}{\mathsf{Y}}
\newcommand{\Z}{\mathsf{Z}}
\newcommand{\E}{\mathsf{E}}
\newcommand{\Pauli}{\mathcal{P}}
\renewcommand{\P}{\mathsf{P}}
\newcommand{\Clifford}{\mathcal{C}}
\newcommand{\Enc}[2]{\X^{#1}\Z^{#2}}
\newcommand{\Dec}[2]{\Z^{#2}\X^{#1}}

\newcommand{\naturals}{\mathbb N}

\newcommand{\A}{\mathtt{A}}
\newcommand{\Alabel}{\A}
\newcommand{\Tlabel}{\mathtt{T}}
\newcommand{\Xlabel}{\mathtt{X}}
\newcommand{\Ylabel}{\mathtt{Y}}
\newcommand{\Zlabel}{\mathtt{Z}}
\newcommand{\Plabel}{\mathtt{P}}
\newcommand{\C}{\mathsf{C}}

\newcommand{\Apattern}{\A}

\newcommand{\class}{\mathfrak C}
\newcommand{\classA}{\mathcal A}
\newcommand{\BQP}{\textsf{BQP}}

\newcommand{\bqp}{c}
\newcommand{\F}{\mathsf{F}}
\newcommand{\D}{\mathsf{D}}
\newcommand{\G}{\mathsf{G}}

\renewcommand{\H}{\mathsf{H}}

\newcommand{\CNOT}{\mathsf{CNOT}}
\newcommand{\SWAP}{\mathsf{SWAP}}
\newcommand{\U}{\mathsf{U}}
\newcommand{\T}{\mathsf{T}}
\newcommand{\CR}{\mathsf{CR}}
\newcommand{\stab}{\hat{\mathcal{S}}}

\newcommand{\id}{\mathds{1}}

\newcommand{\QOTP}{\textsf{Q-OTP}}
\newcommand{\Resource}{\mathcal R}

\newcommand{\Protocol}{\pi}
\newcommand{\filter}{\vdash}
\newcommand{\qchannel}{\mathcal Q}

\newcommand{\perm}{\sigma}

\usepackage{thmtools}
\theoremstyle{plain}

\newtheorem{definition}{Definition}
\newtheorem{theorem}{Theorem}

\newtheorem{lemma}{Lemma}

\newtheoremstyle{upright}
  {10pt}{10pt}
  {\normalfont}
  {}
  {\bfseries}
  {}
  {0.5em}
  {
    \rule{\linewidth}{0.4pt}\\%
    \thmname{#1}\thmnumber{ #2}\thmnote{ (#3)}\\%
    \rule[1.3ex]{\linewidth}{0.4pt}
  }
\theoremstyle{upright}
\newtheorem{protocol}{Protocol}
\newtheorem{resource}{Resource}
\newtheorem{simulator}{Simulator}

\let\oldprotocol\protocol
\let\endoldprotocol\endprotocol

\let\oldresource\resource
\let\endoldresource\endresource

\let\oldsimulator\simulator
\let\endoldsimulator\endsimulator

\usepackage [
n,
advantage,
operators,
sets,
adversary,
landau,
probability,
notions,
logic,
ff,
mm,
primitives,
events,
complexity,
asymptotics,
keys
] {cryptocode}

\newcommand{\cmp}[1]{\complclass{#1}}
\renewcommand{\poly}{\mathrm{poly}}

\usepackage{algpseudocodex}
\algrenewcommand\algorithmicprocedure{$\triangleright$}

\usepackage{float}
\floatstyle{ruled}
\newfloat{resourcee}{htb!}{idf}
\floatname{resourcee}{Resource}

\newfloat{simulatorr}{htb!}{idf}
\floatname{simulatorr}{Simulator}

\newfloat{protocoll}{htb!}{idf}
\floatname{protocoll}{Protocol}

\newfloat{informalprotocol}{htb!}{idf}
\floatname{informalprotocol}{Informal Protocol}

\newfloat{informalresource}{htb!}{idf}
\floatname{informalresource}{Informal Resource}
\newfloat{reduction}{htb!}{idf}
\floatname{reduction}{Reduction}

\newfloat{routine}{htb!}{idf}
\floatname{routine}{Routine}

\newfloat{algo}{htb!}{idf}

\title{Composable Verification in the Circuit-Model via Magic-Blindness}

\author[]{Sami Abdul Sater, Harold Ollivier} \affil[]{DI-ENS, Ecole Normale Supérieure (Université PSL, CNRS, INRIA), \\
  45 rue d’Ulm, Paris 75005, France.}

\begin{document}
\maketitle

\begin{abstract}
  As quantum computing machines move towards the utility regime, it is essential that users are able to verify their delegated quantum computations with security guarantees that are (i) robust to noise (ii) composable with other secure protocols and (iii) exponentially stronger as the number of resources dedicated to security increases. Previous works that achieve these guarantees are expressed in the Measurement-Based Quantum Computation (MBQC) model and benefit from a modular framework of verification protocols. This leaves architectures based on the circuit-model---in particular those using the Magic State Injection (MSI)---with fewer options to verify their computations or with the need to compile their circuits in MBQC which leads to overheads.

    This paper introduces a family of noise robust, composable and efficient verification protocols for Clifford + MSI circuits that are secure against arbitrary malicious behavior. This family contains the verification protocol of Broadbent (2018, ToC), extends its security guarantees while also bridging the modularity gap between protocols for MBQC and those for the circuit-model, and reducing quantum communication costs. As a result, it opens the prospect of rapid implementation tailored to near-term quantum devices.

    Our technique is based on a refined notion of blindness, called magic-blindness, which hides only the injected magic states---the sole source of non-Clifford computational power. This enables verification by randomly interleaving computation rounds with classically simulable, magic-free test rounds, leading to a trap-based framework for circuit verification. As a result, circuit-based quantum verification attains the same level of security and robustness previously known only in MBQC. It also reduces the quantum communication cost as transmitted qubits are required only at the locations of state injection.

  \end{abstract}

\section{Introduction}
\subsection{Context and motivation}
The advent of quantum computing as a service and the nearing of the quantum-utility regime---where quantum computers will be able to tackle useful problems beyond the reach of classical computers---call for more trust to be built into the ecosystem~\cite{BKBH25grand}. In essence, this challenge calls for ensuring that the results of quantum computations cannot be spoofed easily while recognizing that checking the results of quantum computations cannot be done by recomputing them using classical means. Indeed, even reproducing the computation on a different quantum computer is not of much help, as current machines require tailoring the computation to their hardware specifications---e.g., with noise-model-aware error mitigation techniques---in a way that would prevent meaningful comparisons.

The question of verification is formalized through Quantum Prover Interactive Proof (QPIP) systems where a \cmp{BPP}---for Bounded Probabilistic Polynomial time---verifier checks the results of a \cmp{BQP} computation provided by an all-powerful quantum prover.
We find this situation in particular in the context of Delegated Quantum Computing (DQC), where a Client delegates a quantum computation\footnote{For $\BQP$ computations, the input to the computation is a classical bitstring, and the output is a decision bit.} to a Server. The Client plays the role of the verifier and the Server is the prover.
Several kinds of protocols achieving verification of quantum computations exist \cite{ABE08interactive,ABEM17interactive,FK17unconditionally,B18how,M18classical} and have essentially settled the question from a theoretical perspective. Yet, from a practical one, the situation is less satisfactory. All these protocols have overheads that impose an untenable trade-off between computational power and security as machines---especially those in the pre-utility regime---are strongly resource-constrained. Yet, verification of quantum computation is a worthwhile task in the pre-utility regime. The reason is that it provides a (the only?) rigorous way to ascertain claims made on the computational capabilities of quantum devices.

This has motivated the quest for efficient strategies to verify quantum computations. This article contributes to this broad line of research by introducing a large class of verification protocols well adapted to circuit-model architectures with magic state injection, which complements and modularizes the few strategies available in the circuit-model~\cite{B18how,BN25noise}.

\subsection{Related Works}
\label{subsection:related works}
Verification was recognized early on as a major tool for establishing trust in quantum computation~\cite{A07scott} as well as to tackle the more fundamental question of falsifiability of quantum mechanics~\cite{V07conference,AV12is}. This set off several lines of research to uncover protocols achieving verification of problems in \cmp{BQP}, with two different broad classes of protocols. The first one grants the Client the additional ability to operate on a single-qubit register as well as to use a quantum communication channel. It offers statistical security \cite{ABE08interactive,ABEM17interactive,FK17unconditionally,B18how,LMKO21verifying,KKLM22unifying}. In contrast, the work of \cite{M18classical} and subsequent works \cite{ACGH20non,BKLM22succinct} do not require a Client with any quantum ability, but come at the expense of security based on computational hardness assumptions.

Among the first ones, the protocols introduced in~\cite{ABE08interactive,ABEM17interactive} rely on quantum authentication schemes. The one of~\cite{FK17unconditionally} exploits specifics of the Measurement-Based Quantum Computing (MBQC) model to insert traps in the computation being delegated. While these two achieve statistical security, their practicality is however limited. This is because they incur a large overhead needed to guarantee their security. In spite of their differences, they rely on the generic idea that verification is achieved by making parts of the computation inaccessible to a malicious Server by embedding it into a much larger Hilbert space---hence the overhead. Then, the verification proceeds by running statistical checks on parts of the larger Hilbert space not used for computation, allowing one to probe the honesty of the Server.

The only protocols achieving a zero space-overhead are based on the Test/Computation paradigm, represented by \cite{B18how} and \cite{LMKO21verifying}.
It consists of randomly interleaving computation rounds and test rounds, where the latter are easy---classically simulable---quantum computations yielding deterministic outcomes that can be used to detect a cheating Server. In this paradigm, rounds have to be delegated indistinguishably, which is made possible by hiding information in qubits prepared and sent by the Client. In \cite{B18how}, this is made possible by compiling the original circuit into three possible instances (called "runs"). One of them has the net effect of performing the target computation, and the two others apply the identity on specific configurations of input states: these two are thus used as tests. The compilation is derived from Quantum Computing on Encrypted Data (QCED) \cite{B15delegating}, ensuring that the Server computes on encrypted data and thus that the three types of runs are statistically indistinguishable from one another.
In \cite{LMKO21verifying}, test runs are built by using the concept of traps introduced by the earlier \cite{FK17unconditionally}, thus again exploiting specifics of MBQC. They derive $k$ types of test runs where $k$ is the chromatic number of the underlying graph of the MBQC computation. The initial computation can be compiled on a bipartite graph state (like the brickwork state) with $k=2$, which echoes the two types of test runs of \cite{B18how}. All the rounds are delegated blindly using the Universal Blind Quantum Computing (UBQC) protocol of \cite{BFK09universal}, again ensuring that the types of runs are indistinguishable from one another. The two main advantages of \cite{LMKO21verifying} over \cite{B18how} are
composability and noise-robustness. The latter is intuitive: the protocol of \cite{LMKO21verifying} can tolerate circuit-level, server-side, noise under a threshold. The recent work of \cite{BN25noise} showed that circuit-model verification can also, like MBQC, be made noise-robust, with an extension of the initial \cite{B18how}. The former—composable security—is key for applications, as developed hereunder.

Much of the recent progress toward hardware-optimized verification has relied on the MBQC approach together with the composable-security analysis of \cite{FK17unconditionally} within the Abstract Cryptography (AC) framework~\cite{MR11abstract}. Establishing composable security for the underlying UBQC delegation protocol required blindness to be formulated as a precise cryptographic resource. Once phrased at that level of abstraction, verification could be decoupled from the original construction, revealing that trappification is not unique and enabling the modularization of verification protocols in \cite{KKLM22unifying}. This, in turn, allowed different components of MBQC-based verification schemes to be optimized independently and adapted to hardware constraints, leading to several concrete protocols~\cite{KLMO24verification,GLMO25composably,YKO25verifiable} and proof-of-concept experiments~\cite{DNMN23verifiable,GLMM24chip,DIVF24design, BWMS25designing}. This structural flexibility stems from the strength of the UBQC blindness notion: when executing a delegated computation, the Server only learns the underlying graph structure of the MBQC computation, and the order of measurements. Such strong blindness, however, comes at a cost, as the quantum communication scales with the size of the computation, requiring the Client to prepare a qubit for each node of the graph. In contrast, in the circuit-model no comparable standalone blindness notion has been identified. What plays its role in protocols such as \cite{B18how} arises from a compilation procedure rooted in QCED \cite{B15delegating}, with additional adaptations for the non-Clifford $\T$-gate and subsequently the Hadamard gate and Phase gate $\P$ ($\Z$-rotation of $\pi/2$ angle). The resulting communication cost scales essentially linearly with the number of single-qubit gates. However, here blindness remains tied to a specific construction rather than supporting a broader class of interchangeable verification protocols.

While the MBQC protocols achieve zero space-overhead when compared to the unprotected computation, their reliance on the MBQC model makes them less adapted to architectures that are close to the circuit-model for quantum computation---in particular those based on the Clifford + Magic-State Injection (MSI) model. This is because compiling a circuit into a measurement pattern already introduces space overhead. In fact, protocols~\cite{B18how,BN25noise} are the only ones known to optimize space-overhead in that model and to be good candidates for pre-utility implementations. But even so, they lack useful characteristics such as composability and, most importantly, the modularity required for further optimization.

There is thus a stark contrast between MBQC and circuit-model verification. On the one hand, MBQC approaches are proven to be composable and robust to noise. Also, they are based on a previously clearly identified notion of blindness (UBQC) that has helped shape a family of verification protocols (a modular framework) rather than exhibiting a single isolated protocol.
On the other hand, no analogous abstraction has been identified in the circuit-model where since \cite{B18how}, only noise-robustness was provided with \cite{BN25noise}.

\subsection{Contributions}

The contrast between MBQC and circuit-model verification raises several questions:
\begin{itemize}
    \item Is there a blindness concept underlying circuit-model verification like UBQC does for MBQC?
    \item If so, can this be leveraged to build a verification protocol in the circuit-model with composable security and noise-robustness?
    \item Can this be further leveraged to derive a family of protocols? In other words, does there exist a modular framework for verification in the circuit-model?

\end{itemize}

We answer all three of the above in the affirmative, as summarized in Table~\ref{tab: comparison table}. This work establishes the missing cryptographic and conceptual steps for verification of delegated quantum computations in the circuit-model and answers the above questions in a constructive manner. Rather than proposing a single new protocol, we identify and formalize the primitives that make composable and noise-robust trap-based verification possible in that model, with a modular trap design. Central among these is a new blindness notion for Clifford+MSI circuits, which we call Magic-Blindness. Our contributions are thus threefold.

\begin{table}
\centering
\resizebox{\linewidth}{!}{
\begin{tabular}{l|cc|ccc}

Computation model
  & \multicolumn{2}{c|}{MBQC} & \multicolumn{3}{c}{Circuit-model} \\
 Ref.
 & \cite{LMKO21verifying}
 & \cite{KKLM22unifying}
 & \cite{B18how}
 & \cite{BN25noise}
 & This work \\
\hline

Number of rounds
& $O(\log(1/\epsilon))$
& $O(\log(1/\epsilon))$
& $O(\poly(1/\epsilon))$
& $O(\log(1/\epsilon))$
& $O(\log(1/\epsilon))$ \\

Blindness
& UBQC
& UBQC
& —
& —
& Magic-Blindness \\

Qubits sent per round
& $O(\poly(|C|))$
& $O(\poly(|C|))$
& $O(|C|)$
& $O(|C|)$
& $O(n + t)$ \\

Robustness
& Yes
& Yes
& No
& Yes
& Yes \\

Composability
& Yes
& Yes
& No
& No
& Yes \\

Modularity
& No
& Yes
& No
& No
& Yes \\

\end{tabular}
}

\caption{\textbf{Comparison of verification protocols in the Test/Computation paradigm.} The table reports the round complexity required to achieve security level $\epsilon$, the quantum communication per round from the client, and whether the protocols provide robustness, composability, and modularity (i.e., whether they consist of a single fixed protocol or a family of protocols). Here, $|C|$ denotes the circuit size for MBQC, while for the circuit-model it is $n+t+6h+2p$ where $n$ is the number of qubits in the input to the circuit, $t$ is the count of $\T$-gates, $h$ of $\H$, $p$ of $\P$ ($\pi/2$ rotation).
}
\label{tab: comparison table}
\end{table}

\paragraph{Composable magic-blind delegation in the circuit-model.} We introduce a composable delegation protocol with a blindness property that is sufficient for the verification of Clifford+MSI circuits. More precisely, working in the AC-security framework, we isolate an ideal resource that allows a Client to delegate a computation to a Server while hiding whether it is the original computation or one in which the magic states have been replaced by stabilizer states, thus yielding a classically simulable computation. Quite naturally, we call such a resource \emph{Magic-Blind Delegated Quantum Computation} (Resource~\ref{resource:mblind_DQC}).

Crucially, this abstraction implies that only the input qubits and the injected resource states need to be hidden: the public Clifford structure can be executed directly by the Server. As a result, the Client’s quantum communication scales with $n+t$, where $n$ is the input size and $t$ is the number of state injections (i.e., the $\T$-count), rather than with the size of the entire circuit as in UBQC that blinds the entire computation. Here, magic-blindness leaks the Clifford structure of the computation, but this does not prevent ensuring blindness between test and computation runs.

It is constructed from intermediate functionalities using composable security: one that allows a Client to delegate a Clifford layer and a state injection while hiding which state is injected (Resource~\ref{resource:blind-gate}), and one that allows delegation of a Clifford layer followed by Pauli measurements while hiding the underlying quantum state (Resource~\ref{resource:blind-meas}). Those constructions are described in Section~\ref{section:blindness}. While clearly analogous to the gadget of~\cite{B18how}, these resources are more atomic and allow greater freedom when generating test runs.

\setcounter{theorem}{4}
\begin{theorem}[Security of Magic-Blind DQC, informal]
    For any computation specified by a sequence of Clifford and state injection layers, the Magic-Blind DQC Protocol~\ref{informal:protocol:mblind_DQC} allows a Client to delegate the computation while revealing to the Server only the Clifford structure (gates and location) and not the injected states.
    The Client’s quantum communication consists of $n+t$ qubits, where $n$ is the input size and $t$ is the number of injected states.
    \end{theorem}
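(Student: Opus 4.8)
The plan is to prove the statement in the AC framework by exhibiting a simulator $\Simulator$ and showing that the real protocol composed with its concrete resources is indistinguishable from the ideal Resource~\ref{resource:mblind_DQC} attached to $\Simulator$. Rather than building $\Simulator$ monolithically, I would exploit the modular construction: since the blind-gate functionality (Resource~\ref{resource:blind-gate}) and the blind-measurement functionality (Resource~\ref{resource:blind-meas}) are already established as AC-secure, the composition theorem lets me assemble the global statement from these building blocks, so that the total distinguishing advantage is bounded by the sum of the per-module advantages.

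First I would fix a canonical decomposition of the Clifford + MSI computation into an alternating sequence of Clifford layers and single-qubit state-injection layers, with any terminal Pauli measurements factored out. Each injection layer is realized by one call to Resource~\ref{resource:blind-gate}, which hides whether the qubit handed to the Server is a genuine magic state or the stabilizer substitute; the final read-out is realized by Resource~\ref{resource:blind-meas}. The Clifford layers themselves carry no secret and are executed directly by the Server. The subtle bookkeeping is the Pauli frame: each gate-teleportation correction is a Clifford conditioned on a measurement outcome, and these corrections must be propagated through subsequent Clifford layers and folded into the classical control maintained by the Client. I would make this tracking explicit so that the information the Server sees at every step reduces to the public Clifford structure plus uniformly random classical bits.

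Next I would construct the global simulator by chaining the simulators guaranteed for each module. Given only the public Clifford structure, $\Simulator$ emulates each blind-gate and blind-measurement interface in turn, feeding forward the simulated outcomes as the classical frame demands. The key point---and what I expect to be the main obstacle---is arguing that the interfaces glue consistently: the output of one module becomes the input of the next, and I must verify that the joint distribution of all side-information seen by the adversary factorizes exactly as the product of the individual simulated views, with no residual correlation that could leak whether the injected states were magic or stabilizer. This is where the atomicity of the two resources pays off, since each exposes only its own Clifford pattern and fresh randomness; but establishing it rigorously requires a hybrid argument that replaces the modules one at a time and invokes the composition theorem at each swap.

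Finally, the communication count follows by inspection of the protocol. The Client transmits the $n$ input qubits once at the start and one qubit per state injection, of which there are exactly $t$ (the $T$-count), while every other operation---the Clifford layers and the classical corrections---requires no quantum transmission. Summing gives $n+t$ transmitted qubits, which completes the claim.
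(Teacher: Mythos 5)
Your security argument follows essentially the same route as the paper's: Theorem~\ref{thm:mblind_DQC} is proved precisely by composing $t$ instances of Resource~\ref{resource:blind-gate} with one instance of Resource~\ref{resource:blind-meas} and invoking the AC composition theorem (the hybrid argument you describe, swapping modules one at a time, is exactly what that theorem packages; since each module is constructed \emph{perfectly}, the accumulated error is zero rather than a sum of per-module advantages). Your Pauli-frame bookkeeping and simulator-chaining are consistent with that and raise no issue.

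The genuine gap is in the communication count, which is part of the statement. The protocol you proved secure---the literal composition of the modules---does \emph{not} transmit $n+t$ qubits. Each call to Protocol~\ref{protocol:blind-gate} ends with the Server returning the $n$ computation qubits to the Client (\textsc{Final Decryption}), and the next call begins with the Client re-encrypting and re-sending them (\textsc{Initial Encryption}); the final call to Protocol~\ref{protocol:blind-meas} requires yet another transmission of the $n$ qubits. The composed protocol therefore uses quantum communication of order $t(2n+1)+n$, not $n+t$. The paper needs a separate step to get the claimed count: Protocol~\ref{protocol:mblind_DQC} and Theorem~\ref{thm:mblind_DQC-noBF}, where the Client never takes the qubits back but instead instructs the Server to keep them, re-interpreting her updated decryption keys as the next layer's encryption keys; security survives because those keys remain uniformly distributed and hidden from the Server. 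Your ``by inspection'' count describes this streamlined protocol, while your security proof was given for the back-and-forth composition---these are different protocols, and the bridging argument that eliding the decrypt/re-encrypt/re-send round trips preserves the constructed resource is exactly the piece your proposal is missing.
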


\paragraph{Efficient, noise-robust, and composable verification in the circuit-model.}
Building on the above, we provide a composable verification protocol in the circuit-model.
We first prove that the protocol is correct, \textit{i.e.}, that it performs the intended verified DQC functionality when both parties behave honestly and operate perfectly. We also prove its robustness to circuit-level noise, that is, that the ideal functionality can still be achieved when one of the honest parties (the Server) is affected by circuit-level noise. Finally, we prove its security against arbitrary behavior when the Server stops being honest. Overall, the protocol is efficient because it approximates an ideal functionality up to a negligible construction error.

The protocol presented in Section~\ref{section:verification} works by using the Magic-Blind DQC to interleave test and computation rounds blindly. Each test is designed as a computation with the same Clifford structure, but with only stabilizer states injected, making the measurement outcome of a chosen qubit deterministic and thus efficiently checkable. It is then possible to show that the security error decreases exponentially with the number of rounds, while the quantum communication per round scales linearly in the $\T$-count.

\setcounter{theorem}{5}
\begin{theorem}[Security of Verified DQC, informal]
    For any computation described in the Clifford+MSI model,
    the Verified Delegated Quantum Computation Protocol~\ref{protocol:verification}
    is \emph{composably secure}  and
    ensures that, with probability \emph{exponentially close to one} a dishonest Server is caught and an \emph{honest but noisy Server is accepted}.
  \end{theorem}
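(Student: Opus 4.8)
The plan is to carry out the entire argument inside the Abstract Cryptography framework and to let the composition theorem absorb the composability claim, so that the real content of the proof becomes a statistical analysis of the trap rounds. First I would replace the real Magic-Blind DQC subprotocol by its ideal resource (Resource~\ref{resource:mblind_DQC}): by the composable security established above (Theorem~5) this substitution alters the distinguishing advantage only negligibly, and by AC composition it then suffices to show that the classical verification wrapper of Protocol~\ref{protocol:verification} --- the blind interleaving of test and computation rounds, the trap check $\TrapsCheck$, and the majority vote $\MajVote$ --- constructs the ideal verified-DQC resource when plugged on top of the ideal Magic-Blind DQC resource. After this reduction the only information leaked to the Server is the common Clifford structure, which is \emph{identical} for test and computation rounds; hence any Server deviation is, by construction, independent of the round type. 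This decoupling is precisely what magic-blindness buys us and is the conceptual crux of the whole proof.

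Next I would reduce an arbitrary Server deviation to a Pauli attack. Since the hidden registers (the input qubits and the injected states) are encrypted by a quantum one-time pad inside the ideal resource, twirling over the Pauli keys turns any CPTP deviation into a convex combination of Pauli errors on the logical wires, distributed identically across all rounds. For a fixed Pauli error $E$ I would then analyse its action on the two round types separately: on a test round the injected stabilizer states make a chosen output qubit deterministic, so $E$ is caught exactly when it anticommutes with the associated trap stabilizer, which by the freedom in the trap construction happens with at least a constant probability; on a computation round the same $E$ corrupts the logical output only if it acts non-trivially on the relevant wires. The key per-round inequality to establish is that any error pattern likely to flip the computation output is also likely to fail at least one trap, so that corrupting the result while passing the trap check is improbable in a single round.

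I would then lift this per-round bound to the claimed exponential security error by a sequential-repetition and concentration argument over the $N$ interleaved rounds. The acceptance rule passes only if the number of failed traps stays below a threshold fixed strictly above the honest-noise expectation; a Chernoff--Hoeffding bound shows that the probability that a deviation corrupting the outcome of $\MajVote$ nonetheless keeps the failed-trap count below this threshold decays as $\exp(-\Omega(N))$, which is the soundness statement ($\Wrong$ occurs with negligible probability on $\Acc$). Packaging the corresponding simulator --- one that samples trap outcomes from the leaked Clifford structure, runs the same threshold test, and then either forwards the ideal output or raises $\Rej$ --- closes the indistinguishability proof and yields composable soundness. For completeness and noise-robustness I would, symmetrically, bound for an honest Server under global noise below the stated threshold the expected fraction of failed traps; provided this expectation lies below the rejection threshold, the same concentration inequality gives acceptance with probability exponentially close to one.

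I expect the main obstacle to be the per-round soundness inequality linking trap failure to output corruption, and in particular controlling the correlation between the event ``$\MajVote$ is wrong'' and the event ``too few traps fail'' under a worst-case, adaptively chosen Pauli attack: both events are governed by the \emph{same} error distribution, so one must rule out attacks that hide in the gap between the detection threshold and the corruption threshold. The reduction to Pauli attacks and the AC composition step are essentially routine once Theorem~5 is in hand, but the statistical decoupling of these two events --- together with choosing the rejection threshold so that the soundness and robustness windows do not overlap --- is where the genuine work lies.
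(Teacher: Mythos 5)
Your proposal follows essentially the same route as the paper's proof: parallel AC composition of the magic-blind delegation, reduction of arbitrary Server behavior to a convex combination of Pauli attacks via the one-time-pad twirl (Lemma~\ref{lemma:mblind:Pauli deviations}), detection of every harmful Pauli by at least one trap (Lemma~\ref{lemma:harmful deviations are detected}), a simulator that runs the protocol on a dummy input and forwards only the accept/reject bit, and Hoeffding-type concentration over the random test/computation partition to bound $\Pr[Y < w \wedge Z > d/2]$ (Lemma~\ref{lemma:security error}) --- and your identified crux, decoupling ``majority vote is wrong'' from ``few traps fail,'' is exactly what the paper resolves by noting that, conditioned on a fixed deviation and permutation, test and computation outcomes are independent. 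One quantitative slip: a harmful Pauli is caught by the uniformly chosen single-qubit trap with probability at least $1/(n+t)$, \emph{not} a constant, so the exponent in the security error carries a factor depending on $k = n+t$ (visible in the paper's bound through the terms involving $w/s$ and $1/k$); this weakens the rate and forces the number of test rounds to scale accordingly, but does not change the qualitative exponential-security conclusion.
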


\paragraph{A trap-based verification framework in the circuit-model.}
Taking a step back, we show that the same construction offers more than just a protocol. Indeed, magic-blind delegation induces a natural class of indistinguishable computations that comprises the circuits that share the same Clifford skeleton but differ in their injected states. The trap design used in Section~\ref{subsection:framework single qubit} is a particular choice, but in Section~\ref{subsection:framework generalized} we generalize it and show it can be made modular. As a result, we provide a versatile \emph{trap-based verification framework in the circuit-model}. In particular, Broadbent’s verification protocol~\cite{B18how} (and the subsequent \cite{BN25noise}) can be cast into this framework. It also makes explicit the structural origin of circuit-model traps and enables the systematic construction of families of verification protocols with identical security guarantees, in direct analogy with the trappification framework for MBQC~\cite{KKLM22unifying}.

\subsection{Technical overview}
In this section we briefly summarize the technical ingredients that allow us to reach the main goal of the work: composable verification of quantum computations in the circuit-model. We start by describing the Test/Computation paradigm---in which our work fits---while explaining the ingredients that make it work: essentially the ability to execute test computations that are indistinguishable from the target computation. For this to hold, the rounds must be delegated blindly, and we formalize the blindness requirement in the Clifford+MSI model. Finally, we leverage this to build a verification protocol based on traps and show that the approach is in fact modular.

The whole focus of the paper is to provide constructions for verified delegated quantum computing. As we aim for composability, this requires us to define the ideal behavior of verified DQC: this is captured by the informal Verified DQC Resource~\ref{informalresource: VDQC}. It captures two modes for the Server: to cheat or not to cheat. It is defined more rigorously later in the paper, in Section~\ref{section:verification}.
\begin{informalresource}
    \caption{Verified Delegated Quantum Computation}
    \label{informalresource: VDQC}
    \begin{algorithmic}[0]
        \State \textbf{Client's Inputs:} quantum computation on a classical input.
        \State \textbf{Server's Inputs:} a cheating bit $c\in\bin$, with $0$ indicating an honest behavior.
        \Procedure{Computation by the Resource}{}
            \If{$c=0$}
            \State Resource performs the computation, measures the output qubits and sends the classical outcomes to the Client.
            \Else
            \State Resource sends an $\mathsf{Abort}$ message to the Client.
            \EndIf
        \EndProcedure
    \end{algorithmic}
\end{informalresource}
The aim of the constructions we present along the way is to be able, eventually, to approximate this ideal behavior up to a negligible distance in diamond norm through a protocol. In this technical overview, we lay out the typical rationale of the Verification protocols, exposing their requirements, and present the ideal resources and protocols we introduce to meet them.

\paragraph{The Test/Computation paradigm.}
The approach for verification that we propose follows the same paradigm as \cite{B18how,LMKO21verifying}, that we describe here. The Test/Computation paradigm uses repetition as the only overhead to verify a delegated quantum computation. The Client has an initial quantum computation $C$, described in a \emph{computation model}  that she wants to delegate to a quantum Server. She uses a \emph{blind delegation protocol} that embeds her initial computation in a \emph{class of indistinguishable quantum computations}. This prevents the Server from knowing if the instance requested by the Client is the initial computation or any other one in the class.

Among the class, there are instances that are classically simulable and yield deterministic measurement outcomes on some output qubits. These are called \emph{traps}: configurations of the input state to make a chosen output qubit's measurement outcome deterministic.
Then, for a chosen number of rounds, the Client delegates either the initial computation or one of the instances with traps, and rejects the overall outcome if the number of triggered traps exceeds a certain threshold. This is the essence of the \emph{Test/Computation paradigm} and it is depicted in Figure~\ref{fig:verif-canvas}.

\begin{figure}[h]
  \centering
  \begin{subfigure}[b]{0.5\linewidth}
    \centering
    \includegraphics[width=\linewidth]{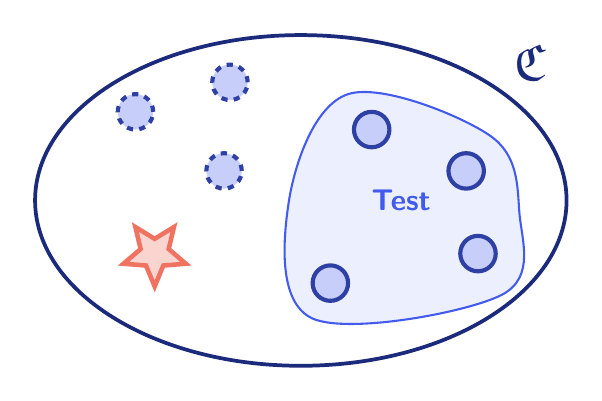}
    \caption{Under a \textbf{blindness} protocol, the target computation is embedded in a class $\class$ containing the initial computation (red star) and classically simulable instances (blue circles), all indistinguishable when delegated. Those yielding deterministic measurement outcomes are named \emph{traps}.}
    \label{fig:verif-canvas:blindness}
  \end{subfigure}
  \hfill
  \begin{subfigure}[b]{0.45\linewidth}
    \centering
    \includegraphics[width=\linewidth]{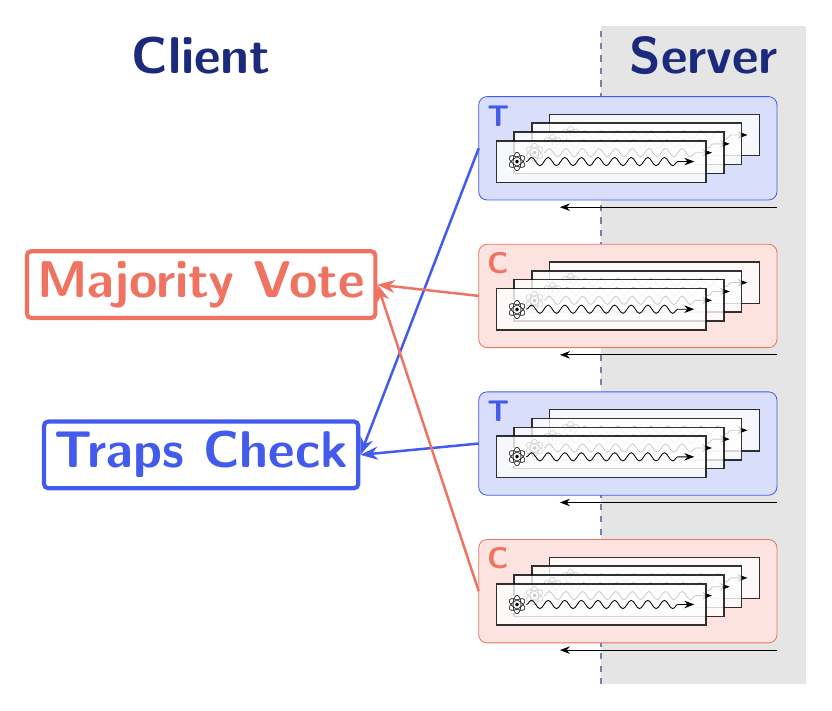}
    \caption{In the \textbf{Verification} protocol, test (blue boxes) and computation rounds (red boxes) are blindly delegated to the Server. Trap outcomes are checked by the Client to yield the acceptance/rejection decision; computation outcomes are aggregated via a majority vote.}
    \label{fig:verif-canvas:verification}
  \end{subfigure}
  \caption{The Test/Computation paradigm.}
  \label{fig:verif-canvas}
\end{figure}

Our work essentially introduces and formalizes constructions for the Test/Computation paradigm in the Clifford+MSI model. This representation captures both standard circuit-model architectures and the measurement-based implementations where qubits measured in a non-Pauli basis are instead prepared in a magic state and measured in an adaptive Pauli basis \cite{SDKO07direct}.

\subsubsection{Blindness (Section~\ref{section:blindness}).}
The Test/Computation paradigm does not require the computation to be entirely blinded when delegated: it only requires enough blindness to make the target computation indistinguishable from classically simulable computations that can then be used as tests.
The central idea is to exploit the fact that, in this model, the only potential source of computational non-classicality is carried by the injected qubits when they are in a magic state, while the surrounding Clifford structure is efficiently simulable and can be made public.
\paragraph{Blinding the Magic only.}
By hiding which single-qubit states are injected, the Client can make different computations indistinguishable to the Server: genuine quantum computations using magic states, and classically simulable test computations obtained by injecting only stabilizer states. This is formalized through a blindness concept that we call \emph{magic-blindness}, captured by Resource~\ref{informalresource:mblind_DQC}, that we present here informally.
\begin{informalresource}
    \caption{Magic-Blind DQC}
    \label{informalresource:mblind_DQC}
    \begin{algorithmic}[0]
        \State \textbf{Client's Inputs:} computation on $n$ qubits and $t$ ancillas.
        \State \textbf{Server's Inputs:} cheating bit $c\in\bin$, set to $0$ if honest.
        \State \textbf{Public Information:} Clifford parts of the computation.
        \Procedure{Computation by the Resource}{}
        \If{$c=0$ (honest behavior)}
        \State Resource performs the intended computation and measures the qubits
        \State Resource returns the classical bit-string of measurement outcomes to the Client.
        \Else{(malicious behavior)}
        \State Server provides a state and a CPTP map to perform, based on the public information only.
        \State Resource applies the CPTP map and returns the outcomes to the Client.
        \EndIf
        \EndProcedure
    \end{algorithmic}
\end{informalresource}
The Magic-Blind Delegated Quantum Computation Resource allows a Client to delegate a computation on $n$ qubits and $t$ ancillas, while hiding whether the injected ancillas are magic states, since only the content of the Clifford gates of the computation is leaked. By definition of this resource, any Server cheating behavior does not allow it to infer this information.
\begin{center}
    \textit{
        Magic-blindness is the ingredient that creates the indistinguishability required to embed efficiently simulable computations in the Clifford+MSI model.
    }
\end{center}

\paragraph{Composing Blind State Injection and Blind Measurements.}
The core idea of the protocol we present to realize this Resource is for the Client to apply a Pauli encryption to the states before sending them, thereby perfectly blinding their content through a quantum one-time pad, and to keep track of the evolution of the Pauli frame to later undo this encryption by decoding the measurement outcomes returned by the Server. Since all operations that the Server is supposed to apply are Clifford, this tracking is efficient. Intuitively, if the Server is not honest, it might return a bad result---we deal with this in the verification part---but it never learns the content of the state, which is what matters for our purposes.
\setcounter{informalprotocol}{3}
\begin{informalprotocol}
    \caption{Magic-Blind Delegated Quantum Computation}
    \label{informal:protocol:mblind_DQC}
    \begin{algorithmic}[0]
        \State Client sends an encrypted input state to the Server.
        \For{$i \le t$}
        \State Server applies public Clifford layer $\C_i$, and
        \State Client and Server perform a blind state-injection gadget.
        \Comment{Protocol~\ref{protocol:blind-gate}}
        \EndFor
        \State Server applies final Clifford $\C_{t+1}$ and measures all qubits, Client decodes. \Comment{Protocol~\ref{protocol:blind-meas}}
    \end{algorithmic}
\end{informalprotocol}

This protocol is built by composing sub-protocols, leveraging composability for its security. The first one makes the Server do a Clifford layer followed by a state injection, where all the qubits are blinded. It is used for as many layers as the computation has. Then a final Clifford layer followed by measurements, where all the qubits are again blinded. Those ideal behaviors are respectively implemented by a Blind State-Injection Protocol~\ref{protocol:blind-gate} and Blind Measurements Protocol~\ref{protocol:blind-meas}, all detailed in Section~\ref{section:blindness}.

\paragraph{Consequence: Clifford and non-Clifford delegated indistinguishably}
The main consequence of using the above protocol is what we aimed for: stabilizer states can be injected instead of magic states, and the computation becomes entirely Clifford on $n+t$ qubits, instead of the initial non-Clifford computation $C$ on $n$ qubits (tracing out the $t$ ancillas after they are injected). Throughout the work, when no magic states are injected, we refer to the resulting $n+t$-qubit Clifford as $\G$: it is the circuit obtained by interleaving the Clifford parts of the initial computation with $\CNOT$ on the ancillas to perform state injection. The same cannot be said when magic states are injected since conditioned Clifford operations need to be added to ensure that the state is being transformed correctly. Indeed, Magic-State Injection requires $\CNOT$, measurement, and conditioned Clifford to implement a $\T$-gate. When injecting a stabilizer state instead, the conditioned Clifford can be dropped, and the measurements can thus be delayed: the evolution of the entire system on $n+t$ qubits can be simulated classically and is described by the $n+t$-qubit Clifford circuit $\G$ (see Equation~\ref{eq:blindness:G} and Figure~\ref{fig:blindness:G}).

Ultimately, Magic-Blindness embeds the initial computation $C$ in a class $\class$ of computations that are delegated indistinguishably under Protocol~\ref{informal:protocol:mblind_DQC}: they all consist of the same Clifford parts, and differ by the states that are injected. The class $\class$ contains the initial computation $C$ as well as other instances that perform the Clifford $\G$ on stabilizer inputs.

\paragraph{Consequence on malicious behavior.}
Since Protocol~\ref{informal:protocol:mblind_DQC} is proven composably secure, \emph{i.e.}, it constructs Resource~\ref{informalresource:mblind_DQC}, any adversarial deviation is captured by a single strategy at the level of the resource. Magic-blindness then guarantees that all computations in the class are indistinguishable to the Server. Therefore, a malicious Server cannot condition its attack on a particular instance: the same deviation acts on every computation in the class, a property that will be crucial in the verification analysis.

\begin{center}
    \textit{
        Because the Server cannot distinguish instances in the class $\class$, any malicious deviation must act uniformly across all instances, including the target computation and the efficiently simulable ones used for tests.
    }
\end{center}

\subsubsection{Verification (Section~\ref{section:verification}).}

\paragraph{Leveraging Magic-Blindness}
We leverage the Magic-Blind DQC Protocol above to realize the ideal Verification Resource with Protocol~\ref{informal:protocol:verification} that follows the Test/Computation paradigm presented in Figure~\ref{fig:verif-canvas}. It interleaves test and computation rounds blindly using Protocol~\ref{informal:protocol:mblind_DQC}, where test rounds are instances of $\class$ that are efficiently simulable and yield deterministic measurement outcomes—\emph{traps}—allowing the Client to catch a cheating Server. This protocol is designed for $\BQP$ computations, for which the input is a classical bitstring and the output is a decision bit.

\setcounter{informalprotocol}{4}
\begin{informalprotocol}
    \caption{Verified Delegated Quantum Computation}
\label{informal:protocol:verification}
\begin{algorithmic}[0]
  \State \textbf{Client inputs:} quantum computation $C$, input bitstring $\x$, parameters $d, s, w$.

  \State Client chooses a random partition of $d$ computation and $s$ test rounds.
  \Comment{Private
  set-up}
  \State Client delegates the rounds blindly using Informal Protocol~\ref{informal:protocol:mblind_DQC}.
  \Comment{Blind delegation}

  \State If more than $w$ test rounds had trap failures, abort; else go to next step. \Comment{Traps check}
  \State If output $z$ has more than $d/2$ occurrences, output $z$ ; else $z\oplus 1$. \Comment{Majority vote}

\end{algorithmic}
\end{informalprotocol}

\paragraph{Designing traps for the Clifford+MSI model.}
Traps must be designed to detect any Server deviation that might harm the computation. However, in Section~\ref{subsection:blind:Pauli dev} we show that as a consequence of using MB-DQC at each round, which contains a Pauli encryption of the entire register on which the Server operates (potentially maliciously), any deviation with respect to the instructed behavior can be reduced to a convex combination of Pauli deviations, because of a Pauli Twirl. This yields a classification of Pauli deviations into \emph{harmful} (flipping at least one measurement outcome) and \emph{harmless} (not flipping any measurement outcome) deviations.

We use this to introduce, in Section~\ref{subsection:framework single qubit},
a way to detect any harmful deviation while being insensitive to harmless ones.
It is based on the following intuition: because deviations are only Pauli, they can only deterministically flip deterministic Pauli measurements. Hence we introduce the concept of \emph{traps} for Clifford+MSI computation: computations in $\class$ yielding efficiently simulable and deterministic measurement outcomes that can be used to check the presence of (harmful) deviations.
Intuitively, because such computations are described by the Clifford evolution $\G$, a trap can be obtained by injecting stabilizer states according to the following logic: in order to make the measurement of qubit $i$ deterministic, the output state must be stabilized by $\Z_i$, thus the input state must be stabilized by $\G^\dagger \Z_i\G$. Preparing a $+1$-eigenstate of this operator and injecting it to the computation, qubit by qubit, and checking outcome of qubit $i$, is thus a way to detect if a Pauli deviation was present on that qubit. Having one trap per output qubit is the construction we propose in Section~\ref{subsection:framework single qubit}, building a protocol on top of it in Section~\ref{subsection:verification:protocol} and proving its security. Later, in Section~\ref{subsection:framework generalized}, we show that this approach can be generalized into a more modular approach, building traps for subsets of qubits. This freedom to design traps while still detecting any harmful deviation is thus at the core of a verification framework in the Clifford+MSI model (depicted on Figure~\ref{fig:canvas-framework}).
\begin{figure}[h]
  \centering
  \begin{subfigure}[c]{0.40\linewidth}
    \centering
    \includegraphics[width=\linewidth]{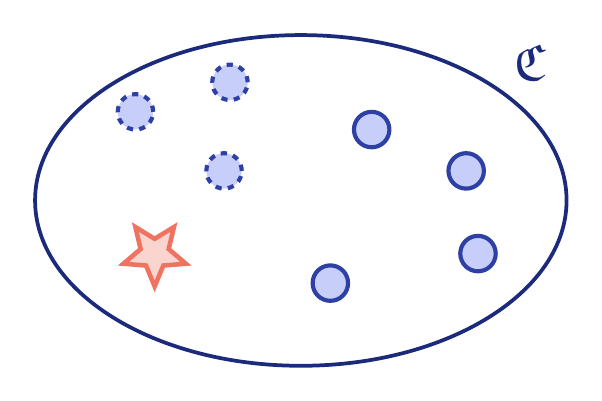}
    \caption{Blindness.}
    \label{fig:canvas-framework:blindness}
  \end{subfigure}
  \hfill
  \begin{subfigure}[c]{0.18\linewidth}
    \centering
    \includegraphics[width=\linewidth]{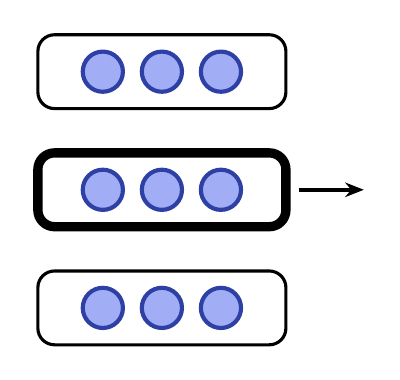}
    \caption{Trap design.}
    \label{fig:canvas-framework:traps}
  \end{subfigure}
  \hfill
  \begin{subfigure}[c]{0.38\linewidth}
    \centering
    \includegraphics[width=\linewidth]{tikz-figures/canvas/verification.pdf}
    \caption{Verification.}
    \label{fig:canvas-framework:verification}
  \end{subfigure}
  \caption{\textbf{The modular trap-based verification framework.}
    \emph{(\subref{fig:canvas-framework:blindness}) Blindness.}
    Magic-blindness embeds the target computation $C$ in a class $\class$ containing it (denoted by a red star) alongside multiple families of efficiently simulable computations (blue circles), some of them yielding deterministic measurement outcomes (solid line)—those are named \emph{traps}.
    \emph{(\subref{fig:canvas-framework:traps}) Trap design.}
    The Client selects one trap family (highlighted row) from the available options; any family that detects all Server deviations suffices.
    \emph{(\subref{fig:canvas-framework:verification}) Verification.}
    Using the chosen family, the protocol proceeds exactly as in Figure~\ref{fig:verif-canvas:verification}: test and computation rounds are blindly delegated, trap outcomes are checked, and computation outcomes are aggregated by majority vote.}
  \label{fig:canvas-framework}
\end{figure}

\paragraph{Main result: efficient, composable, noise-robust verification in the circuit-model.}

\begin{itemize}
    \item \textbf{Composability:} we prove the composable security of Protocol~\ref{informal:protocol:verification}, \emph{i.e.}, we show that it implements Resource~\ref{informalresource: VDQC} up to a construction error $\epsilon$ that corresponds to the distinguishing probability between the Protocol and the Resource for any unbounded adversary.
    \item \textbf{Exponential security:} we show that in the presence of a malicious Server, $\epsilon$ corresponds to the maximum probability to trigger less than $w$ test rounds out of $s$ while corrupting more than $d/2$ computation rounds, enough to alter the outcome of the majority vote. In Lemma~\ref{lemma:security error}, we show that this quantity is negligible in $d, s$, as long as $w$ is chosen within a specific range depending on the number of types of tests and the BQP error of the computation. It thus shows that the security error decreases exponentially with the number of rounds involved in the protocol.
    \item \textbf{Noise-robustness:} we show that when the Server is honest but operates on a device with a circuit-level noise that affects all rounds with probability less than $p_{err}$,
    traps are not triggered as long as $p_{err}< w/s$. As a consequence, the level of noise that is tolerated depends on where the test rounds threshold is set, which depends on the desired security.
\end{itemize}

\begin{center}
    \textit{
        Magic-blindness creates an indistinguishable computation class; this class enables trap constructions that detect all deviations, yielding composable verified delegation with exponential security and correctness robust to circuit-level noise.
    }
\end{center}

\paragraph{Organization of the paper.} The rest of the paper is organized as follows. We start with preliminaries in Section~\ref{section:prelims}. In Section~\ref{section:blindness}, we present the constructions and results relevant for magic-blindness, and in Section~\ref{section:verification}, we show how to leverage this to build a composably secure verification protocol following the Test/Computation paradigm. Finally, we conclude in Section~\ref{section:Discussion}.

\section{Preliminaries}
\label{section:prelims}
\setcounter{theorem}{0}
\subsection{Notations}
\begin{itemize}
    \item Bit strings $\a$ are in bold, with subscript $a_i$ denoting the $i$-th bit, and $|\a|$ denoting their length. Concatenation is denoted by $||$.
    \item Sans-serif font is for unitaries like $\C$, and the application to a quantum state is $\C[\rho] = \C\rho\C^\dagger$; compositions are written with $\circ$.
    \item Typewriter font is for state labels, like the $\ket\Xlabel = \ket+$ state because $\ket+$ is a $+1$-eigenstate of the Pauli operator $\X$. We also label the magic state for the $\T$-gate as $\ket\Tlabel=\T\ket+$ (see later).
    \item Quantum circuits are sequences of instructions, specified by composition of unitaries like $C=\C_2\circ\C_1$, and end with computational basis measurement of all the qubits.
    \item
    A unitary with an index means that it applies the identity on the unspecified indices. For instance, $\X_n$ is the Pauli $\X$ applied on the $n$-th qubit. An exception in the paper holds for $\C_i$ and $\F_i$, where the index $i$ refers to layers in the circuit, numbered from $1$ to $t$. This will be made explicit later.
    \item For a tensor product $\E$ of $n$ unitaries, let $[\E]_i$ denote the $i$-th unitary of the tensor product ($i\leq n$).
    \item We write the set of angle multiples of $\pi/2$ as $\Theta=\{0, \pi/2, \pi, 3\pi/2\}$.
    \item For a set $S$, writing $s\gets\$ S$ means that $s$ is sampled uniformly at random from $S$.
\end{itemize}

\subsection{Setting}
This work is in the Prepare-and-Send setting, where the Client has the ability to prepare single-qubit states in the set of states $\classA = \{\ket\Tlabel, \pm\ket\Xlabel, \pm\ket\Ylabel, \pm\ket\Zlabel\}$. These are the six single-qubit stabilizer states, and the magic state. The Client is assumed to be perfect. For the Server, three scenarios are examined (see Section~\ref{section:verification}): honest and perfect, honest but noisy, or arbitrarily malicious. In this work we consider quantum computations in the $\BQP$ class (see below), for which the input is a classical bitstring and the output is a decision bit.

\subsection{Quantum Computations}
\label{subsection:prelims:QC}

\paragraph{Bounded-error quantum computation ($\BQP$).}
A language $L$ is in $\BQP$ if there exists a uniform family of polynomial-size quantum circuits $\{C_n\}_{n \in \mathbb{N}}$ and a constant $\bqp < 1/2$ such that for every input $x \in \{0,1\}^n$, the circuit $C_n$ outputs the correct decision bit $z^\star$ for whether $x \in L$ with probability at least $1-\bqp$. This naturally captures general quantum computations with classical inputs and outputs.
Indeed, any $\BQP$ computation resulting in a classical bitstring can be reduced to a polynomial number of decision problems, meaning this definition encompasses the full range of problems solvable by a quantum computer.

In the delegated setting considered in this work, it is more convenient to separate the "input preparation" step from the actual computation machinery. We therefore write $C(\x)$ for a computation that consists of running circuit $C$ on input the input $\ket\x$ provided by the Client. Also, we assume all qubits are measured in the computational basis but only the first one $y=C(\ket\x)$ is kept. Under this convention, correctness requires that
\[
\Pr[y = z^\star] \geq 1-\bqp
\]
for every $\x \in \{0,1\}^n$.

\paragraph{The Pauli and Clifford groups.}
In this work, we write $\mathcal P_n$ for the Pauli group on $n$ qubits, defined as
\[
\mathcal P_n=\{\pm 1,\pm i\}\times \{\id,\X,\Y,\Z\}^{\otimes n}.
\]
By definition, the Clifford group $\mathcal{C}_n$ over $n$-qubits is a subgroup of the unitary group that normalizes the Pauli group: for any $\C\in\Clifford_n$ and any $\P\in\Pauli_n$, we have $\C\P\C^\dagger\in\Pauli_n$. The Clifford group is generated by the single-qubit gates $\H$ and $\P$ (where $\P$ is the $\pi/2$ $\Z$-rotation) together with the two-qubit gate $\CNOT$.

\paragraph{Clifford and $\T$ gates.}
Let $C$ be an $n$-qubit quantum circuit. It is a standard result that the gate set formed by the $n$-qubit Clifford group $\Clifford_n$ and the $\T$-gate, $\T = \mathrm{diag}(1, e^{i\pi/4})$, is universal for quantum computation: any polynomial-size quantum circuit can be efficiently approximated by a circuit composed of these gates \cite{NC00quantum}. Any approximation error arising from this compilation can be absorbed into the overall bounded-error parameter $\bqp$.

Since Clifford gates can permute qubits, we may assume without loss of generality that every $\T$-gate acts on the $n$-th qubit.
In this work, we thus consider the Clifford+T decomposition of any quantum circuit, meaning that for any  $\BQP$ circuit $C$, there exists $t\in\naturals$ such that $C$ can be decomposed as:
\begin{equation}
    C = \C_{t+1} \circ \T_n \circ \C_{t} \circ \cdots \circ \T_n \circ \C_{1},
\end{equation}
where  $\C_i \in \Clifford_n$ are Clifford gates. We refer to the sequence $\{\C_i\}_{i\leq t+1}$ as the \emph{Clifford structure} of the computation, and we say that $C$ is a $(n,t)$-Clifford+$\T$ computation.
This is depicted in Figure~\ref{fig:prelims:C-T}, for $n=3$ and $t=2$ for a readable example. Throughout the paper, we will stick with this example but the results hold for any $n,t$.
\begin{figure}[h]
    \centering
    \includegraphics[]{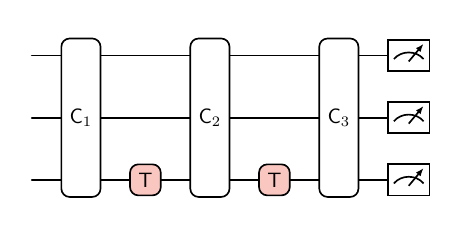}
    \caption{Representation of a $(n=3,t=2)$-Clifford+$\T$ quantum computation.}
    \label{fig:prelims:C-T}
\end{figure}

\paragraph{Measurements.} Without loss of generality, we assume in this paper that measurements are performed in the computational basis. The outcomes are the eigenvalues of the observable $(\id - \Z)/2$, namely $0$ and $1$.

\paragraph{Magic-State Injection \cite{BK04universal}.} In most quantum computing architectures, $\T$-gates are implemented by the use of a Clifford circuit, the injection of an ancillary qubit prepared in a $\T\ket+$ state, a Pauli measurement, and a $\Z$-rotation of angle $\pi/2$ and Pauli $\X$ correction conditioned on the measurement outcome.
Since this specific quantum state allows one to perform the $\T$-gate using Clifford operations only (and a measurement in a Pauli basis, standard in all implementations), it is often referred to in the literature as a \emph{Magic State}. By convention, we write $\ket\Tlabel=\T\ket+$. We also write $\F$ for the circuit consisting of a $\CNOT$ (where the target is the qubit on which the $\T$-gate is intended) followed by a $\SWAP$, as depicted in Figure~\ref{fig:prelims:MSI}.
\begin{figure}[h]
    \centering
    \includegraphics{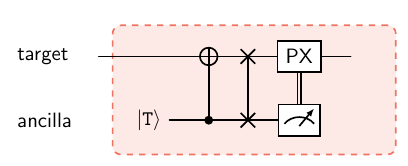}
    \caption{Magic-State Injection, to implement a $\T$-gate.}
    \label{fig:prelims:MSI}
\end{figure}
In the considered model of computation, the $\T$-gate only acts on the $n$-th qubit, and the $i$-th one uses an ancilla labeled as qubit $n+i$, so we formally define $\F_i=\SWAP_{n+i, n}\circ \CNOT_{n+i, n}$. The circuit of Figure~\ref{fig:prelims:C-T}, on $n$ qubits, becomes the circuit of Figure~\ref{fig:prelims:C-T-MSI}.
\begin{figure}[h]
    \centering
    \includegraphics[]{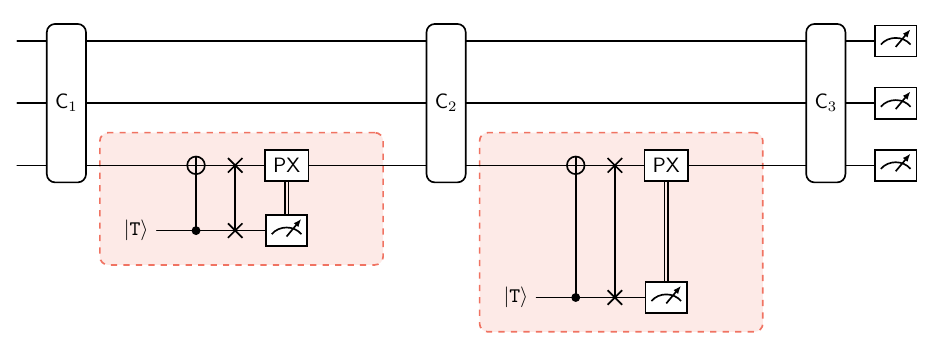}
    \caption{
    Representation of a $(3,2)$-Clifford+MSI quantum computation. It is the same as Figure \ref{fig:prelims:MSI} but the $\T$-gates have been implemented \textit{via} Magic-State Injection.}
    \label{fig:prelims:C-T-MSI}
\end{figure}
\paragraph{Stabilizer States.} We say that a state $\ket\psi$ is stabilized by operator $\P$ if $\ket\psi$ is a $+1$ eigenstate of $\P$, meaning $\P\ket\psi=\ket\psi$.
\emph{Stabilizer states} are stabilized by a maximal Abelian subgroup of $\mathcal{P}_n$. For instance, $\ket0$ is stabilized by $\Z$, while $\ket1$ is stabilized by $-\Z$. There are $6$ single-qubit stabilizer states.
By convention, for any $\P\in\Pauli_1$ we will write its $+1$-eigenstate as $\ket {\Plabel}$.

\paragraph{Classical simulation of Quantum Computations.}
By the Gottesman–Knill theorem~\cite{G98heisenberg}, any Clifford circuit $\C\in \Clifford_n$ acting on a $n$-qubit stabilizer input state, followed by Pauli measurements, can be efficiently simulated classically, including the full output distribution.

Furthermore, a Pauli measurement is deterministic if and only if the measured Pauli operator belongs to the stabilizer of the state. In particular, measuring qubit $i$ in the computational basis yields outcome $0$ with certainty iff the output state is stabilized by $\Z_i$, equivalently iff the input state is stabilized by $\C^\dagger \Z_i \C$.
\subsection{Pauli Encryption and Decryption}
\label{prelim:Clifford conjugation}

\paragraph{Bitstring representation of Pauli string up to a phase.}
For any $\P\in\Pauli_n$, we can associate two $n$-bit strings $\a, \r$ such that, up to a global phase: $\P\in \{\pm 1, \pm i\} \times \X^\a\Z^\r$, where $\X^\a = \bigotimes_{i\leq n}\X^{a_i}$.
 Then, the conjugation of $\P$ by a Clifford $\C$ is a mapping from $\a, \r$ to some other bitstrings $\a', \r'$. We write $(\a', \r')=\C(\a, \r)$ the bitstrings such that $\C (\Enc{\a}{\r})\C^\dagger = \Enc{\a'}{\r'}$.

\paragraph{The Quantum One-Time-Pad ($\QOTP$).}
In quantum cryptography, it is common to consider $\a, \r$ as secret keys and encrypt a quantum state $\rho$ by applying the Pauli $\Enc{\a}{\r}$, then decrypt with $\Dec{\a}{\r}$ (or with other keys if the operations between encryption and decryption have changed the keys). This is known as the Quantum One-Time Pad \cite{C05secure}, since from the point of view of a receiver who does not know $\a, \r$, the received state is a probabilistic mixture of the different $\Enc{\a}{\r}[\rho]$ for all the possible values of $\a, \r$. When these keys are sampled uniformly, this yields the following equality:
\begin{equation}
    \sum_{\a, \r\in\{0,1\}^n} \Enc{\a}{\r}[\rho] = \sum_{\P\in\Pauli_n} \P[\rho] = \id_n/2^n.
\end{equation}
This equality is easy to do by hand for $n=1$, then trivial to generalize to the multi-qubit case.

\paragraph{$\QOTP$ via EPR pairs and measurements.}
A $\QOTP$ followed by a quantum communication channel can be replaced by creating an EPR pair, using a quantum channel to send half of it, and applying a Bell measurement. This equivalence is shown using Shor-Preskill's reduction~\cite{SP00simple} with a delayed measurement; see Figure~\ref{fig:QOTP-EPR}.
Throughout the paper, we refer to this replacement as an EPR reduction of the initial protocol.
\begin{figure}[h]
    \centering
    \includegraphics[width=0.5\linewidth]{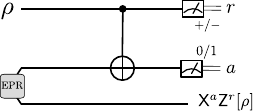}
    \caption{$\QOTP$ via EPR pairs and measurements. The "EPR" box denotes the creation of an EPR pair, the $+/-$ measurement is in the Hadamard basis, the other is in the computational basis. The outcomes $a, r$ are uniformly distributed.}
    \label{fig:QOTP-EPR}
\end{figure}

\paragraph{Pauli Twirl.}
Applying a random Pauli operator on a quantum state is a recurrent procedure in quantum cryptography or learning since it benefits from the following Twirling lemma.
\begin{lemma}[Pauli Twirl
    ~\cite{DCEL09exact}]
    \label{lemma:Pauli Twirl}
    Let $\rho_n$ be a $n$-qubit quantum state, and let $\E, \E'\in\Pauli_n$. Then:
    \begin{equation}
        \sum_{\Q\in\Pauli_n}
        \Q^\dagger
        \E
        \Q
        \;\rho\;
        \Q^\dagger
        \E^{\prime \dagger}
        \Q
        =0 \; \text{if}\; \E\neq \E'.
    \end{equation}
\end{lemma}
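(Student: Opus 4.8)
The plan is to reduce this operator identity to a scalar character sum over the Pauli group, exploiting that any two Pauli operators either commute or anticommute. The starting observation is that for any $\Q, \E \in \Pauli_n$, conjugation satisfies $\Q^\dagger \E \Q = (-1)^{\langle \Q, \E\rangle}\,\E$, where $\langle \Q, \E\rangle \in \{0,1\}$ is the commutation indicator (zero iff $\Q$ and $\E$ commute). Concretely, writing $\E = \Enc{\a}{\r}$ and $\Q = \Enc{\mathbf c}{\mathbf d}$ up to phase, this indicator is the symplectic form $\langle \Q, \E\rangle = \a \cdot \mathbf d + \r \cdot \mathbf c \pmod 2$. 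Two remarks make the setup clean: any phase attached to $\Q$ cancels between $\Q^\dagger$ and $\Q$, and $\E'^\dagger$ carries the same $\X\Z$-pattern as $\E'$, so $\langle \Q, \E'^\dagger\rangle = \langle \Q, \E'\rangle$.

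First I would apply this conjugation rule to both factors appearing inside the sum, turning each summand into $(-1)^{\langle \Q, \E\rangle}\,\E\,\rho\,(-1)^{\langle \Q, \E'\rangle}\,\E'^\dagger$. Since the operator $\E\,\rho\,\E'^\dagger$ no longer depends on $\Q$, it can be pulled out of the sum, leaving the scalar factor $\sum_{\Q \in \Pauli_n}(-1)^{\langle \Q, \E\rangle + \langle \Q, \E'\rangle}$. Using bilinearity of the symplectic form over $\mathbb F_2$, the exponent rewrites as $\langle \Q, \E\rangle + \langle \Q, \E'\rangle = \langle \Q, \E\E'\rangle$, that is, it is the value at $\Q$ of a single $\mathbb F_2$-linear functional $\Q \mapsto \langle \Q, \E\E'\rangle$ determined by the $\X\Z$-pattern of the product $\E\E'$.

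The heart of the argument, and the step to get right, is then a character-sum cancellation. Since $\E \neq \E'$ have distinct $\X\Z$-patterns, the symplectic vector of $\E\E'$ is nonzero, so the functional $\Q \mapsto \langle \Q, \E\E'\rangle$ is not identically zero. A nonzero $\mathbb F_2$-linear functional on the vector space indexing the Pauli group has a kernel of index exactly two, hence takes the values $0$ and $1$ on equally many $\Q$; consequently the $+1$ and $-1$ contributions to the scalar sum cancel exactly and it vanishes, forcing the whole expression to be $0$. I expect the only genuine subtlety to be phase bookkeeping: one must verify that phases on $\Q$ drop out of the conjugation and that phases on $\E, \E'$ merely factor through as overall constants, so that the vanishing truly hinges on the $\X\Z$-patterns differing (equivalently, on $\E\E'$ having nonzero symplectic vector) rather than on any overall phase.
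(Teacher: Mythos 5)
The paper offers no proof of this lemma at all---it is imported directly from the cited reference~\cite{DCEL09exact}---so there is no internal proof to compare against; your argument must stand on its own, and it does. It is the standard character-sum proof: conjugation turns each summand into $(-1)^{\langle\Q,\E\rangle+\langle\Q,\E'\rangle}\,\E\rho\,\E'^{\dagger}$, bilinearity of the symplectic form over $\mathbb{F}_2$ collapses the exponent to the single functional $\Q\mapsto\langle\Q,\E\E'\rangle$, and a nonzero linear functional on $\mathbb{F}_2^{2n}$ has a kernel of index two, so the $\pm1$ contributions cancel exactly. Your phase bookkeeping is also right: phases of $\Q$ cancel under conjugation, $\E'^{\dagger}$ shares the $\X\Z$-pattern of $\E'$, and summing over the four phases of each pattern of $\Q$ only multiplies the (vanishing) sum by four.

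One point you flag as a ``subtlety'' deserves to be stated more forcefully: it exposes an imprecision in the lemma as written, not just in your proof. With the paper's definition $\Pauli_n=\{\pm1,\pm i\}\times\{\id,\X,\Y,\Z\}^{\otimes n}$, the hypothesis $\E\neq\E'$ does \emph{not} imply distinct $\X\Z$-patterns: taking $\E=\X_1$ and $\E'=-\X_1$ gives $\sum_{\Q\in\Pauli_n}\Q^\dagger\E\Q\,\rho\,\Q^\dagger\E'^{\dagger}\Q=-4^{n+1}\,\X_1\rho\,\X_1$, which is nonzero in general. So the lemma, and your proof of it, are correct precisely when $\E\neq\E'$ is read as ``distinct Pauli strings up to phase.'' This is indeed how the paper uses it---in the proof of Lemma~\ref{lemma:mblind:Pauli deviations} the deviation is expanded as $\D=\sum_{\E}\alpha_\E\,\E\otimes\U_\E$ with $\E$ ranging over a basis of $4^{n+t}$ phase-free Pauli strings---but your observation is the correct reading of the statement, not an optional refinement of it.
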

\subsection{Abstract Cryptography Framework}
The Abstract Cryptography (AC) security framework \cite{MR11abstract,M12constructive} used in this work follows the \emph{ideal-world/real-world paradigm}. A protocol is considered secure if it is a good approximation of an \emph{ideal resource} that is secure by design. Its main interest is that protocols that are AC-secure are inherently composable, in the sense that if an AC-secure protocol is used inside a larger protocol, the security of the former does not need to be reproved in the context of the latter. In other words it benefits from sequential and parallel composability.  Ref.~\cite{DFPR14composable} provides an introduction to the topic tailored to verification of quantum computation.

In this framework, the purpose of a secure protocol $\pi$ is, given a number of available resources $\mathcal{R}$, to construct a new resource -- written as $\pi \mathcal{R}$. This new resource can itself be reused in a future protocol. A resource~$\mathcal{R}$ is described as a sequence of CPTP maps with an internal state. It has \emph{input and output interfaces} describing which party may exchange states with it. It works by having each party send it a state (quantum or classical) at one of its input interfaces, applying the specified CPTP map after all input interfaces have been initialized, and then outputting the resulting state at its output interfaces in a specified order. An interface is said to be \emph{filtered} if it is only accessible by a dishonest player. The actions of an honest player $i$ in a given protocol are also represented as a sequence of efficient CPTP maps $\pi_i$ -- called the \emph{converter} of party~$i$ -- acting on their internal and communication registers. We focus here on the two-party Client-Server setting, in which case $\pi = (\pi_C, \pi_S)$. Note that all our protocols are built from quantum channels, due to the Client's requirements to prepare and send single-qubit states.

\paragraph{Indistinguishability of Resources.}

In order to define the security of a protocol, we need to give a pseudo-metric on the space of resources.  We consider for that purpose a special type of converter called a \emph{distinguisher}, whose aim is to discriminate between two resources $\mathcal{R}_1$ and $\mathcal{R}_2$, each having the same number of input and output interfaces.  It chooses the input, interacts with one of the resources according to its own (possibly adaptive) strategy, and guesses which resource it interacted with by outputting a single bit.  The Distinguisher has access to all of the resource's interfaces. Two resources are said to be indistinguishable if no distinguisher can guess correctly with good probability, captured by Definition~\ref{def:ind-res}.

\begin{definition}[Statistical Indistinguishability of Resources]
  \label{def:ind-res}
  Let $\epsilon > 0$, and let $\mathcal{R}_1$ and $\mathcal{R}_2$ be two resources with same input and output interfaces.  The resources are \emph{$\epsilon$-statistically-indistinguishable} if, for all unbounded distinguishers $\mathcal{D}$ the distinguishing probability $p_d$ is bounded by $\epsilon$, meaning if
  \begin{equation}
    \label{eq:dist}
    p_d := \Bigl\lvert\Pr[b = 1 \mid b \leftarrow \mathcal{D}\mathcal{R}_1] - \Pr[b = 1 \mid b \leftarrow \mathcal{D}\mathcal{R}_2]\Bigr\rvert \leq \epsilon.
  \end{equation}
  We then write $\mathcal{R}_1 \underset{\epsilon}{\approx} \mathcal{R}_2$.
\end{definition}

\paragraph{Construction of Resources.}
The construction of a given resource $\mathcal{R}$ by the application of protocol $\pi$ to resource $\mathcal{S}$ can then be expressed as the indistinguishability between resources $\mathcal{R}$ and $\pi \mathcal{S}$.  More specifically, this captures the correctness of the protocol.  The security is captured by the fact that the resources remain indistinguishable if we allow some parties to deviate in the sense that they are no longer forced to use the converters defined in the protocol but can use any other CPTP maps instead.  This is done by removing the converters for those parties in Equation~\ref{eq:dist} while keeping only $\pi_H = \prod_{i \in H} \pi_i$ where $H$ is the set of honest parties.
Since this work only considers an honest-Client / (potentially-)malicious-Server setting, there is only one honest party: the Client, with converter $\pi=\pi_C$, and one potentially malicious party.
The security is formalized as in Definition~\ref{def:ac-sec} in this case, and depicted on Figure~\ref{fig:prelims:AC}.

\begin{definition}[Construction of Resources]\label{def:ac-sec}
    Let $\epsilon > 0$. We say that a two-party protocol $\pi$ $\epsilon$-statistically-constructs resource $\mathcal{R}$ from resource $\mathcal{S}$ if:
    \begin{enumerate}
        \item It is correct: $\pi \mathcal{S} \underset{\epsilon}{\approx} \mathcal{R}\filter$, where $\filter$ prevents malicious behavior (from the potentially cheating party);
        \item It is secure against malicious party $P$: there exists a \emph{simulator} (converter) $\sigma$ such that $\pi\mathcal{S} \underset{\epsilon}{\approx} \mathcal{R} \sigma$.
        \label{item:prelim:AC-security}
    \end{enumerate}
\end{definition}

\begin{figure}[h]
    \centering
    \includegraphics[width=\linewidth]{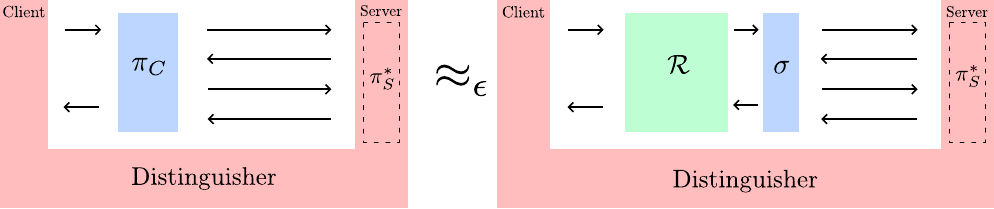}
    \caption{Security in Abstract Cryptography: indistinguishability between the Real World (left picture) and the Ideal World (right picture) up to distance $\epsilon$.
    In each scenario, the Distinguisher (red box) has two interfaces: Client and Server.
    Its inputs are fixed in both scenarios: the Distinguisher chooses the protocol's inputs from the Client interface, and chooses an arbitrary malicious behavior from the Server interface, captured by a CPTP map $\pi_S^*$, that respects the interface of the Client protocol $\pi_C$.
    The protocol is $\epsilon$-secure if the Distinguisher guesses correctly with probability at most $\epsilon$ if it was interacting with the Client Protocol $\pi_C$ (blue box, left picture) in the Real World or the Simulator $\sigma$ (blue box, right picture) plugged in the Resource $\mathcal R$ in the Ideal World.
    }
    \label{fig:prelims:AC}
\end{figure}

\paragraph{Composition of Resources.}
Using the definitions above, we can state the following general composition theorem~\cite{MR11abstract} that guarantees the additive accumulation of distinguishing advantage when composing two statistically secure protocols.
\begin{theorem}[General Composition of Resources~\cite{MR11abstract}]
  Let $\mathcal{R}$, $\mathcal{S}$ and $\mathcal{T}$ be resources, $\alpha, \beta$ and $\mathsf{id}$ be protocols, where protocol $\mathsf{id}$ does not modify the resource it is applied to.  Let $\circ$ and $|$ denote the sequential and parallel composition of protocols and resources, respectively.  Then the following implications hold:
  \begin{itemize}
  \item Sequential composability: if $\alpha \mathcal{R} \approx_{\epsilon_{\alpha}} \mathcal{S}$ and $\beta\mathcal{S} \approx_{\epsilon_{\beta}} \mathcal{T}$, then $\left(\beta \circ\alpha\right) \mathcal{R} \approx_{\epsilon_{\alpha}+\epsilon_{\beta}}\mathcal{T}$.
  \item Parallel composability (context insensitivity): if $\alpha \mathcal{R} \approx_{\epsilon_{\alpha}} \mathcal{S}$, then $\left(\alpha \mid \mathrm{id}\right)\left(\mathcal{R} \mid \mathcal{T}\right) \approx_{\epsilon_{\alpha}} \left(\mathcal{S} \mid \mathcal{T}\right)$.
  \end{itemize}
  Combining these two properties yields the composability of protocols.
\end{theorem}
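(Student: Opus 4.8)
The plan is to reduce both composability statements to three elementary properties of the statistical distinguishing advantage
\[
d(\mathcal{R}_1,\mathcal{R}_2)=\sup_{\mathcal{D}}\bigl\lvert\Pr[\mathcal{D}\mathcal{R}_1=1]-\Pr[\mathcal{D}\mathcal{R}_2=1]\bigr\rvert,
\]
where the supremum ranges over all unbounded distinguishers, so that $\mathcal{R}_1\approx_{\epsilon}\mathcal{R}_2$ in the sense of Definition~\ref{def:ind-res} means exactly $d(\mathcal{R}_1,\mathcal{R}_2)\le\epsilon$. The three properties are: (i) the triangle inequality $d(\mathcal{R}_1,\mathcal{R}_3)\le d(\mathcal{R}_1,\mathcal{R}_2)+d(\mathcal{R}_2,\mathcal{R}_3)$; (ii) non-expansiveness under a converter, $d(\beta\mathcal{R}_1,\beta\mathcal{R}_2)\le d(\mathcal{R}_1,\mathcal{R}_2)$; and (iii) non-expansiveness under a parallel context, $d(\mathcal{R}_1\mid\mathcal{T},\mathcal{R}_2\mid\mathcal{T})\le d(\mathcal{R}_1,\mathcal{R}_2)$. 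Property (i) is immediate: for a fixed $\mathcal{D}$ one adds and subtracts $\Pr[\mathcal{D}\mathcal{R}_2=1]$ inside the absolute value and applies the ordinary triangle inequality, then takes the supremum.

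For (ii) and (iii) I would argue by a reduction on distinguishers. For (ii), any distinguisher $\mathcal{D}$ attacking $\beta\mathcal{R}_1$ versus $\beta\mathcal{R}_2$ is turned into the distinguisher $\mathcal{D}\beta$ attacking $\mathcal{R}_1$ versus $\mathcal{R}_2$, which achieves the same advantage since $\mathcal{D}(\beta\mathcal{R}_i)=(\mathcal{D}\beta)\mathcal{R}_i$ by associativity of converter composition; taking the supremum over $\mathcal{D}$ yields the inequality. For (iii), any distinguisher $\mathcal{D}$ attacking $\mathcal{R}_1\mid\mathcal{T}$ versus $\mathcal{R}_2\mid\mathcal{T}$ is turned into a distinguisher $\mathcal{D}'$ against $\mathcal{R}_1$ versus $\mathcal{R}_2$ that internally simulates the fixed resource $\mathcal{T}$ — legitimate because $\mathcal{T}$ is a known sequence of CPTP maps with internal state and the distinguisher is computationally unbounded — and again attains the same advantage.

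Sequential composability then follows in three lines: from $\alpha\mathcal{R}\approx_{\epsilon_{\alpha}}\mathcal{S}$ and (ii) applied with converter $\beta$ I get $\beta\alpha\mathcal{R}\approx_{\epsilon_{\alpha}}\beta\mathcal{S}$; combining this with the hypothesis $\beta\mathcal{S}\approx_{\epsilon_{\beta}}\mathcal{T}$ via the triangle inequality (i) yields $\beta\alpha\mathcal{R}\approx_{\epsilon_{\alpha}+\epsilon_{\beta}}\mathcal{T}$; and $\beta\alpha=\beta\circ\alpha$ by definition of sequential composition. Parallel composability is even more direct: rewriting $(\alpha\mid\mathrm{id})(\mathcal{R}\mid\mathcal{T})=(\alpha\mathcal{R})\mid\mathcal{T}$ — using that $\mathrm{id}$ leaves $\mathcal{T}$ unchanged and that protocol application distributes over parallel composition — property (iii) applied with $\mathcal{R}_1=\alpha\mathcal{R}$ and $\mathcal{R}_2=\mathcal{S}$ gives $(\alpha\mathcal{R})\mid\mathcal{T}\approx_{\epsilon_{\alpha}}\mathcal{S}\mid\mathcal{T}$, which is the claim.

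The main obstacle is not the arithmetic but justifying the two reduction steps inside the abstract framework: I must ensure that pre-composing a distinguisher with a converter, and absorbing a fixed resource into a distinguisher, both yield objects that are again admissible distinguishers of the same class, and that the operators $\circ$ and $\mid$ are associative and distribute as used above. In the statistical regime this is clean, since unbounded distinguishers can exactly emulate any CPTP converter $\beta$ or fixed resource $\mathcal{T}$ at no resource cost; the only real care needed is the bookkeeping of interfaces — which wire of a composite resource is exposed to the distinguisher and which is internal — which is precisely the structural content packaged by the AC framework of~\cite{MR11abstract}. Finally, iterating the two statements gives composability for arbitrarily many protocols with additive accumulation of the errors.
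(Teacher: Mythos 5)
Your proposal is correct, but there is nothing in the paper to compare it against: the paper does not prove this theorem at all. It is imported verbatim from the cited reference~\cite{MR11abstract} and used as a black box (e.g.\ in the proofs of Theorems~\ref{thm:mblind_DQC} and~\ref{theorem:verification security}). Your reconstruction is the standard argument underlying that reference: treat the optimal distinguishing advantage as a pseudo-metric, establish the triangle inequality together with non-expansiveness under pre-composition with a converter and under a parallel context (both legitimate reductions in the statistical regime, since an unbounded distinguisher can absorb any CPTP converter or fixed resource), and then read off the two composition statements from the framework's algebraic axioms (associativity of $\circ$, the interchange law for $\mid$, and $\mathrm{id}\,\mathcal{T}=\mathcal{T}$). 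One refinement is worth flagging if you want the final sentence of the theorem (``composability of protocols'') in the sense of Definition~\ref{def:ac-sec} rather than bare indistinguishability of resources: there the hypotheses have the form $\pi_\alpha\mathcal{R}\approx\mathcal{S}\sigma_\alpha$ and $\pi_\beta\mathcal{S}\approx\mathcal{T}\sigma_\beta$, with converters attached at \emph{different} interfaces (honest-party protocol converters versus adversarial-interface simulators), and the conclusion needs the composed simulator $\sigma_\beta\sigma_\alpha$. Your properties (ii) and (iii) do cover this, since non-expansiveness holds for a converter plugged into any interface, but the interface bookkeeping you mention at the end is precisely where that has to be said explicitly.
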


\section{Blinding Magic in Delegated Quantum Computations}

\label{section:blindness}

In this section, we develop the constructions (Resources, Protocols, Simulators) necessary to blind the potential magic in a quantum computation expressed in the Clifford+MSI model. As already stated in the introduction, this is a fundamental requirement to derive classically simulable instances that could be delegated indistinguishably to the Server, and hence serve as tests that the Client can use in a verification protocol. The section is organized as follows:
\begin{itemize}
    \item In~\ref{subsection:blind:gadget}, we show how to blind the magic on a single Clifford+MSI layer, introducing the Hidden-Magic Resource~\ref{resource:blind-gate} and the Blind State Injection Protocol~\ref{protocol:blind-gate} that implements it, by applying a Pauli encryption on the input state and the injected ancilla, and compensating it later.
    \item After repeating the above protocol for each layer in the computation, what is missing to finish the computation is to perform a last Clifford layer and measurements on a state that is Pauli-encrypted. Thus, in~\ref{subsection:blind:meas}, we present the Blind Measurements Resource~\ref{resource:blind-meas} and Protocol~\ref{protocol:blind-meas} that implements it.
    \item In~\ref{subsection:blind:MB-DQC}, we compose the above protocols to implement the Magic-Blind Delegated Quantum Computing Resource~\ref{resource:mblind_DQC}. Simply composing the protocols, yielding Protocol~\ref{protocol:mblind_DQC-bf}, has a practical caveat: it implies back-and-forth communication of encrypted qubits between the Client and the Server. We show how we can safely remove this requirement and obtain the MB-DQC Protocol~\ref{protocol:mblind_DQC}, that we use throughout the rest of the paper.
    \item Lastly, in~\ref{subsection:blind:Pauli dev}, we show the powerful result that any malicious behavior from the Server—meaning applying any CPTP map instead of performing the protocol honestly—can be reduced to a convex combination of Pauli operators applied after an honest execution of the protocol, before the measurements. This arrives as a consequence of the Pauli encryption of the qubits, allowing us to perform a Pauli Twirl of any malicious behavior.
\end{itemize}
\subsection{Blinding Magic on a single layer}
\label{subsection:blind:gadget}

In the Clifford+MSI model, the only non-Clifford ingredient is the injected single-qubit state together with a simple measurement-dependent correction. This makes state injection the natural ``locus'' where a Client can hide potential magic while delegating only Clifford operations.

We therefore isolate the following elementary delegation task. The Client holds an $n$-qubit input state\footnote{Note that this differs from the typical classical inputs considered in $\BQP$ computations. This is not a problem here because in this section we do not describe a $\BQP$ computation, but quantum evolution on states of $n$ qubits.} $\rho$, chooses a public Clifford layer $\C\in\Clifford_n$, and chooses a state to inject via a label $\A$.
The goal is to implement the corresponding transformation on $\rho$ while hiding from the Server both the data $\rho$ and the choice of injection type $\A$.
This is captured by the Hidden-Magic Gate Resource~\ref{resource:blind-gate}, which we view as the circuit-model analogue of performing the Clifford layer followed by a ``blind gadget'' implementing one compiled (potentially) non-Clifford layer. The Resource supports two regimes:
\begin{enumerate}
    \item \textbf{Magic regime.} When $\Alabel=\Tlabel$, the resource performs a Magic-State Injection.
    \item \textbf{Stabilizer regime.} When $\Alabel\neq \Tlabel$ and $\rho$ is a stabilizer input, the resource returns the result of injecting state $\ket\Alabel$, including the (classically simulable) measurement outcome.
\end{enumerate}

In both regimes, the Server learns only the public information (in particular $\C$ and the system size), but gains no information about $\rho$ nor about the Client's choice $\A$. It only learns the set of possible choices $\classA=\{\Tlabel, \pm\Xlabel,  \pm\Ylabel,  \pm\Zlabel\}$.
\begin{resourcee}[h]
    \caption{Hidden-Magic Gate}
    \label{resource:blind-gate}
    \begin{algorithmic}[0]
        \State \textbf{Public Information:} $\C\in\Clifford_n$, $\classA$.
        \State \textbf{Client's Input:} $\C\in\Clifford_n, \Alabel\in\classA$, and a classical description of a quantum state $\rho$ on $n$ qubits.
        \State \textbf{Server's Input:} $\c\in\bin$, set to $0$ if honest. If $\c=1$, the Server will have the opportunity to provide more inputs.

        \Procedure{Computation by the Resource}{}
        \If{$\c=0$}
            \If{$\Alabel=\Tlabel$}
            \State Output $\T_n\circ \C[\rho]$ at the Client interface.
            \Else
            \State Resource samples $b$ with probability $\bra b
            \Tr_{1, ..., n}[\F\circ \C[\rho\otimes \rho_\Alabel]]
            \ket b$
            \State Output
            $
            (\id_n\otimes \ketbra b)
            \circ
            \F\circ \C[\rho\otimes\rho_\Alabel]
            $ at the Client interface.
            \EndIf
        \Else
            \State The Server holds a system whose reduced state in register $S$ is $\rho_S$: it sends $\rho_S$, alongside the instructions to perform a CPTP map $\mathrm D$ on $\rho$ and $\rho_S$. The Resource returns $\Tr_S[\mathrm D[\rho\otimes \rho_\Alabel\otimes \rho_S]]$ to the Client.
        \EndIf
        \EndProcedure
    \end{algorithmic}
\end{resourcee}
To implement this ideal behavior, we propose the Blind State Injection Protocol~\ref{protocol:blind-gate}, in which the goal of the Client is to have the Server perform a Clifford layer and an unknown state injection.
The way blindness works is indeed very similar to how blindness appears in UBQC \cite{BFK09universal}: states are encrypted and sent to the Server alongside the instruction to perform a rotation with a specific angle that compensates for the encryption.

To do so, the Client first encrypts all the qubits using $\QOTP$ yielding secret keys $\a, \r$, and sends them to the Server. Then she
 sends instructions to perform the Clifford layer and inject the state ($\CNOT$,
$\SWAP$, and measurement), in plain. Finally she needs to have the Server perform the correction (conditioned $\pi/2$ rotation) if $\A=\Tlabel$. This is equivalent to sending to the Server the value $\phi$ computed from the measurement outcome $b\in\bin$ and the injection type $\A\in\classA$ as
\begin{equation}
    \label{eq:blind:phi}
    \phi(\A, b) = \begin{cases}
        b\pi/2 &\text{if}\; \Alabel=\Tlabel\\
        0&\text{else}\;
    \end{cases}.
\end{equation}

Sending $\phi$ in plain would leak information about the choice of $\A$, so to avoid that, the angle needs to be \emph{blinded}: padded with a uniformly sampled angle $\theta\gets \$\Theta$, yielding a blinded angle $\delta$. The ancilla needs to be pre-rotated with the same $\theta$ so the encryption is canceled and has no impact on the final state (it only ensures the Server is blind).
The blinded angle is thus $\phi+\theta$ with a global sign that comes from the Pauli $\X$ encryption since $\Z(\alpha)\circ\X^a = \X^a\circ\Z((-1)^a\alpha)$ for any angle $\alpha$ and bit $a\in\bin$.
Hence, the logic to compute the angle $\delta$ from the plain angle $\phi\in\Theta$, secret $\theta\in\Theta$, and secret $\X$ encryption bit $a\in\bin$ is
\begin{equation}
    \label{eq:blinded angle}
    \delta(\phi, a, \theta)=(-1)^a (\phi+\theta).
\end{equation}
In practice, since the $\Z^\dagger(\delta)$ rotation is to be performed on the $n$-th qubit in the Protocol, Equation~\ref{eq:blinded angle} is used with the updated $\X$ encryption key on the $n$-th qubit, $a_n$.
Finally, Theorem~\ref{thm:blind-gate} states that the Protocol perfectly implements the ideal behavior, as established by its correctness and security proofs.

\begin{protocoll}
    \caption{Blind State Injection}
    \label{protocol:blind-gate}
    \begin{algorithmic}[1]
        \State \textbf{Public Information:} $\C\in\Clifford_n$, $\classA$.
        \State \textbf{Client's Input:} $\C\in\Clifford_n, \A\in\classA$, and a quantum state $\rho$ on $n$ qubits.

        \Procedure{Client - Encrypt and Send}{}
        \State Sample $\a, \r \gets\$\{0,1\}^n$, send $\Enc{\a}{\r}[\rho]$ to the Server.
        \State Sample $x, z \gets\$\{0,1\}$, $\theta\gets\$\Theta$, send $\Enc{x}{z}\Z(\theta)\ket{\Alabel}$ to Server register $n+1$.
        \State Set $\a \gets (\a ||x), \r\gets (\r||z)$. \Comment{Append existing keys}
        \EndProcedure

        \Procedure{Server - Clifford and State Injection}{}
        \State Apply $\C$ on the first $n$ qubits, and $\F=\SWAP_{n+1, n}\circ \CNOT_{n+1, n}$ on the total system.
        \State Measure qubit $n+1$ in the computational basis, send outcome $b$ to the Client.
        \EndProcedure

        \Procedure{Client - Computing and Blinding rotation angle}{}
        \State Compute updated keys $\a,\r \gets \F\circ\C(\a,\r)$.
        \State Decode the measurement outcome: store $b\gets b\oplus a_{n+1}$.
        \State Compute appropriate rotation angle $\phi(\Alabel, b)$ according to Equation~\ref{eq:blind:phi}.
        \State Compute blind angle $\delta(\phi, a_n, \theta)$, and send it to the Server.
        \label{protocolstep:blind angle}
        \EndProcedure

        \Procedure{Server - Blind Rotation}{}
        \State Apply rotation $\Z^\dagger(\delta)$ on the $n$-th qubit.
        \EndProcedure

        \Procedure{Client - Key update, Receive and Decrypt}{}
        \State Receive $n$ qubits from Server.
        \If{$\Alabel=\Tlabel$}
            \State Update $a_n\gets a_n\oplus b$ \Comment{Pauli correction required for MSI}
        \Else
            \State Store $b$ as classical output.
        \EndIf
        \State Truncate $\a, \r$ to the first $n$ bits, and apply $\Dec{\a}{\r}$ on the received qubits.
        \EndProcedure

    \end{algorithmic}
\end{protocoll}

\begin{theorem}
\label{thm:blind-gate}
The Blind State Injection Protocol~\ref{protocol:blind-gate} perfectly constructs the Hidden-Magic Gate Resource~\ref{resource:blind-gate}.
\end{theorem}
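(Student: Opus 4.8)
Following Definition~\ref{def:ac-sec}, proving that Protocol~\ref{protocol:blind-gate} \emph{perfectly} constructs Resource~\ref{resource:blind-gate} means exhibiting a simulator $\sigma$ for which both $\pi\mathcal R\underset{0}{\approx}\mathcal S\filter$ (correctness) and $\pi\mathcal R\underset{0}{\approx}\mathcal S\sigma$ (security against a malicious Server), where $\pi$ is the protocol, $\mathcal S$ the Hidden-Magic Gate resource and $\mathcal R$ the underlying quantum and classical channels. The plan is to work throughout in the EPR-reduced picture of the preliminaries: the initial encryption $\Enc{\tilde\a}{\tilde\r}$ and the ancilla encryption $\Enc{a_{n+1}}{r_{n+1}}$ are replaced by sending Server-halves of EPR pairs and deferring to the very end the Bell measurement that samples the conjugate Pauli keys. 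Deferring the key-sampling is the single device that makes both directions exact.

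For correctness I would trace one honest run ($c=0$). The only nonroutine point is commuting the random Pauli frame through $\C$ and the injection circuit $\F$: the conjugation rule sends the keys to $(\a',\r')=\F\circ\C(\a,\r)$, so the Server's raw computational-basis outcome on register $n+1$ is the true injection outcome flipped by the $\X$-key $a'_{n+1}$, which justifies the decoding $b'=b\oplus a'_{n+1}$. I would then verify the angle identity: the ancilla carries a spurious rotation $\Z(\theta)$, and $\delta=(-1)^{a'_n}(\phi_{b'}+\theta)$ is exactly the value for which $\Z^\dagger(\delta)$, commuted past the surviving $\X^{a'_n}$ of the frame (using $\X\Z(\phi)\X=\Z(-\phi)$ up to phase), cancels $\Z(\theta)$ and leaves precisely the correction $\Z^\dagger(\phi_{b'})$. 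For $\A=\T$ this is the phase part of the magic-state-injection correction, its $\X$ part being absorbed into the frame through $a'_n\gets a'_n\oplus b'$; after the final decryption $\Dec{\tilde\a'}{\tilde\r'}$ the Client therefore holds $\T_n\circ\C[\rho]$. For $\A\neq\T$ one has $\phi_{b'}=0$, the decoded $b'$ is a genuine Pauli outcome stored in register $n+i$, and the residual state matches $\sum_{b}(\id_n\otimes\ketbra{b})\circ\F\circ\C[\rho\otimes\rho_\A]$. Both regimes of Resource~\ref{resource:blind-gate} are thus reproduced.

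For security I would let $\sigma$ play the Client's outgoing messages using only public data: in place of $\Enc{\tilde\a}{\tilde\r}[\rho]$ and the encrypted ancilla it forwards Server-halves of fresh EPR pairs, and in place of $\delta$ it sends a uniform $\delta\gets\$\Theta$. Since averaging $\Enc{\a}{\r}[\rho]$ over uniform keys gives the maximally mixed state $\id_n/2^n$ (the QOTP identity), both transmitted states are information-free about $\rho$ and $\A$; and uniformity of $\theta$ over $\Theta$, independent of everything the Server sees, makes $\delta=(-1)^{a'_n}(\phi_{b'}+\theta)$ uniform and independent of the reported $b$, so $\sigma$'s transcript has exactly the real distribution. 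It then remains to route the Server's behaviour into the ideal resource: $\sigma$ folds the Server's (possibly adaptive) operations together with the honest encryption and decryption maps into a single CPTP map $\mathrm D$ on the data register $\rho$ and the Server's private register $\rho_S$, and submits $(\rho_S,\mathrm D)$ on the filtered $c=1$ interface, so that $\mathcal S\sigma$ returns $\Tr_S[\mathrm D[\rho\otimes\rho_S]]$.

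The step I expect to be hardest is precisely this routing. In the real protocol $\delta$ is correlated with the keys the Client will later use to decrypt, whereas $\sigma$ must emit $\delta$ before---indeed without ever choosing---those keys. The EPR reduction resolves the tension: deferring the Bell measurement that fixes $\tilde\a',\tilde\r',a'_{n+1},a'_n$ until after the Server has returned its qubits makes the Server's deviation manifestly a fixed channel independent of the keys, and composing it with the honest encryption and decryption yields a well-defined $\mathrm D$ on $\rho\otimes\rho_S$ whose traced output coincides with the real one. Because the QOTP averaging is exact rather than approximate, this identification of channels carries no residual error and delivers the claimed $\epsilon=0$ construction; checking that equality of channels, as opposed to the routine stabilizer bookkeeping above, is the technical heart of the argument.
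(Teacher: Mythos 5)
Your proposal is correct and follows essentially the same route as the paper: correctness by tracking the Pauli frame through $\C$ and $\F$ and verifying the angle identity $\Z^\dagger(\delta)\circ\X^{a'_n}\circ\Z(\theta)\mapsto\X^{a'_n}\circ\Z^\dagger(\phi_{b'})$, and security via the EPR reduction in which the key-fixing measurements are deferred past the Server's interaction, the simulator emits only EPR halves and a uniform $\delta\gets\$\Theta$, and the honest encryption/decryption together with the Server's deviation are folded into the CPTP map submitted at the filtered $c=1$ interface. You even identify the same technical crux the paper resolves, namely inverting $\delta=(-1)^{a'_n}(\phi_{b'}+\theta)$ into $\theta=(-1)^{a'_n}\delta-\phi_{b'}$ so that $\delta$ can be sampled uniformly before the keys exist.
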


\begin{proof}[Proof of correctness (sketch, formal in~\ref{appendix: correctness of blind-gate})]
    The Protocol consists of applying the Clifford circuit, performing a measurement, and the appropriate correction (rotation, and Pauli in the magic regime) on a state that is initially encrypted by a $\QOTP$ by the Client, which is a Pauli encryption. Since all the operations that are performed are Clifford, the evolution of the pad can be tracked efficiently by the Client so the state and measurement outcome can be correctly un-padded. Lastly, the $\Z^\dagger(\delta)$ rotation is performed on the $n$-th qubit; by Equation~\ref{eq:blinded angle}, this exactly implements the required correction while compensating for both the Pauli $\X$ encryption and the ancilla pre-rotation. The presence of encryption and the fact that the ancilla is pre-rotated are therefore fully accounted for in Equation~\ref{eq:blinded angle}.
\end{proof}

\begin{proof}[Security proof]
  For the security proof, we are interested in proving the existence of a Simulator such that any malicious behavior by the Server in the Real World in Protocol~\ref{protocol:blind-gate} can be reproduced by the Simulator interacting with Resource~\ref{resource:blind-gate} in the Ideal World. Formally, both scenarios must be indistinguishable from the point of view of an unbounded Distinguisher that controls both the Client inputs and the Server deviation.
    This means that the \textit{transcript}---the total quantum state perceived by the distinguisher in one scenario for a chosen input and choice of malicious behavior---generated in both scenarios must be statistically indistinguishable.

  Figure~\ref{subfigure:secproof:initial} captures the interactions in the Real World. Playing both the Client and the Server roles, the Distinguisher chooses the input (quantum state and state label) and a cheating behavior that respects the interfaces of the Protocol, meaning that it sends a bit $b$ to the Client.
      \begin{figure}[h]
            \centering
            \includegraphics{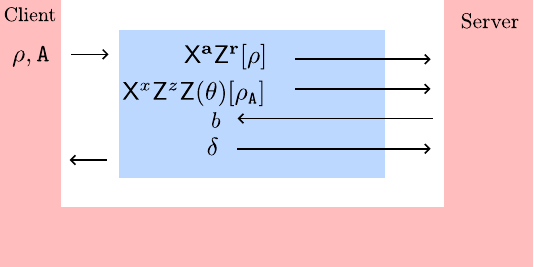}
            \caption{Representation of the interactions between a Client and a potentially malicious Server in Protocol~\ref{protocol:blind-gate}, using the color code of Figure~\ref{fig:prelims:AC}.
            The Client protocol blinds the input state, prepares and blinds the ancilla, and sends the resulting states to the Server. When receiving a bit $b$ from the Server, the blind angle $\delta$ is computed and sent to the Server. Finally, the Client receives $n$ qubits from the Server.
            }
            \label{subfigure:secproof:initial}
    \end{figure}

  To prove the existence of a Simulator that can be plugged in Resource~\ref{resource:blind-gate} to reproduce this behavior, we do the following.
   We reproduce the steps of \cite{DFPR14composable}, and present Reduction~\ref{protocol:blind-gate-reduction}: it is an EPR reduction of Protocol~\ref{protocol:blind-gate} (using the terminology introduced in the preliminaries), in which the Client uses the fact that measurements are delayed to choose and send a uniformly distributed rotation angle $\delta$ and compute $\theta$ simply by inverting Equation~\ref{eq:blinded angle}:
  \begin{equation}
    \label{eq:blinded angle inverted}
    \theta = (-1)^{a}\delta - \phi.
  \end{equation}
  Then, we show that this Reduction can instead be performed by a Simulator plugged in the Resource.
  On Figure~\ref{subfigure:secproof:epr}, we represent the first part of the reduction: replacing $\QOTP$ by EPR-encryption in the Client part of the Protocol. The transcript (\textit{i.e} the overall quantum state perceived by the Distinguisher: the $n$ qubit state, the ancilla qubit, and the rotation) generated in this version is clearly identical to the one generated by the initial protocol, by correctness of this EPR-realization of the $\QOTP$ (see \cite{SP00simple} and preliminaries). At this stage, when measurement outcomes are $\a, \r$, $x, z$, according to the notations of Figure~\ref{subfigure:secproof:epr}, the Server holds the state $\Enc{\a}{\r}[\rho]\otimes \Enc{x}{z}\circ\Z(\theta)[\rho_\Alabel]$, identical to what it holds in Protocol~\ref{protocol:blind-gate}
  for the same key bits obtained by uniform sampling.
      \begin{figure}[h]
            \centering
            \includegraphics{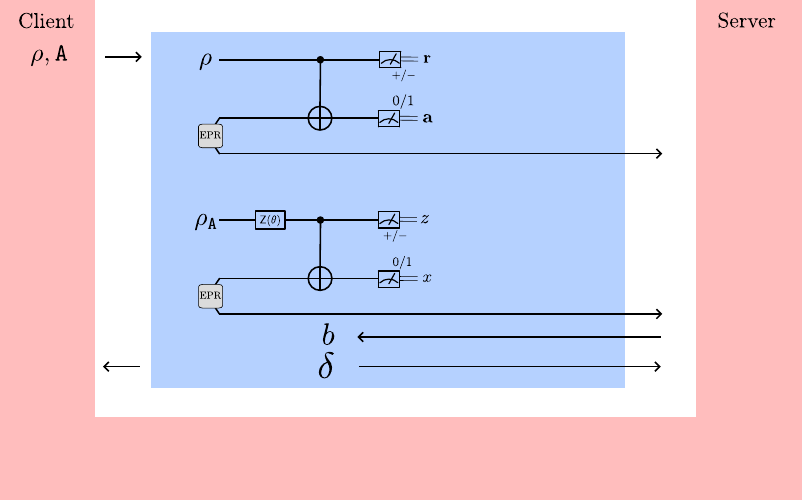}
            \caption{Same as Figure~\ref{subfigure:secproof:initial}, but the Client part of the Protocol used EPR-encryption to perform the $\QOTP$ on the $n$-qubit state $\rho$, and on the single-qubit, ancilla pre-rotated with $\Z(\theta)$.
            }
            \label{subfigure:secproof:epr}
    \end{figure}

    Then, to reach Reduction~\ref{protocol:blind-gate-reduction}, we use a delayed-measurement trick to show that instead of choosing $\theta$ at random and computing $\delta$ from it using Equation~\ref{eq:blinded angle}, the Client can choose $\delta$ at random and compute $\theta$ from it using Equation~\ref{eq:blinded angle inverted}. Indeed, on Figure~\ref{subfigure:secproof:epr}, the $\Z(\theta)$ commutes with the control: it can thus be delayed and performed right before the Hadamard-basis measurement that yields outcome $z$, which is then stored as $r_{n+1}$.
    Now we make the following two observations, captured in Lemma~\ref{lemma:blind-gate:key dependency}.
    The first one is that $a'_n$ (the bit used in the rotation angle logic) does not depend on $r_{n+1}$. The other one is that $a'_{n+1}$ (the bit used to decode the measurement outcome) neither.
    Therefore, since $r_{n+1}$ is not needed to compute $a'_n$ nor $a'_{n+1}$, this measurement can be delayed to after the interaction with the Server.

    \begin{lemma}
        \label{lemma:blind-gate:key dependency}
        Let there be a $n+1$-qubit system, $\C\in\Clifford_n$ and $\F=\SWAP_{n+1,n}\circ\CNOT_{n+1,n}$. Let $\a, \r\in\bin^{n+1}$ Pauli encryption keys and $\a', \r'=\F\circ \C(\a, \r)$. Then, neither $a'_n$ nor $a'_{n+1}$ depend on $r_{n+1}$.
    \end{lemma}
    \begin{proof}
        This can be proven just by analyzing how the $\Z$-encryption on the $n+1$-th qubit commutes through $\F\circ \C$. Since $\C$ acts on the first $n$ qubits, it commutes trivially. Then, through $\F$, it commutes through the control of the $\CNOT$ and ends as a $\Z$ encryption on the $n$-th qubit after the $\SWAP$. Hence, we conclude that $r_{n+1}$ does not influence $a'_{n}$ nor $a'_{n+1}$.
    \end{proof}

    Thus, in Reduction~\ref{protocol:blind-gate-reduction}, the Client first chooses $\delta$ uniformly at random, and computes $\theta$ from Equation~\ref{eq:blinded angle inverted}, without changing the distribution of the overall state.
    Finally, the Client performs the delayed $\Z(\theta)$ rotation and the Hadamard basis measurement to obtain $r_{n+1}$ and re-compute the updated keys to decrypt the qubits identically as in the original version.
    This is depicted on Figure~\ref{secproof:delayedrot}: for the same choice of inputs and malicious behavior, this interaction produces the same transcript as the initial Protocol in Figure~\ref{subfigure:secproof:initial}.
            \begin{figure}[h]
            \centering
            \includegraphics{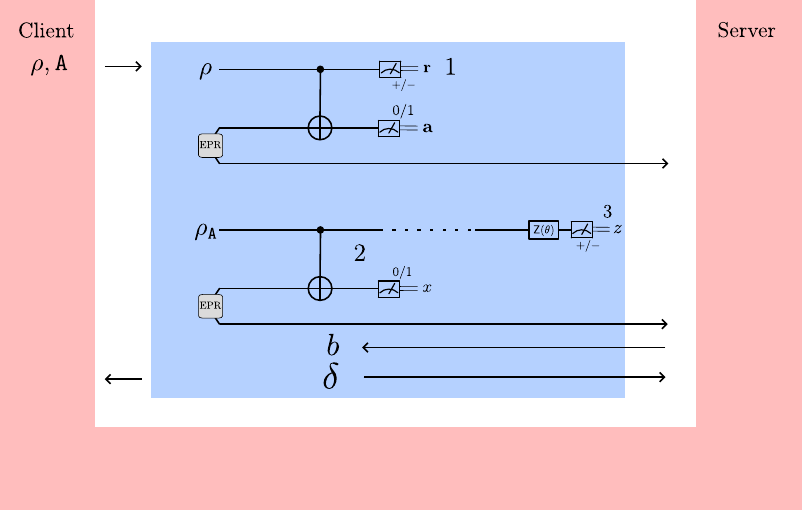}
            \caption{Illustration of the interactions in Reduction~\ref{protocol:blind-gate-reduction}, using the same color code as in Figure~\ref{subfigure:secproof:initial}. Here, the difference with Figure~\ref{subfigure:secproof:epr} is that $\delta$ is sampled uniformly and sent to the Server before the $n+1$-th encryption bit $z$ (equivalently $r_{n+1}$) is obtained, and $\theta$ is computed from $\delta$ according to Equation~\ref{eq:blinded angle inverted}.}
            \label{secproof:delayedrot}
    \end{figure}

    Finally, we introduce Simulator~\ref{simulator:blind-gate}: it deals with the Server interactions the same way as the Client in Reduction~\ref{protocol:blind-gate-reduction}, and lets the input-dependent operations be performed by Resource~\ref{resource:blind-gate}, according to its description (in the malicious scenario $c=1$, the Resource receives a quantum state and a CPTP map).
    The interaction generates the same transcript, since the same operations are being performed, just by different entities. In Figure~\ref{secproof:simulator}, the same operations as in Figure~\ref{secproof:delayedrot} are carried out, but some by the Simulator---preparing EPR pairs and sending random $\delta$---while the rest are carried out by a CPTP map that the Simulator asks the Resource to perform, according to the description of Resource~\ref{resource:blind-gate}.
    Therefore, for a given Client input and Server cheating behavior, the same transformation as in the initial protocol is being implemented, but by the Simulator plugged into the Ideal Resource, and hence the two worlds are indistinguishable from the point of view of the Distinguisher, which concludes the proof.

    \begin{figure}[h]
            \centering
            \includegraphics{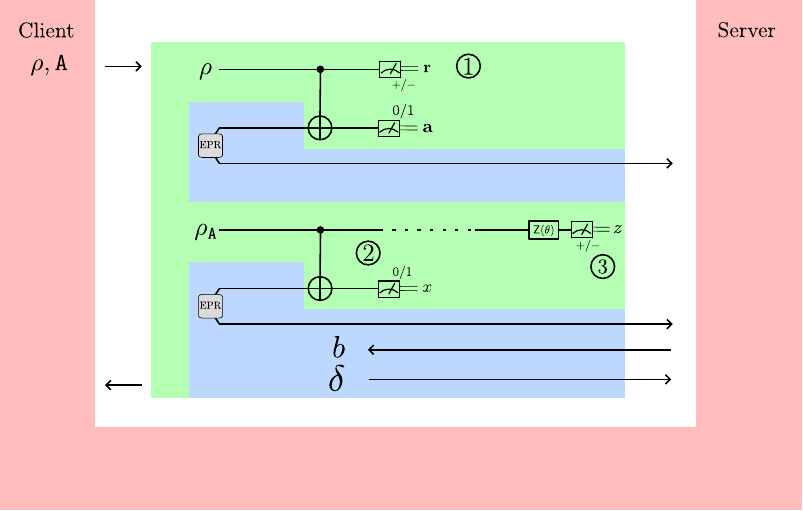}
            \caption{Same interaction as Figure~\ref{secproof:delayedrot}, but now the blue box represents \textbf{Simulator}~\ref{simulator:blind-gate}, and the green box represents the CPTP map performed by the Ideal Resource: step $1$ is the encryption of $\rho$, step $2$ is the EPR-encryption of $\rho_\A$ part 1, and step $3$ is part 2. Compared to Figure~\ref{subfigure:secproof:initial}, the execution of Protocol~\ref{protocol:blind-gate} has been replaced by Simulator~\ref{simulator:blind-gate} plugged into Resource~\ref{resource:blind-gate} with the required instructions to reproduce the deviation chosen by the Server interface of the Distinguisher. Since the transcript generated by the interactions in Figure~\ref{subfigure:secproof:initial} and the current figure are indistinguishable (given a choice of inputs/deviation from the Distinguisher), the two scenarios are indistinguishable and therefore the composable security is proven.}
            \label{secproof:simulator}
    \end{figure}

\begin{reduction}
    \caption{Blind State Injection, EPR-Client, malicious Server (reduction)}
    \label{protocol:blind-gate-reduction}
    \begin{algorithmic}[1]
        \State $\triangleright$ \textsc{Client - } Prepare $n$ EPR-pairs and send half of each pair to the Server.
        \Procedure{Client - EPR-encryption of $\rho$}{}
        \For{$1\leq i\leq n$}
            \State Apply $\CNOT$ on $i$-th qubit of $\rho$ (as control) and its $i$-th EPR-half (as target).
            \State Measure the $i$-th EPR half in computational basis, label the outcome $a_i$.
            \State Measure $i$-th qubit of $\rho$ in the Hadamard basis, label the outcome $r_i$.
        \EndFor
        \State Write the resulting $n$-bit strings $\a, \r$.
        \EndProcedure

        \State$\triangleright$ \textsc{Client - } Prepare an EPR-pair and send half to the Server.
        \Procedure{Client - EPR-encryption of ancilla, part 1}{}
        \State Prepare $\rho_\A=\ketbra{\A}$.
        \State Apply $\CNOT$ on $\rho_\A$ (as control) and the EPR-half (as target).
        \State Measure the EPR-half in the computational basis, and label the outcome $x$.
        \State Set $\a' \gets (\a ||x), \r'\gets (\r||0)$, as $z$ is obtained only in part 2.
        \EndProcedure

        \State$\triangleright$ \textsc{Client - } receives bit $b$ from Malicious Server.

        \Procedure{Client - Computing and Blinding rotation angle}{}
        \State Compute $\a', \r' \gets \F\circ \C(\a', \r')$. Note that $z$ does not affect $a'_{n+1}$ nor $a'_n$, which are the only ones we need for this step.
        \State Decode the measurement outcome: receive $b$, store $b\gets b\oplus a'_{n+1}$.
        \State Compute appropriate measurement angle $\phi(\A, b)$.
        \EndProcedure
        \State $\triangleright$\textsc{Client - } Sample random rotation angle uniformly. $\delta\gets\$\Theta$ and send it to the Server.

        \Procedure{Malicious Server - Blind Rotation}{}
        \EndProcedure

        \Procedure{Client - EPR-encryption of ancilla, part 2}{}
        \State Compute appropriate pre-rotation angle $\theta(\phi, a'_n, \delta)=(-1)^{a'_n}\delta - \phi$.
        \State Apply $\Z(\theta)$ on the ancillary qubit, measure in the Hadamard basis, label the outcome $z$.
        \State Set $\a \gets (\a ||x), \r\gets (\r||z)$ and re-compute key-update: $\a, \r = \F\circ \C(\a, \r)$.
        \EndProcedure

        \Procedure{Client - Key update, Receive and Decrypt}{id. Protocol~\ref{protocol:blind-gate}}
        \EndProcedure

    \end{algorithmic}
\end{reduction}

\begin{simulatorr}
    \caption{Blind State Injection}
    \label{simulator:blind-gate}
    \begin{algorithmic}[1]
        \Procedure{Simulator - Emulates Client interaction with Server}{}
        \State Prepare $n+1$ EPR Pairs, and send each half to the Server.
        \State Receive bit $b$ from the Server.
        \State Sample $\delta \gets \$\Theta$ and send it to the Server.
        \EndProcedure

        \Procedure{Simulator - Make Resource reproduce the Server behavior}{}
        \State Send the remaining $n+1$ EPR halves, $b$, and $\delta$ to Ideal Resource
        \State Send the below instructions as a CPTP map to the Resource to perform.
        \EndProcedure
        \Procedure{CPTP map sent to the Resource}{$\A, \rho$ from Client ; $n+1$ EPR-halves, $\delta, b$ from Simulator}
        \State EPR-encryption of $\rho$
        \State EPR-encryption of $\rho_\A$, part 1
        \State Compute rotation angle
        \State EPR-encryption of ancilla, part 2
        \State Key update, Receive and Decrypt
        \EndProcedure

    \end{algorithmic}
\end{simulatorr}
 \end{proof}
\subsection{Blind measurements}
\label{subsection:blind:meas}
In this section, we introduce the ideal functionality allowing a Client to delegate the execution of a Clifford circuit on $n$ qubits followed by a measurement of the $n$ qubits in the computational basis, with the security property that the Server performs these steps blindly on the data sent by the Client. Indeed, Resource~\ref{resource:blind-meas} captures this security property: the Server learns the intended Clifford circuit $\C$ and the size of the system, but not $\rho$ the $n$-qubit input of the Client, while the Client receives $\z$ sampled by the $\Z$-basis measurement distribution of the state $\C[\rho]$.
\begin{resourcee}
    \caption{Blind measurements}
    \label{resource:blind-meas}
    \begin{algorithmic}[1]
        \State \textbf{Public Information:} $n, \C$ and basis of measurements (computational basis).
        \State \textbf{Client's Input:} $n$-qubit quantum state $\rho$, Clifford circuit
        \State \textbf{Server's Input:} $\c\in\bin$, set to $0$ if honest. If $\c=1$, the Server will have the opportunity to provide more inputs.
        \Procedure{Computation by the Resource}{}
        \If{$\c=0$}
            \State Sample $\z$ with probability $\bra \z \C[\rho]\ket \z$
            \State Output $ \ketbra \z \circ \C[\rho]$
        \Else
            \State The Server holds a system whose reduced state in register $S$ is $\rho_S$: it sends $\rho_S$, alongside the instructions to perform a CPTP map $\mathrm D$ on $\rho$ and $\rho_S$. The Resource returns $\Tr_S[\mathrm D[\rho\otimes \rho_S]]$ to the Client.
        \EndIf
        \EndProcedure
    \end{algorithmic}
\end{resourcee}
We introduce Protocol~\ref{protocol:blind-meas} to implement this resource, which makes the simple use of a Quantum One-Time Pad applied on $\rho$ before being sent to the Server, and a decoding of the measurement outcome by a Pauli operator resulting of conjugating the initial Pauli (used for encryption) by the Clifford circuit performed by the Server.

\begin{protocoll}
    \caption{Blind measurements}
    \label{protocol:blind-meas}
    \begin{algorithmic}[1]
        \State \textbf{Public Information:} $n, \C$ and basis of measurements (computational basis).
        \State \textbf{Client's Input:} $n$-qubit quantum state $\rho$, Clifford circuit $\C\in\Clifford_n$.

        \Procedure{Client - Encrypt and Send}{}
        \State Sample $\a, \r\gets \bin^{n}$ and send $\Enc{\a}{\r}[\rho]$ to Server
        \EndProcedure

        \Procedure{Server - Clifford circuit and measurements}{}
        \State Perform $\C$ on the received qubits.
        \State Measure all the qubits in the computational basis, getting the $n$-bit outcomes $\z$
        \EndProcedure

        \Procedure{Client - Decode measurement outcomes}{}
        \State Client updates keys as $\a, \r\gets \C(\a, \r)$
        \State Store $\z\oplus \a$ as output of the protocol.
        \EndProcedure
    \end{algorithmic}
\end{protocoll}

\begin{theorem}
    \label{thm:blind-meas}
    The Blind Measurements Protocol~\ref{protocol:blind-meas} perfectly constructs Resource~\ref{resource:blind-meas}
    in the Abstract Cryptography framework.
    \end{theorem}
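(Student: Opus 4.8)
The plan is to verify the two conditions of Definition~\ref{def:ac-sec} with $\epsilon=0$---correctness against an honest Server and security against a malicious one---and I would follow the same template as the proof of Theorem~\ref{thm:blind-gate}. The argument is in fact considerably simpler here, since there is neither an ancilla injection nor any rotation-angle logic to track: the protocol reduces to a $\QOTP$ encryption, a public Clifford layer with a computational-basis measurement, and a Pauli decoding of the returned outcome.

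For correctness I would trace the honest execution directly. The Client sends $\Enc{\a}{\r}[\rho]$, so after the Server applies $\C$ the state is $\Enc{\a'}{\r'}[\C[\rho]]$ up to an irrelevant global phase, where $(\a',\r')=\C(\a,\r)$ by the Clifford-conjugation rule of Section~\ref{prelim:Clifford conjugation}. The central observation is that the $\Z^{\r'}$ factor only multiplies computational-basis states by phases and hence leaves the measurement distribution invariant, whereas the $\X^{\a'}$ factor merely relabels each outcome $\x$ as $\x\oplus\a'$. Consequently the Server's raw outcome is distributed as the true measurement of $\C[\rho]$ shifted by $\a'$, and the Client's decoding $\x\oplus\a'$ (with the updated key) recovers exactly the honest distribution $\sum_{\x}[\ketbra{\x}\circ\C[\rho]]$, independently of the sampled keys. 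This matches the $c=0$ branch of Resource~\ref{resource:blind-meas}, giving $\Protocol\,\qchannel \underset{0}{\approx} \Resource_{\ref{resource:blind-meas}}\filter$.

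For security I would exhibit a simulator $\Simulator$ attached to the Server interface that reproduces any malicious Server's view without access to $\rho$. As in Theorem~\ref{thm:blind-gate}, the device is an EPR reduction: the step ``sample $\a,\r$ and send $\Enc{\a}{\r}[\rho]$'' is replaced by preparing $n$ EPR pairs, sending one half of each to the Server, and obtaining $(\a,\r)$ from a \emph{delayed} Bell measurement of $\rho$ against the retained halves. By the Shor--Preskill/teleportation equivalence this leaves the joint distribution unchanged, and crucially it makes the state handed to the Server a fixed preparation whose marginal is maximally mixed (by the $\QOTP$ identity of Section~\ref{prelim:Clifford conjugation}), hence manifestly independent of $\rho$. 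The simulator can therefore send EPR halves to the malicious Server, let it apply its arbitrary CPTP map, and collect its return---all without touching $\rho$. It then forwards the retained halves together with the Server's output (bundled as $\rho_S$) and a description of the CPTP map $\mathrm D$ realizing the delayed Bell measurement, the key update $\C(\a,\r)$, and the decoding XOR, so that the resource---which holds $\rho$---returns $\Tr_S[\mathrm D[\rho\otimes\rho_S]]$, exactly the $c=1$ branch. Mirroring the notation of the preceding proof, this yields $\Resource_{\ref{resource:blind-meas}}\Simulator = \Protocol_{\ref{protocol:blind-meas}\text{-}\mathsf{EPR},C}\,\qchannel = \Protocol_{\ref{protocol:blind-meas},C}\,\qchannel$, establishing security with $\epsilon=0$.

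I expect the only delicate point to be this last delegation step: one must check that the key-dependent decoding---which correlates $(\a,\r)$ with the Server's output---genuinely commutes past the delayed Server deviation, so that it can be absorbed into a single CPTP map on $\rho\otimes\rho_S$ and executed by the resource rather than by the simulator. Everything else (the phase-invariance of $\Z$ corrections under computational-basis measurement, and the maximal mixedness of the encrypted marginal) is routine, so the proof should go through cleanly once the delayed-measurement reordering is justified.
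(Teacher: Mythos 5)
Your proposal is correct and follows essentially the same route as the paper: correctness by tracking the Pauli key through the Clifford conjugation and showing the $\Z$-part is measurement-invariant while the $\X$-part is undone by the XOR decoding, and security via the EPR reduction with a simulator that hands the Server maximally mixed halves and delegates the delayed Bell measurement, key update, and decoding to the resource as a CPTP map. The ``delicate point'' you flag is handled in the paper simply by observing that the resource's operations act on systems disjoint from the Server's, so the reordering is immediate and both worlds produce literally the same output state.
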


\begin{proof}[Proof of correctness]
    When both parties are honest, the Server receives $\Enc{\a}{\r}[\rho]$, applies $\C$, performs a computational-basis measurement of all qubits, and sends the measurement outcomes in an $n$-bit string $\z$ to the Client. The Client computes the conjugation of the initial encryption by the Clifford circuit, which gives $\a', \r'=\C(\a, \r)$, and stores $\z\oplus\a'$ as the output of the protocol. Formally, the output of the Client can thus be written
    \begin{align}
        \rho_{out, \z\oplus\a'} &=
        \ketbra{\z\oplus\a'}{\z}
        \circ
        \C
        \circ \Enc{\a}{\r}[\rho]
        \\
        &=
        \ketbra{\z\oplus\a'}{\z}
        \circ
        \Enc{\a'}{\r'}
        \circ
        \C
        [\rho].
    \end{align}
    With a change of variables $\z'= \z\oplus \a'$, then $\z$ becomes $\z'\oplus \a'$, and the expression becomes
    \begin{align}
        \rho_{out, \z'}
        &=
        \ketbra{\z'}{\z'\oplus \a'}
        \circ
        \Enc{\a'}{\r'}
        \circ
        \C
        [\rho]
        \\
        &=
        \ketbra{\z'}{\z'}
        \circ
        \X^{\a'}
        \circ
        \Enc{\a'}{\r'}
        \circ
        \C
        [\rho]
        \\
        &=
        \ketbra{\z'}{\z'}
        \circ
        \Dec{\a'}{\r'}
        \circ
        \Enc{\a'}{\r'}
        \circ
        \C
        [\rho]
        \\
        &=
        \ketbra{\z'}{\z'}
        \circ
        \C
        [\rho].
    \end{align}
    Thus, the output of the protocol is a bitstring $\z$ that follows the distribution $p(\z)=\bra\z\C[\rho]\ket\z$, identically to the output of Resource~\ref{resource:blind-meas}.
\end{proof}

\begin{proof}[Security proof (sketch, formal in~\ref{section:blindmeas-secproof-formal})]
     Here, we aim to present a Simulator that uses only the public information and interacts with Resource~\ref{resource:blind-meas} in a way that makes the Real World indistinguishable from the Ideal World for an unbounded distinguisher.
     To do so, we start by carrying out an EPR reduction of Protocol~\ref{protocol:blind-meas}: performing the Pauli encryption using EPR pairs and Bell measurements. Doing so clearly results in the same transformation from the Distinguisher's point of view. Then, we show that the same transformation can be reached by Simulator~\ref{simulator:blind-meas}, which uses only the public information and interacts with Resource~\ref{resource:blind-meas}. The equivalence is straightforward, as Simulator~\ref{simulator:blind-meas} performs the steps of the EPR reduction that do not require the input state, while instructing the Resource to do the rest.
     This concludes the proof, as in the presence of a malicious Server, Protocol~\ref{protocol:blind-meas} is indistinguishable from Simulator~\ref{simulator:blind-meas} plugged in Resource~\ref{resource:blind-meas} from an unbounded Distinguisher.

\begin{simulatorr}
    \caption{Blind Measurements}
    \label{simulator:blind-meas}
    \begin{algorithmic}[1]
        \State \textbf{Simulator Inputs:} $n, \C$ and basis of measurements (computational basis).
        \Procedure{Simulator - Emulates Client interaction with Server}{}
        \State Prepare $n$ EPR Pairs, and send each half to the Server.
        \State Receive $n$-bit string $\z$ from the Server.
        \EndProcedure

        \Procedure{Simulator - Make Resource reproduce the Server behavior}{}
        \State Send the remaining $n$ EPR halves and $\z$ to Ideal Resource.
        \State Send the below instructions as a CPTP map to the Resource to perform.
        \EndProcedure

        \Procedure{CPTP map sent to the Resource}{$\C, \rho$ from Client; $n$ EPR-halves, $\z$ from Simulator}
        \State EPR-encryption of $\rho$
        \State Decode measurement outcomes
        \EndProcedure
    \end{algorithmic}
\end{simulatorr} \end{proof}

\subsection{Magic-Blind Delegated Quantum Computations}
\label{subsection:blind:MB-DQC}
We now complete our goal of providing magic-blindness by composing the previously introduced protocols: we show how to hide the magic of an entire delegated computation $$C \;=\; \C_{t+1}\circ \T_n \circ \C_t \circ \cdots \circ \T_n \circ \C_1$$ followed by computational basis measurements where $\C_i\in\Clifford_n$ are Clifford layers and each $\T_n$ is implemented by a state-injection gadget acting on the $n$-th wire.
\subsubsection{Ideal functionality and induced computation class}
The blindness requirement is captured by the Magic-Blind Delegated Quantum Computation Resource~\ref{resource:mblind_DQC}.
\begin{resourcee}
    \caption{Magic-Blind DQC}
    \label{resource:mblind_DQC}
    \begin{algorithmic}[0]
        \State \textbf{Public Information:} $n, t, \C_1...\C_{t+1}$ and basis of measurements (computational basis).
        \State \textbf{Client's Inputs:} classical input $\x\in\bin^n$, $\C_1...\C_{t+1}\in\Clifford_n$ and $\{\A_i\in\classA\}_{i\leq t}$.
        \State \textbf{Server's Input:} $\c\in\bin$, set to $0$ if honest. If $\c=1$, the Server will have the opportunity to provide more inputs.

        \Procedure{re-naming}{}
        \State  $\star=1$ if $\A_i=\Tlabel$ for all $i$, $\star=0$ if $\A_i\neq \Tlabel$ for all $i$.
        \State  $\G=\C_{t+1}
        \circ \F_t\circ \C_t
        \circ \cdots
        \circ \F_1\circ \C_1$
        ,  $\rho_\x=\ketbra \x$
        ,  $\rho_\A=\bigotimes_{i=1}^t\rho_{\A_i}$
        \EndProcedure

        \Procedure{Computation by the Resource}{}
        \If{$\c=0$}
        \If{$\star$}
        \State Sample $\b$ with probability $\bra \b \C_{t+1}
        \circ
        \T_n\circ \C_t \circ \cdots \T_n\circ \C_1[\rho_\x]\ket \b$
        \State Output $
       \ketbra\b
        \circ
        \C_{t+1}
        \circ
        \T_n\circ \C_t \circ \cdots \T_n\circ \C_1[\rho_\x]
        $.
        \Else{}
        \State Sample $\b$ with probability $\bra \b \G[\rho_\x\otimes \rho_\A] \ket \b$
        \State Output $
        \ketbra\b
        \circ
        \G[\rho_\x\otimes \rho_\A]
        $.
        \EndIf
        \Else
        \State The Server holds a system whose reduced state in register $S$ is $\rho_S$: it sends $\rho_S$, alongside the instructions to perform a CPTP map $\mathrm D$ on the entire system. The Resource returns $\Tr_S[\mathrm D[\rho_\x\otimes \rho_\Alabel\otimes \rho_S]]$ to the Client.
        \EndIf
        \EndProcedure
    \end{algorithmic}
\end{resourcee}
The Client provides an $n$-bit input $\x$ and, for each injection step $i\leq t$, a choice $\A_i\in\classA$ specifying which single-qubit state is injected.
The public input to the resource consists of the Clifford structure $\C_1,\ldots,\C_{t+1}$, which is explicitly leaked to the Server. On the other hand, the inputs of the Client (choice of inputs and injections) are kept hidden from the Server, which constitutes the magic-blindness functionality.
The resource supports two operational modes, captured by $\star$ in Resource~\ref{resource:mblind_DQC}:
\begin{itemize}
    \item \textbf{Computation mode}, where $\A_i=\Tlabel$ for all $i\leq t$.
    In this case, the resource performs the target non-Clifford computation $C$ on input $\rho$ and returns the resulting measurement outcomes to the Client.
    \item \textbf{Magic-free mode}, where $\A_i\neq\Tlabel$ for all $i\leq t$.
    Here, all injections are stabilizer injections and the resulting transformation on the enlarged $n+t$ qubit system is Clifford. The outcomes correspond to a classically simulable computation—cf. Section~\ref{subsection:prelims:QC}, since input states are all stabilizers, transformation is Clifford, and measurements are Pauli—and serve only for testing.
\end{itemize}

In the magic-free mode, it is convenient to make the induced Clifford structure explicit. Writing $\F_i=\SWAP_{n+i,n}\circ\CNOT_{n+i,n}$ for the Clifford circuit implementing the $i$-th injection gadget, the overall transformation applied to the joint input $\rho_\x\otimes\rho_\A$ (where $\rho_\A=\bigotimes_{i=1}^t\rho_{\A_i}$) is the Clifford circuit
\begin{equation}
    \label{eq:blindness:G}
\G \;=\; \C_{t+1}\circ \F_t \circ \C_t \circ \cdots \circ \F_1 \circ \C_1,
\end{equation}
also depicted at Figure~\ref{fig:blindness:G}.
A final layer of computational-basis measurements is then applied to the remaining unmeasured qubits. Crucially, the Server observes the same sequence of Clifford operations in both modes.

\begin{figure}[h]
    \centering
    \includegraphics[]{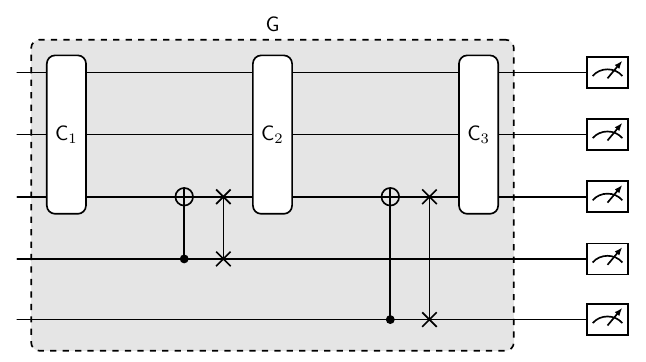}
    \caption{The Clifford transformation $\G$ that is implemented on $n+t$ qubits in the magic-free mode. Rotation angles are replaced by $0$, and no Pauli correction is being made.}
    \label{fig:blindness:G}
\end{figure}

\paragraph{Class of indistinguishable computations.}
By construction, Resource~\ref{resource:mblind_DQC} induces a natural \textit{class of computations that are indistinguishable to the Server under magic-blind delegation}, that we write $\class$.
All computations in this class share the same public Clifford structure $\C_1,\ldots,\C_{t+1}$, but may differ in their injected single-qubit states and in the associated measurement-angle logic.
In particular, $\class$ contains:
\begin{itemize}
    \item the \emph{target computation}, obtained by injecting magic states
    $\rho_\A=\bigotimes_{i\leq t}\T[\ketbra{+}]$ and using the measurement-dependent correction
    angles $\phi_i=b_i\pi/2$;
    \item a family of \emph{test computations}, obtained by injecting stabilizer states
    $\rho_{\A_i}\in\{\pm\ket \Xlabel,\pm\ket \Ylabel,\pm\ket \Zlabel\}$, for which the circuit does not need adaptivity and can use fixed angles
    (here set to $\phi_i=0$).
\end{itemize}

Without surprise, we propose the Magic-Blind Delegated Quantum Computation Protocol~\ref{protocol:mblind_DQC-bf}, a sequential composition of $t$ instances of the Blind State Injection Protocol~\ref{protocol:blind-gate}, adapting the injected ancilla, Clifford circuit, and ancilla index to the layer; then followed by the Blind Measurements Protocol~\ref{protocol:blind-meas}. As for the ideal behavior, it supports two modes: either $\A_i=\Tlabel$ for all $i$ (computation), or for no $i$ (magic-free). This is captured by a $\star$ in Protocol~\ref{protocol:mblind_DQC-bf}.
\begin{protocoll}
    \caption{Magic-Blind Delegated Quantum Computation (MB-DQC)}
    \label{protocol:mblind_DQC-bf}
    \begin{algorithmic}[1]
        \State \textbf{Public Information:} $n, t, \C_1...\C_{t+1}$ and basis of measurements (computational basis).
        \State \textbf{Client's Inputs:} classical input $\x\in\bin^n$, $\C_1...\C_{t+1}$ and $\A_i\in\classA$ for $i\leq t$.

        \State Client sets her computation register to $\rho_\x = \ketbra \x$.
        \Procedure{Clifford and Blind Gadget layers}{}
            \For{$i\leq t$}
                \State Client and Server perform Protocol~\ref{protocol:blind-gate} with the Client's computation register on inputs $\C_i, \A_i$ and ancilla register $n+i$ instead of $n+1$.
                \State
                Client sets the received $n$
                 qubits in her computation register, and if $\A_i\neq\Tlabel$ sets the received classical bit to $b_i$.
            \EndFor
        \EndProcedure

        \Procedure{Clifford and Blind measurements}{}
            \State Client and Server perform Protocol~\ref{protocol:blind-meas} on Clifford $\C_{t+1}$ and the Client's computation register.
            \State Output: Client receives $n$-bit string $\x$.
        \EndProcedure

        \Procedure{Output}{}
        \If{$\star$}
        \State \textbf{Output:} set $\b\gets \x$ and output $\b$.
    \Else
        \State \textbf{Output:} set $\b\gets \x\ \|\ b_1\ \|\ \cdots\ \|\ b_t$ and output $\b$.
    \EndIf
        \EndProcedure
    \end{algorithmic}
\end{protocoll}

\begin{theorem}
    \label{thm:mblind_DQC}
    The Magic-Blind Delegated Quantum Computation Protocol~\ref{protocol:mblind_DQC-bf} perfectly constructs Resource~\ref{resource:mblind_DQC}
    in the Abstract Cryptography framework.
    \end{theorem}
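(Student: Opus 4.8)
The plan is to prove the theorem by invoking the General Composition Theorem together with the perfect constructions already established in Theorems~\ref{thm:blind-gate} and~\ref{thm:blind-meas}, and then verifying that the resource obtained by composing the building blocks is literally identical to Resource~\ref{resource:mblind_DQC}. Since Protocol~\ref{protocol:mblind_DQC-bf} is by construction the sequential composition of $t$ instances of the Blind State Injection Protocol~\ref{protocol:blind-gate} (the $i$-th acting on the current $n$-qubit computation register with inputs $\C_i,\A_i,i$) followed by one instance of the Blind Measurements Protocol~\ref{protocol:blind-meas} (with Clifford $\C_{t+1}$), I would first record that each sub-protocol perfectly constructs its ideal resource, i.e.\ with distinguishing advantage $\epsilon=0$, and that the associated simulator for the composed construction is obtained by composing the sub-simulators, as guaranteed by the framework. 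Chaining the $n$-qubit output register of each step into the next and collecting the classical ancilla outputs as pass-through interfaces, iterated sequential composability then yields, with total error $\sum_i 0 = 0$, that Protocol~\ref{protocol:mblind_DQC-bf} perfectly constructs the sequentially composed resource $\Resource_{\ref{resource:blind-meas}} \circ \Resource_{\ref{resource:blind-gate}}^{(t)} \circ \cdots \circ \Resource_{\ref{resource:blind-gate}}^{(1)}$.

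The substance of the proof then lies in identifying this composed resource with Resource~\ref{resource:mblind_DQC}, treating the two honest modes separately. In computation mode ($\A_i=\T$ for all $i$, i.e.\ $\star=1$), each Hidden-Magic Gate outputs $\T_n\circ\C_i[\cdot]$ with no intermediate measurement, so the chain produces $\T_n\circ\C_t\circ\cdots\circ\T_n\circ\C_1[\rho]$; the final Blind Measurements step applies $\C_{t+1}$ and measures all $n$ qubits, reproducing exactly the computation-mode output of Resource~\ref{resource:mblind_DQC}. In magic-free mode ($\star=0$), each gadget applies $(\id\otimes\ketbra{b})\circ\F_i\circ\C_i$ and measures its ancilla $n+i$ immediately, whereas the target resource applies the full Clifford $\G=\C_{t+1}\circ\F_t\circ\C_t\circ\cdots\circ\F_1\circ\C_1$ and measures all $n+t$ qubits only at the end. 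The key step is a deferred-measurement argument: since $\F_i=\SWAP_{n+i,n}\circ\CNOT_{n+i,n}$ leaves qubit $n+i$ untouched by every subsequent layer ($\C_j$ acts only on qubits $1..n$, and $\F_j$ for $j>i$ touches only qubits $n$ and $n+j\neq n+i$), its computational-basis measurement commutes to the end without altering the output distribution. This gives exact equality of the honest outputs, including the $n+t$-bit output register.

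For the dishonest interface ($c=1$), I would observe that the composed resource lets a malicious Server supply a side-register and a CPTP map at each of the $t+1$ steps while maintaining a persistent register $S$; the overall effect is $\Tr_S[(\mathrm D^{(t+1)}\circ\cdots\circ\mathrm D^{(1)})[\rho\otimes\rho_S]]$, which is exactly of the form $\Tr_S[\mathrm D[\rho\otimes\rho_S]]$ guaranteed by the $c=1$ branch of Resource~\ref{resource:mblind_DQC} with $\mathrm D=\mathrm D^{(t+1)}\circ\cdots\circ\mathrm D^{(1)}$. Hence the composed resource and Resource~\ref{resource:mblind_DQC} coincide on both honest and dishonest interfaces, so they are identical and the construction error is $0$.

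I expect the main obstacle to be the careful bookkeeping in the magic-free identification: verifying that the register relabeling (local ancilla $n+1$ to global $n+i$) together with the \SWAP\ inside each $\F_i$ leaves exactly the right qubit available for deferred measurement, and that the $n$-qubit computation register handed from one gadget to the next is the one carrying the teleported data. Once this indexing is pinned down, the deferred-measurement equality and the CPTP-composition argument are routine, and the perfect ($\epsilon=0$) construction follows immediately from the composition theorem.
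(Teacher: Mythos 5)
Your proposal is correct and follows essentially the same route as the paper: both reduce the theorem to the perfect constructions of Theorems~\ref{thm:blind-gate} and~\ref{thm:blind-meas} together with sequential composability in the AC framework, identifying Protocol~\ref{protocol:mblind_DQC-bf} as the composition of the sub-protocols and Resource~\ref{resource:mblind_DQC} as the composition of the sub-resources. The only difference is one of detail: the paper simply asserts the equality between the chained sub-resources and Resource~\ref{resource:mblind_DQC}, whereas you verify it explicitly (the deferred-measurement argument for the magic-free mode and the merging of the dishonest CPTP maps), which strengthens rather than departs from the paper's argument.
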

\begin{proof}
    This can be proven straightforwardly using the composability of the protocols presented before, since Resource ${\ref{resource:mblind_DQC}}$ on inputs $\C_1, ..., \C_{t+1}$, and $\A_1, ..., \A_t$ can be realized by using Resource~\ref{resource:blind-gate} on $\C_1, \A_1$ and index $1$, then again on $\C_2, \A_2$, and index $2$, and so on, and finally Resource~\ref{resource:blind-meas} with input $\C_{t+1}$.
      Formally, using the notations of the AC framework and denoting by $\Protocol_{\ref{protocol:blind-gate}}$ the converter of the Blind State Injection Protocol~\ref{protocol:blind-gate}, and by $\Protocol_{\ref{protocol:blind-meas}}$ the converter of the Blind Measurements Protocol~\ref{protocol:blind-meas}, we have:
      \begin{align}
          \Resource_{\ref{resource:mblind_DQC}}(\C_1, ..., \C_{t+1};\A_1, ..., \A_t)
          &=
          \Resource_{\ref{resource:blind-meas}}(\C_{t+1})
          \circ
          \Resource_{\ref{resource:blind-gate}}(\C_t; \A_t; t)
          \circ
          \cdots
          \circ
          \Resource_{\ref{resource:blind-gate}}(\C_1; \A_1; 1)
          \\
          &=
          \left( \Protocol_{\ref{protocol:blind-meas}}(\C_{t+1})
          \circ
          \Protocol_{\ref{protocol:blind-gate}}(\C_t; \A_t; t)
          \circ
          \cdots
          \circ
          \Protocol_{\ref{protocol:blind-gate}}(\C_1; \A_1; 1) \right)\qchannel
          \\
          &=
          \Protocol_{\ref{protocol:mblind_DQC-bf}}(\C_1, ..., \C_{t+1};\A_1, ..., \A_t) \qchannel.
      \end{align}
  \end{proof}

\subsubsection{Removing the back-and-forth communication between Client and Server}
Protocol~\ref{protocol:mblind_DQC-bf} has a practical caveat. It uses a two-way quantum communication channel since it requires the Client, for consecutive calls to Protocol~\ref{protocol:blind-gate}, to receive $n$ qubits, decrypt them, encrypt them again, and re-send them---this is also the case for the last call between Protocol~\ref{protocol:blind-gate} and Protocol~\ref{protocol:blind-meas}. In this section, we show that this "back-and-forth" communication can be removed. The first forward (encrypted) quantum communication is still required. Then, instead of receiving, decrypting, re-encrypting, and re-sending qubits, the Client can simply instruct the Server to keep the qubits while treating the decryption keys as new encryption keys kept hidden from the Server. Doing so yields Protocol~\ref{protocol:mblind_DQC}. Indeed, we replace the textbook execution of Protocols~\ref{protocol:blind-gate} and~\ref{protocol:blind-meas} by equivalent procedures in which the Client neither sends freshly encrypted qubits nor receives qubits to decrypt, since the Server keeps the qubits.
\begin{protocoll}
\caption{MB-DQC (without Back-And-Forth)}
\label{protocol:mblind_DQC}
\begin{algorithmic}[1]
\State \textbf{Public:} $\C_1,\ldots,\C_{t+1}\in\Clifford_n$
\State \textbf{Client input:}  $\x\in\bin^n$, $\A_i\in\classA$ for $i\le t$
\State \textbf{Notation:} $\star$ if $\A_i=\Tlabel$ for all $i\leq t$. Steps with "\textbf{C}" are done by the Client, "\textbf{S}" by the Server.

\Procedure{Blind-State Injection}{$i,\C,\A;\ (\a,\r)$}
  \State \textbf{C:} sample $x,z\gets\$\{0,1\}$, $\theta\gets\$\Theta$.

  \State \textbf{C:} prepare $\Enc{x}{z}\Z(\theta)\ket{\Apattern}$ and send to Server register $n+i$.
  \State \textbf{C:} set $(\a,\r) \gets(\a\|x,\ \r\|z)$.
  \State \textbf{S:} apply $\C$ and then $\F_i := \SWAP_{n+i,n}\circ\CNOT_{n+i,n}$.
  \State \textbf{S:} measure qubit $n+i$ in computational basis; send outcome $b_i$.
  \State \textbf{C:} update keys $(\a,\r)\gets (\F_i\circ \C)(\a,\r)$.
  \State \textbf{C:} decode $b_i \gets b_i \oplus a_{n+i}$.
  \State \textbf{C:} compute appropriate rotation angle $\phi(\A_i, b_i)$.
  \State \textbf{C:} compute blind angle $\delta(\phi, a_n, \theta)$.
  \State \textbf{S:} apply $\Z^\dagger(\delta)$ on qubit $n$.
  \If{$\A=\Tlabel$}
    \State \textbf{C:} $a_n \gets a_n \oplus b_i$ \Comment{Pauli correction for MSI}
  \EndIf
  \State \textbf{Output:} updated $(\a,\r)$ and (if $\A\neq\Tlabel$) the classical bit $b_i$.
\EndProcedure

\Procedure{Blind Measurements}{$\C;\ (\a,\r)$}
  \State \textbf{S:} apply $\C$ and measure all qubits in the computational basis, obtaining $\z$.
  \State \textbf{C:} update keys $(\a,\r)\gets \C(\a,\r)$ and output $\z\oplus \a$.
\EndProcedure

\Statex
\Procedure{Computation}{}
  \State \textbf{C:} sample $(\a,\r)\gets\$\{0,1\}^{n}\times\{0,1\}^{n}$ and send $\Enc{\a}{\r}[\rho]$.
  \For{$i=1, ..., t$}
    \State run \textsc{Blind-State Injection}$(i,\C_i,\A_i;\ (\a,\r))$;
          store classical bit as $b_i$ if $\A_i\neq\Tlabel$.
  \EndFor
  \State Run \textsc{Blind Measurements}$(\C_{t+1};\ (\a,\r))$, output $\z$.
  \If{$\star$}
    \State \textbf{Output:} set $\b\gets \z$ and output $\b$.
  \Else
    \State \textbf{Output:} set $\b\gets \z\ \|\ b_1\ \|\ \cdots\ \|\ b_t$ and output $\b$.
  \EndIf
\EndProcedure
\end{algorithmic}
\end{protocoll}

\begin{theorem}
    \label{thm:mblind_DQC-noBF}
    The Magic-Blind Delegated Quantum Computation Protocol without Back-and-Forth, Protocol~\ref{protocol:mblind_DQC} perfectly constructs Resource~\ref{resource:mblind_DQC} in the Abstract Cryptography framework.
    \end{theorem}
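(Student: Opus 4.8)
The plan is to derive the statement from Theorem~\ref{thm:mblind_DQC} by showing that Protocol~\ref{protocol:mblind_DQC} realises the same overall transformation as Protocol~\ref{protocol:mblind_DQC-bf}. The two protocols differ only at the junctions between consecutive layers: the back-and-forth version has the Server return the $n$ computation qubits, the Client apply $\Dec{\tilde\a'}{\tilde\r'}$ and then a fresh $\Enc{\tilde\a}{\tilde\r}$ before resending, whereas the streamlined version leaves the qubits at the Server and carries the keys forward via $\a^{(i)},\r^{(i)}\gets\tilde\a',\tilde\r'$. I would first record the channel identity $\Enc{\a}{\r}\circ\Dec{\a}{\r}=\id$ (the two Pauli conjugations cancel), and observe that Protocol~\ref{protocol:mblind_DQC} is exactly Protocol~\ref{protocol:mblind_DQC-bf} specialised to the key choice in which each round's encryption key equals the previous round's decryption key; under that choice every inter-layer decrypt--re-encrypt collapses to the identity channel, so the round-trip merely returns the Server's register to itself unchanged and can be suppressed without altering anything observable to either party. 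Correctness is then inherited directly: the honest output distribution is unchanged and already matches the honest branch of Resource~\ref{resource:mblind_DQC} by Theorem~\ref{thm:mblind_DQC}.

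For this suppression to also preserve security, I would check that the carried-forward keys remain admissible for the blinding argument. Since $(\tilde\a',\tilde\r')$ is the image of the uniform keys $(\a,\r)$ under the bijective Clifford-conjugation map $\F_i\circ\C_i$, truncated to its first $n$ coordinates, it is again uniformly distributed, and being a public function of keys the Server never learns it remains hidden from the Server. Hence the computation register is protected throughout by a single global one-time pad that is uniform and unknown to the Server, and each ancilla send carries its own fresh hidden pad, so every quantum message the Server receives has a maximally mixed marginal. I would then exhibit the simulator for Protocol~\ref{protocol:mblind_DQC} directly, composing the EPR reductions and simulators of Theorems~\ref{thm:blind-gate} and~\ref{thm:blind-meas}: the initial send and each ancilla send are replaced by EPR pairs with delayed Bell measurements, each blind angle $\delta_i$ is sampled uniformly while $\theta_i$ is back-computed as in Reduction~\ref{protocol:blind-gate-reduction}, and the Client's remaining operations act only on her own registers and are delegated to the resource. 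The simulator emulates only the Client--Server interaction---sending EPR halves, receiving each $b_i$, returning uniform $\delta_i$, receiving $\x$---and forwards the remaining halves and classical data to Resource~\ref{resource:mblind_DQC}, whose dishonest branch absorbs the Server's overall deviation as the map $\mathrm D$; this gives $\Protocol_{\ref{protocol:mblind_DQC},C}\,\qchannel=\Resource_{\ref{resource:mblind_DQC}}\,\Simulator$ with zero advantage.

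The main obstacle I anticipate is precisely the malicious-case check that removing the round-trips grants the Server no additional leverage. The delicate point is that a single global one-time pad tracked through all the Cliffords, rather than the per-layer fresh pads of the back-and-forth version, must still render every received message maximally mixed and let the EPR reduction go through when the computation qubits are never returned between layers---in particular one must confirm that the delayed Bell measurements for the ancillas and the commutation of the $\Z(\theta_i)$ pre-rotations past the Server's operations remain valid in this streamlined setting. Once the identity $\Enc{\a}{\r}\circ\Dec{\a}{\r}=\id$ and the uniformity of the carried-forward keys are established, I expect these to be routine extensions of the component proofs.
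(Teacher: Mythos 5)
Your proposal takes essentially the same route as the paper's own (one-paragraph) proof: reuse the previous layer's decryption keys as the next layer's encryption keys so that the inter-layer decrypt/re-encrypt collapses to the identity, suppress the resulting no-op round trip, and observe that the carried-forward keys remain uniformly distributed and hidden from the Server, so the composable security of Protocols~\ref{protocol:blind-gate} and~\ref{protocol:blind-meas} carries over unchanged. Your write-up is in fact somewhat more detailed than the paper's (the explicit simulator sketch via the EPR reductions, the uniformity-under-bijection argument), and its only slight imprecision---describing the key update as a pure Clifford-conjugation bijection, which ignores the $b'$-dependent Pauli correction in the magic case---does not affect the argument, since that update is still a bijection on the keys for each fixed value of the public data.
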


\begin{proof}[Sketch of proof (formal in~\ref{subsection:security proof MB-DQC})]
    We start the proof by reminding that Protocol~\ref{protocol:blind-gate} is composably secure: it is secure in all context in which it is used, as long as the Client follows her part of the protocol, \textit{i.e} the keys used for each call are uniformly distributed and kept hidden from the Server.
    The main point here is that the Client can simply use Protocol~\ref{protocol:blind-gate} repeatedly, decrypt the qubits and re-encrypt them with the same keys, without compromising the security of the protocol. Finally, since the Client is receiving qubits, decrypting them, encrypting them with the same key, and sending them back, this step can be omitted to avoid a waste of quantum communication, and simply let the Server keep the qubits, without compromising the security of the protocol. This results in Protocol~\ref{protocol:mblind_DQC}, in which the Client sends the (encrypted) $n$-qubit state only once at the beginning, then at each call of Protocol~\ref{protocol:blind-gate}, she only sends the encrypted ancilla, and updates the encryption keys.
\end{proof}

\subsection{Reduction to Pauli Deviations}
\label{subsection:blind:Pauli dev}
A central result that emerges from the previously introduced protocol is that any malicious Server behavior can be reduced to Pauli deviations on the qubits before their measurements in the computational basis. This is due to the initial Pauli encryption of the Client's $n+t$ qubits that can be propagated through  each layer of the protocol, and the decryption that takes place inside the Client's part of Protocol~\ref{protocol:mblind_DQC}.
This allows a Pauli Twirl of any Server malicious behavior, even those that aim to exploit the received angles values or an additional fixed-size working register. This is captured in the following lemma.

\begin{restatable}[Reduction to Pauli deviations]{lemma}{paulidev}
    \label{lemma:mblind:Pauli deviations}
    Let $\class$ be the class of computations indistinguishable under magic-blind delegation with Clifford structure $\C_{1}, ..., \C_{t+1}$, let  $C\in \class$, and let $\rho_{out}$ denote the state of the total system after a Client delegates $C$ to a (malicious) Server under Protocol~\ref{protocol:mblind_DQC}. Then, for any Server cheating behavior, there exists a convex combination of Pauli operators with coefficients $\alpha_\E$, where $\sum_{\E\in\Pauli_{n+t}}\abs{\alpha_\E}^2=1$, such that the state of the total system (including bits discarded by the Client) can be written as
    \begin{equation}
        \rho_{out}
        =
        \sum_{\b\in\bin^{n+t}}
        \sum_{\E\in\Pauli_{n+t}}
        \abs{\alpha_\E}^2
        \ketbra\b
        \circ
        \E
        [
            \rho_{cor, \b, C}
        ]
        \otimes \U_\E\left[
            \ketbra0^{\otimes w}
            \right],
    \end{equation}
    where
    \begin{equation}
        \rho_{cor, \b, C}
        =
        \C_{t+1}
        \circ
        \Z_n^\dagger\left(\phi_\b^{(t)}\right)
        \circ
        \F_{t}
        \circ
        \C_{t}
        \circ
        \cdots
        \Z_n^\dagger\left(\phi_\b^{(1)}\right)
        \circ
        \F_{1}
        \circ
        \C_{1}
        \left[\rho \otimes \rho_\Apattern\right]
    \end{equation}
    is the correct state state, meaning after an honest execution of $C$ alongside the branch $\b$.
\end{restatable}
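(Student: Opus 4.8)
The plan is to reduce an arbitrary Server strategy to a single unitary deviation sitting just before the final measurement, and then to average the hidden, uniform Pauli pad over that deviation using the Pauli twirl. First I would purify the Server: by Stinespring dilation any cheating strategy is realised by a unitary acting on the $n+t$ computation qubits together with a $w$-qubit workspace initialised to $\ket0^{\otimes w}$, with deviations possibly interleaved between the honest layers. To linearise this I would pass to the EPR/deferred-measurement picture used in the security proofs of Protocols~\ref{protocol:blind-gate} and~\ref{protocol:blind-meas}, so that every honest operation---the Clifford layers $\C_i$, the injection gadgets $\F_i$, and the corrections $\Z_n^\dagger(\phi^{(i)}_\b)$---is unitary and all measurements are postponed to the end. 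Each layer-wise deviation can then be conjugated through the subsequent honest unitaries, and composing the results collects every deviation into a single effective unitary $\Delta$ on the $n+t+w$ qubits, applied after the honest Clifford evolution that produces $\rho_{cor,\b,C}$ and before the final computational-basis measurement. Since the correction angles $\phi^{(i)}_\b$ are multiples of $\pi/2$, this evolution is Clifford on each measurement branch $\b$, and I would carry the full branch label $\b\in\bin^{n+t}$ (including the ancilla outcomes discarded by the Client in computation mode) throughout.

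Next I would exploit the Pauli one-time pad. By the blindness established in Theorems~\ref{thm:blind-gate} and~\ref{thm:mblind_DQC}, the input is encrypted with a pad $\Enc{\a}{\r}$ that is uniform and perfectly hidden from the Server, even given the revealed angles $\delta$ and the workspace; propagating it through the honest Clifford evolution yields a pad $\Enc{\a'}{\r'}$ on $n+t$ qubits that is still uniform and that the Client removes at decryption/decoding, so that after the honest circuit the state is $\Enc{\a'}{\r'}[\rho_{cor,\b,C}]$. Hence the pre-measurement state is the average over the uniform pad of the operator $\bigl(\Enc{\a'}{\r'}^\dagger\otimes\id_w\bigr)\,\Delta\,\bigl(\Enc{\a'}{\r'}\otimes\id_w\bigr)$ applied by conjugation to $\rho_{cor,\b,C}\otimes\ketbra0^{\otimes w}$, with $\Delta$ independent of the pad. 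Expanding the collected deviation in the Pauli basis of the computation register as $\Delta=\sum_{\E\in\Pauli_{n+t}}\alpha_\E\,\E\otimes\U_\E$, the conjugation becomes a double sum over $\E,\E'$ whose computation factor is $\bigl(\Enc{\a'}{\r'}^\dagger\E\Enc{\a'}{\r'}\bigr)\rho_{cor,\b,C}\bigl(\Enc{\a'}{\r'}^\dagger\E'\Enc{\a'}{\r'}\bigr)^\dagger$ and whose workspace factor is $\U_\E\ketbra0^{\otimes w}\U_{\E'}^\dagger$. Averaging over the pad and invoking the Pauli Twirl (Lemma~\ref{lemma:Pauli Twirl}) annihilates every cross term $\E\neq\E'$; in each surviving diagonal term the conjugation $\Enc{\a'}{\r'}^\dagger\E\Enc{\a'}{\r'}=\pm\E$ produces a sign that cancels against its conjugate, leaving $\abs{\alpha_\E}^2\,\E[\rho_{cor,\b,C}]\otimes\U_\E[\ketbra0^{\otimes w}]$. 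Restoring the final measurement $\sum_\b\ketbra\b\circ(\cdot)$ on the $n+t$ qubits reproduces the claimed expression, and $\sum_\E\abs{\alpha_\E}^2=1$ follows from unitarity of $\Delta$ (Parseval in the Pauli basis).

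The hard part will be the first step rather than the twirl, which is routine once the pad is in place. I would need to argue carefully that the deferred-measurement reduction is faithful here---that postponing the intermediate ancilla measurements and replacing the adaptive $\Z_n^\dagger(\phi^{(i)}_\b)$ corrections by branch-controlled Cliffords leaves the output distribution unchanged---and that conjugating the interleaved deviations to the end genuinely yields a single $\Delta$ that is independent of the hidden pad. The second delicate point is confirming that the propagated pad $\Enc{\a'}{\r'}$ remains uniform and independent of everything the Server sees, including the blinded angles $\delta$ and its own workspace; this is exactly what the composable blindness of the underlying protocols guarantees, so I would invoke it rather than reprove it.
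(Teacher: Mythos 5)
Your proposal follows essentially the same route as the paper's own proof: unitarize the protocol along a fixed measurement branch $\b$, collect the interleaved deviations into a single unitary acting after the honest Clifford-and-injection evolution, propagate the uniform pad through the honest layers so that it sits immediately around the deviation, decompose the deviation in the Pauli basis of the $n+t$ returned qubits, kill the cross terms with the Pauli Twirl (Lemma~\ref{lemma:Pauli Twirl}), and resum over branches. The outline is correct and all the main ingredients match.

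The one step where your justification would not hold up as written is the decoupling of the blinded-angle registers from the pad. You treat the collected deviation $\Delta$ as acting on computation qubits and workspace only, and defer the fact that the pad ``remains uniform and independent of everything the Server sees, including the blinded angles $\delta$'' to the composable blindness theorems. But those theorems assert the existence of simulators making real and ideal worlds indistinguishable; they do not hand you the algebraic structure of the real-world output state, and indeed the present lemma is itself a finer structural statement about the real protocol than what AC security provides. The paper instead carries the angle registers $\ketbra{\boldsymbol\delta_{\b}}$ explicitly as quantum registers on which the deviation $\D$ may act, and resolves the correlation between $\boldsymbol\delta$ and the pad (through the sign $(-1)^{a'_n}$) by an elementary observation: at fixed pad and branch, the map $\boldsymbol\theta \mapsto \boldsymbol\delta$ is a bijection on $\Theta^{t}$, so averaging over the secret uniform $\boldsymbol\theta$ renders the angle registers maximally mixed and uncorrelated with $(\a',\r')$; only then can the deviation's action on angles-plus-workspace be absorbed into the operators $\U_\E$ and the twirl applied on the $n+t$-qubit factor alone. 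This is the same computation carried out in the single-gadget security proof (Appendix~\ref{subsection:reduction proof}), so if you wish to ``invoke rather than reprove,'' the thing to invoke is that explicit $\boldsymbol\theta$-averaging step, not the abstract composition theorem.
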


\begin{proof}[Sketch of proof, formal in~\ref{subsection:proof of reduction to Pauli Dev}]
    This lemma can be proven by following the proof technique of \cite{FK17unconditionally,KKLM22unifying}. We first fix a branch of measurement outcomes $\b$ of $n+t$ bits, and denote by $\b'$ the decoded outcomes that the Client stores, which we refer to as the \emph{computation branch}.
    This allows to unitarize the protocol, expressing it as a sequence of unitary operations alongside branch $\b'$ for an honest behavior before a layer of $n+t$ measurements in the computational basis yielding outcomes $\b$ that the Client stores as $\b'$.
    Furthermore, we write any CPTP map performed by the Server as a unitary acting on the qubits and angles sent by the Client and on a private working register. As a result, any Server behavior can be written as an honest-then-malicious unitary behavior acting on the total system.
    Since the sent qubits are encrypted and the measurement outcomes are decrypted, a Pauli Twirl of the deviation can be obtained, hence the announced result.

    This holds because we are able to sandwich the deviation between Pauli operators corresponding respectively to the encryption and the decryption, both unknown to the Server. Indeed, the Client starts by encrypting the total system and ends by decoding measurement outcomes, which is equivalent to decrypting the states before their measurement. In between, there is the honest unitary execution and the Server's unitary deviation on the entire system (including a private working register). The initial Pauli encryption is commuted progressively through the honest part of the protocol: at each layer, it removes the encryption of the rotation angle (\emph{i.e.}, $\delta$ becomes $\phi$) while Clifford operations progressively alter the keys. After all the honest layers of the protocol, the state is thus still perfectly one-time padded, and the encryption keys correspond to the decryption keys of the Client. Hence, the Server's unitary deviation on the entire system is sandwiched between unknown conjugate Pauli operators on the $n+t$-qubit system returned to the Client. As a result, the deviation undergoes a Pauli Twirl on that subsystem, and the result follows.
\end{proof}

\section{Verification \emph{via} Magic-Blindness}
\label{section:verification}
In this section, we use the previous constructions and results to tackle the verification of any $\BQP$ quantum computation on $n$ qubits with inherent error $\bqp$.
Throughout this section, we consider that $C$ is expressed in the Clifford+MSI model with $t$ state injection layers. Namely, with the convention introduced in preliminaries \ref{subsection:prelims:QC}, $C$ is a $(n,t)$-Clifford+MSI computation.
\paragraph{Description of the Ideal Resource.}
We will first define the Verified Delegated Quantum Computation Resource~\ref{resource:verification}. It captures mathematically the verifiability property of a delegated quantum computation, \emph{i.e.}, it is secure by design. It explicitly models the fact that the Server either behaves honestly or forces an abort, but cannot corrupt the result of an accepted computation.
Notably, the resource leaks the Clifford structure of the computation to the Server, showcasing that blindness of the entire circuit is not a requirement. When the Server is honest, the resource outputs the correct decision bit $z^\star \in \{0,1\}$ for the language $L$.
By our definition of $\BQP$, the first measurement outcome $y$ of the computation $C(\ket{\x})$ satisfies:
\(
\Pr[y = z^\star] \ge 1 - c .
\)

\begin{resourcee}
    \caption{Verified Delegated Quantum Computation}
    \label{resource:verification}
    \begin{algorithmic}[0]
        \State \textbf{Client inputs:} $\x \in \{0,1\}^n$, and $(n,t)$-Clifford+MSI computation $C$.
    \State \textbf{Public information:} Clifford structure $\C_{1}, \dots, \C_{t+1} \in \Clifford_n$, parameters of the protocol.
    \State \textbf{Server inputs:} $\c \in \{0,1\}$, set to $0$ if honest.
    \Procedure{Computation by the Resource}{}
        \If{$\c=0$}
            \State Output $\ketbra{\Acc} \otimes \ketbra{z^\star}$ to the Client, where $z^\star$ is the biased output of $C(\ket\x)$.
        \Else
            \State Output $\ketbra{\Rej} \otimes \ketbra{\perp}$ to the Client.
        \EndIf
    \EndProcedure
    \end{algorithmic}
\end{resourcee}
\paragraph{Rationale of the Protocol.}
The chosen design will follow the template of other trap-based protocols consisting of interleaving computation and test rounds, while delegating them to the Server in a blind way using the previously introduced Magic-Blind Delegated QC Protocol~\ref{protocol:mblind_DQC}. This will ensure that the honesty of the Server can be effectively deduced from what is observed on the test rounds alone.

\paragraph{Tolerance to a constant amount of circuit-level noise.}
We also show that the protocol remains relevant in the presence of a constant amount of circuit-level Server-side noise. It also naturally accounts for certain forms of Client-side noise, namely those that can be equivalently interpreted as occurring on the Server side (see the proof of robustness).
The noise is modeled by an error rate $p_{err}<1/2$ quantifying the probability that an honest run produces an incorrect result (other than because of the $\BQP$ error)—the same noise model considered in other robust verification protocols \cite{LMKO21verifying,BN25noise}. We provide an upper bound for the tolerated values of $p_{err}$.

\paragraph{Organization of the section.}
In order to do so, we will first define in~\ref{subsection:framework single qubit} preliminary notions to explain the Protocol. Using the fact that Server deviations are reduced to Pauli deviations (Section~\ref{subsection:blind:Pauli dev}), we identify those that actually harm the computation, and introduce a construction to detect them: it is based on the concept of \emph{traps}.
Second, in Section~\ref{subsection:verification:protocol} we introduce a trap-based verification protocol for Clifford+MSI computations, prove its composable security in implementing Resource~\ref{resource:verification} with a negligible construction error even with circuit-level noise with strength bounded by \(p_{err}\). Finally, we show that the trap design can be generalized into a generic framework for composable verification in the circuit-model in Section~\ref{subsection:framework generalized}.

\subsection{Designing traps}
\label{subsection:framework single qubit}

In this section, we first classify Pauli deviations for a computation in the Clifford+MSI model. Then we design \emph{traps}—classically simulable computations yielding deterministic measurement outcomes—to detect them. Since they will be delegated blindly, traps must be designed
for the class $\class$ of indistinguishable computations under MB-DQC with Clifford structure $\C_1, ..., \C_{t+1}$.

\subsubsection{Classification of deviations}
In Section \ref{subsection:blind:Pauli dev} we saw a strong consequence of delegating computations in $\class$ using the MB-DQC Protocol \ref{protocol:mblind_DQC}: any malicious behavior is reduced to a convex combination of Pauli deviations on $n+t$ qubits applied before the final measurements (Lemma~\ref{lemma:mblind:Pauli deviations}).
Let us analyze the impact of a single Pauli deviation $\E\in\Pauli_{n+t}$.
Intuitively, if $\E$ commutes with the measurements, it does not change the distribution of outcomes and is considered \emph{harmless}. For instance, applying a $\Z$ gate before a computational-basis measurement does not change the result. However, a deviation is problematic if it contains a bit flip ($\X$ or $\Y$) on a wire that contributes to the final result.
In this work, since we are interested in the decision bit (the first wire) and the integrity of the non-Clifford resources (the MSI wires), we first define the set of qubits with a useful measurement outcome:
\begin{definition}[Useful qubits]
    \label{verification:def:useful}
    For any Clifford+MSI computation on $n$ qubits with $t$ MSI steps, we define the set of \emph{useful qubits} as
    \begin{equation}
        Q = \{1, n+1, \dots, n+t\}\; .
    \end{equation}
    These are the qubits whose measurement outcome is relevant for the computation. This set includes the output wire (whose outcome is used as a decision bit) and the ancilla wires whose outcomes are used to perform state-injection.
\end{definition}
Let $W_{\X\Y}(\E) \subseteq \{1, \dots, n+t\}$ be the set of indices where the Pauli operator $\E$ acts as $\X$ or $\Y$.
Based on this, we define harmfulness as follows:
\begin{definition}[Harmful deviations]
    \label{verification:def:harmful}
    Let $\E\in\Pauli_{n+t}$. We say that $\E$ is \emph{harmful} if its bit-flip support intersects with the set of useful output qubits (consistent with Def. \ref{verification:def:useful}):
    $$  W_{\X\Y}(\E) \cap Q \neq \emptyset.$$
\end{definition}
This definition captures the intuition that any deviation that does not affect the output wire (index 1) or the magic state injections (indices $n+1$ to $n+t$) is effectively harmless for the verified $\BQP$ computation. Indeed, as a consequence of the Reduction to Pauli Deviation Lemma \ref{lemma:mblind:Pauli deviations}, any malicious behavior can be pushed until the end of the computation, \emph{i.e} right before the measurements. There, the individual Pauli deviations on wires $2, ..., n$ don't affect the outcomes distribution of qubit $1$ since they are local operations.

\subsubsection{Definitions of traps and detection properties}
Here, we first build traps for the Clifford+MSI model, then relate to the deviations they allow to detect.
The requirements to build a trap are: belonging to $\class$, classical simulability, and determinism in the measurement outcomes.

To be classically simulable, a trap must be in the magic-free subclass of $\class$: it must contain only stabilizer injections.
Furthermore, to force a deterministic measurement outcome, we use the stabilizer formalism as teased in the preliminaries~\ref{subsection:prelims:QC}, using the fact that in that subclass, the transformation from the input qubits to the output qubits (before measurements) is the Clifford circuit $\G \;=\; \C_{t+1}\circ \F_t \circ \C_t \circ \cdots \circ \F_1 \circ \C_1$ (see Figure~\ref{fig:blindness:G}).
Below is a definition that allows us to derive traps that make the measurement outcome of a single qubit deterministic. Consistent with Definitions~\ref{verification:def:harmful}, \ref{verification:def:useful}, we focus on trapping the qubits that have a useful measurement outcome.

\begin{definition}[Trap for a useful qubit]
  \label{def:verification:trap}
  A \emph{trap} for $\class$, parametrized by a qubit index $q \in Q$, is defined as an instance of $\class$ with $\x_0$ and $\Alabel_1, \dots, \Alabel_t$ such that $\ket{\x_0} \otimes \rho_{\Alabel_1} \otimes \dots \otimes \rho_{\Alabel_t}$ is stabilized by $\stab_q = \G^\dagger \Z_{q} \G$. We refer to $C_q$ as a trap for $q$.
\end{definition}

Using the above notation, delegating trap $C_q$ means for the Client to compute an appropriate $+1$-eigenstate, extract its classical representation in terms of $\x_0$, $\Alabel_1, ..., \Alabel_t$ and feed it to Protocol \ref{protocol:mblind_DQC}. On the other hand, to delegate the target computation on input $\x$, the Client uses the Protocol with $\x$ and $\Alabel_i=\Tlabel$ for each $i$.

The following definition captures the Pauli deviations that a trap allows to detect. Intuitively, a Pauli deviation $\E$ \emph{triggers} a trap if it flips the expected measurement outcome. Indeed, for measurements in the $\Z$ basis, the outcome is deterministically $0$ by design if $\E=\id$ or $\Z$, and is flipped if $\E=\X$ or $\Y$.

\begin{definition}[Detection of deviations]
  Let $\E\in\Pauli_{n+t}$ be a Pauli deviation, and let $C_q$ be a trap for $q\in Q$. We say that deviation $\E$ is \emph{detected} by trap $C_q$ if the deviation and $\Z_q$ anti-commute, and that it is \emph{undetected} if they commute:
  \begin{equation}
    \emph{Trap $C_q$ detects $\E$}\; \iff \{\Z_q, \E\} = 0,
    \quad
    \emph{Trap $C_q$ does not detect $\E$}\; \iff [\Z_q, \E] = 0.
  \end{equation}
\end{definition}

\subsubsection{Building a concrete set of traps}

The purpose of trap-based verification is to be able to detect any harmful Pauli deviation. Hence, verification is secure if there are enough traps to detect all the harmful deviations. In this work, we consider the set of traps
\begin{equation}
    \label{eq: set of traps}
    \mathcal Q = \{C_{i} \; \text{for}\; i\in Q\},
\end{equation}
meaning one trap for each useful output qubit.
The following lemma ensures that the set of traps detects all Server deviations.

\begin{lemma}[Harmful deviations are detected]
  \label{lemma:harmful deviations are detected}
  Let $\E\in\Pauli_{n+t}$ be a harmful deviation, and let $\mathcal Q$ be the set of traps of Equation~\ref{eq: set of traps}. Then, there is at least one trap in $\mathcal Q$ that detects $\E$.
\end{lemma}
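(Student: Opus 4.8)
The plan is to show that a harmful deviation $\E$ must anti-commute with at least one of the $\Z_q$ observables defining the traps in $T$. The key structural fact is that the set $T$ contains one trap for each output index $q \in [n+t]$, and by Definition~\ref{def:verification:trap} the trap indexed by $q$ is built so that the prepared input is stabilized by $\stab_q = \G^\dagger \Z_q \G$. The detection condition for this trap, pulled back through the Clifford $\G$, is governed by whether $\E$ commutes or anti-commutes with $\Z_q$ at the output.

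First I would unpack what ``harmful'' means at the level of single-qubit Pauli factors. By Definition~\ref{verification:def:harmful}, $\E$ harmful means there exists an index $i \in [n+t]$ with $[\E]_i \in \{\X, \Y\}$. I would then fix such an index $i$ and examine the trap $q = i \in T$. The entire argument reduces to the single-qubit commutation relation on wire $i$: since $\Z_i$ acts as identity on all wires $j \neq i$ and as $\Z$ on wire $i$, whether $\{\Z_i, \E\} = 0$ or $[\Z_i, \E] = 0$ is determined solely by the commutation of $\Z$ with $[\E]_i$ on that single qubit.

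The core calculation is the single-qubit Pauli commutation table: $\Z$ anti-commutes with $\X$ and with $\Y$, while $\Z$ commutes with $\id$ and with $\Z$. Since $[\E]_i \in \{\X, \Y\}$ by the choice of $i$, the single-qubit factor $[\E]_i$ anti-commutes with $\Z$ on wire $i$, and all other wires contribute commuting factors. Hence $\{\Z_i, \E\} = 0$, which is precisely the detection condition for the trap indexed by $q = i$. Therefore trap $i \in T$ detects $\E$, establishing the lemma.

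I do not anticipate a serious obstacle here, as the statement is essentially a bookkeeping consequence of the definitions: the main point is simply that the set of traps is \emph{complete}, i.e.\ it contains a $\Z$-type observable on every wire, so that any harmful $\X$ or $\Y$ factor is necessarily caught by the matching trap. The only subtlety worth stating explicitly is that such a family of traps genuinely exists---that for each $q$ one can prepare a tensor-product stabilizer input stabilized by $\G^\dagger \Z_q \G$---which is guaranteed by the footnote to Definition~\ref{def:verification:trap} (Lemma~4 of~\cite{KKLM22unifying}); given this, the detection argument itself is immediate from the commutation relations.
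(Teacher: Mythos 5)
Your proof is correct and follows essentially the same route as the paper: fix an index $i$ with $[\E]_i\in\{\X,\Y\}$, and observe that the trap at $i$ detects $\E$ because $\Z$ anti-commutes with $\X$ and $\Y$ on that wire while all other wires contribute commuting factors. The extra remark about existence of the trap input states (via the footnote to Definition~\ref{def:verification:trap}) is a sensible addition but not needed for the detection argument itself.
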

\begin{proof}
    By Definition~\ref{verification:def:harmful}, a harmful deviation $\E$ satisfies $W_{\X\Y}(\E) \cap Q \neq \emptyset$. Let $i$ be an index in this intersection. Since $[\E]_i \in \{\X, \Y\}$, the Pauli operator anti-commutes with the measurement basis, i.e., $\{\Z_i, \E\} = 0$. Thus, by definition of detection, the trap $C_i \in \mathcal{Q}$ detects $\E$.
\end{proof}

\subsection{A trap-based verification protocol}
\label{subsection:verification:protocol}
We can finally introduce the Verified Delegated Quantum Computation Protocol~\ref{protocol:verification}.
It allows the Client to verify her $(n,t)$-Clifford+MSI computation $C$ with $\BQP$ error $\bqp$.
The verification mechanism in this protocol uses the above trap construction. It defines a trap for each qubit in $Q$---i.e.~it uses the set of traps $\mathcal Q$---thereby ensuring all harmful deviations are detected.
Then, it follows the usual Test/Computation paradigm mentioned in the Introduction, in Figure~\ref{fig:verif-canvas}. The Client samples a random partition of test and computation rounds according to chosen parameters $d$ (number of \emph{computation rounds}) and $s$ (number of \emph{test rounds}), and chooses a threshold $w$ smaller than $s$: the number of tolerated test-round failures. For each test round, she delegates a trap at random. For computation rounds, she delegates the target computation. Both can be simply expressed as using MB-DQC on a different input state and choice of state injection, both of these being perfectly hidden from the Server, by security of Protocol~\ref{protocol:mblind_DQC}. When the delegation phase is over, the Client checks the outcomes of the test rounds to see if more than $w$ test rounds failed. That is, she checks how many traps have yielded outcome $1$. It amounts to summing such events over the whole set of test rounds and rejecting if the total is greater than $w$. When the computation is accepted, the result is obtained by performing a majority vote over the first outcome of each computation round.

\begin{protocoll}
\caption{Verified Delegated Quantum Computation}
\label{protocol:verification}
\begin{algorithmic}[0]
    \State \textbf{Client inputs:} $\x\in\bin^n$, $(n,t)$-Clifford+MSI computation $C$, protocol parameters $d, s$ and $w<s$
    \Comment{Number of computation/test rounds, and threshold}
    \State \textbf{Public Information:} Clifford structure of $C$, set of traps $\mathcal Q$, and $d, s, w$.

  \Procedure{{Client - Sample random permutation of test/computation rounds.}}{}
    \State Sample a random permutation $\perm$ of $[N]$. Then for $i\in \perm$:
    \State If $i\leq d$: set $C^{(i)} \leftarrow C$. \Comment{(Computation round)}
    \State If $i> d$: sample $C_{q_i}$ randomly from $\mathcal Q$. Set $C^{(i)} \gets C_{q_i}$. \Comment{(Test round)}
  \EndProcedure

  \Procedure{{Client-Server - Delegate rounds blindly.}}{}

  \State For $i\in \perm$, delegate $C^{(i)}$ using Magic-Blind DQC Protocol~\ref{protocol:mblind_DQC}, receive $\b^{(i)}$.
    \State Set $\b = \b^{(1)} || ... || \b^{(N)}$
  \EndProcedure

  \Procedure{{Client - Traps Check}}{}
    \State Compute the number of failed test rounds $\sum_{i>d}b^{(i)}_{q_i}$ and aborts if $\geq w$. Else continue.
  \EndProcedure

  \Procedure{{Client - Majority Vote}}{}
  \State If there exists $y\in \bin$ such that $\#\{\, i \leq d \;:\; b^{(i)}_1 = y \,\} > d/2$, then output $y$.
  \State Else, repeat the protocol.
\EndProcedure

\end{algorithmic}
\end{protocoll}

Theorem~\ref{theorem:verification security} is the main result of this paper: it captures the fact that Protocol~\ref{protocol:verification} implements Resource~\ref{resource:verification} up to a negligible distance that depends on the parameters chosen by the Client, essentially the number of test and computation rounds ($s$ and $d$), while giving the range of admissible values for the threshold $w$ (in order for the protocol to be secure). Moreover, it tolerates circuit-level noise with rate $p_{err} < w/s$.

\begin{theorem}
    \label{theorem:verification security}
    Let $C$ be a $\BQP$ computation expressed in the Clifford+MSI model on $n$ qubits with $t$ MSI steps and $\BQP$ error $\bqp$.
    Let Protocol \ref{protocol:verification} be instanciated with $C$ and $d, s\in \naturals$, with $N=d+s$. Also, let  $k=|\mathcal Q|=1+t$ and $\alpha=\tfrac{1-2\bqp}{2-2\bqp}$.
    Then, for any chosen threshold $w\in\naturals$ such that $0\leq \tfrac{w}{s} < \tfrac{\alpha}{k}$,
    Protocol~\ref{protocol:verification} $\epsilon$-constructs Resource~\ref{resource:verification} in the Abstract Cryptography framework with $\epsilon = \max(\epsilon_{cor}, \epsilon_{sec})$ where $\epsilon_{cor}, \epsilon_{sec}$ are negligible in $N$.

In addition, for an honest-but-noisy Server with circuit-level noise rate $p_{err}<w/s$, the correctness error $\epsilon_{rob}$ is negligible in $N$.
\end{theorem}

Below, we proceed with a proof of correctness: it shows that when the Server is honest and perfect, no traps are triggered and there is only a negligible probability that the outcome of the majority vote differs from the output of the Ideal Resource. Then, we give a proof of robustness, analyzing the case of an honest-but-noisy Server: we similarly show that fewer than $w$ out of $s$ test rounds fail as long as $p_{err}< w/s$, which is intuitive given the notion of circuit-level noise\footnote{Note that here we voluntarily do not explore the case where the noise alters enough computation rounds---to corrupt the outcome of the majority vote---without triggering enough test rounds---to be detected. Indeed, dealing with this kind of noise is identical to dealing with a malicious adversary that wants to corrupt the outcomes while staying undetected, so it is taken care of in the security proof.}.
Finally, we prove security against arbitrarily malicious behavior from the Server that might try to corrupt the outcome of the majority vote without triggering more than $w$ test rounds.
This concludes the proof of composability and noise robustness for our verification protocol, and thus the main result of this work.

\begin{proof}[Proof of correctness]
    Assume the Server is honest. By construction of the test rounds, all traps are deterministic
    in honest executions of Protocol~\ref{protocol:mblind_DQC}, hence no trap is triggered and
    the Client reaches the output step.
    It remains to bound the probability that the Client's majority vote on the $d$ computation
    rounds outputs an incorrect decision bit. Let $z^\star \in \{0,1\}$ denote the correct
    decision bit for the delegated $\BQP$ computation $C$ (i.e., $z^\star=1$ if
    $x\in L$ and $z^\star=0$ otherwise).
    Also, for $i\leq d$ let $\b^{(i)} \in \{0,1\}^n$ be the outcome of the measurement of all qubits and $y^{(i)} = b^{(i)}_1$ be the first bit, namely the output bit of computation round $i$.
    By correctness of Protocol~\ref{protocol:mblind_DQC},
    $\b^{(i)}$ follows the distribution induced by $C$, and in particular $y^{(i)}$ follows the marginal distribution of the first qubit.
    Indeed, since $C$ decides the language with error at most $c < 1/2$,
    we have
    \[
    \Pr[y^{(i)} \neq z^\star] \le c.
    \]
    Let $Y := \sum_{i=1}^d \mathbf 1[y^{(i)} \neq z^\star]$ be the number of incorrect outputs.
    Then $Y$ is stochastically dominated by $X\sim \mathrm{Bin}(d,c)$, hence
    \[
    \Pr\!\left(Y > \tfrac d2\right) \le \Pr\!\left(X > \tfrac d2\right).
    \]
    By Hoeffding's inequality (Lemma~\ref{lemma:hoeffding:binomial}),
    \[
    \Pr\!\left(X > \tfrac d2\right)\le \exp\!\left(-2d\left(\tfrac12-c\right)^2\right).
    \]
    Therefore, conditioned on the Server being honest and the trap test passing (which occurs
    with probability~1), the Client's majority vote outputs $z^\star$ except with probability at
    most $\epsilon_{\mathrm{cor}} := \exp\!\left(-2d\left(\tfrac12-c\right)^2\right)$.
    \end{proof}

\begin{proof}[Proof of robustness.]
    In the presence of noise, test rounds might fail even for an honest Server. A rejection is obtained if more than  $w$ test rounds fail.
    Let $Y$ be the random variable that denotes the number of failed test rounds. Because it is stochastically dominated by $X\sim \mathrm{Bin}(s,p_{err})$, the above expressions can be re-used: as long as $p_{err}< \tfrac w s$, the probability that more than $w$ rounds are affected is upper-bounded by $\epsilon_{rob} :=\exp\!\left(-2(p_{err}-\tfrac w s)^2 s\right)$.

    This robustness guarantee holds for any circuit-level noise that is independent of the Client's secret parameters throughout the execution of the MB-DQC protocol. This condition is naturally satisfied for Server-side noise. For Client-side circuit-level noise, the independence from the secret needs to be assumed. In such a case, security is maintained because the Client-side noise could equally be seen as the first deviation of a malicious server. We refer the interested reader to \cite{KLMO25plugging} for managing Client-side noise that depends on the Client's secrets.

\end{proof}

\begin{proof}[Security proof]
    Here, we prove the security of Protocol~\ref{protocol:verification} by introducing Simulator~\ref{simulator:vdqc} below.
\begin{simulatorr}
    \caption{Verified Delegated Quantum Computation}
    \label{simulator:vdqc}
    \begin{algorithmic}[0]

    \Procedure{Simulator emulates Client.}{With same procedures as Protocol~\ref{protocol:verification}}
    \State Choose an arbitrary $n$-bit input string $\x_\emptyset$.
    \State Sample random permutation of test/computation rounds.
    \State Delegate rounds blindly to the Server with $\x^{(i)}=\x_\emptyset$ on computation rounds.
    \State Check Traps. Set $c=0$ if less than $w$ were triggered, $1$ else.
    \EndProcedure

    \Procedure{{Simulator - report to Ideal Resource}}{}
    \State Send $c$ to the Ideal Resource.
    \EndProcedure
    \end{algorithmic}
\end{simulatorr}     The Simulator has access to what is leaked by the protocol: the size of the input $n$, the decomposition of the target computation in Clifford gates $\C_1...\C_{t+1}$, the resulting $n+t$-Clifford map $\G$ to build the input state of traps, and $d, s, w$ the parameters of the protocol.

Its only purpose is to infer the Server's honesty as faithfully as in the Real World, encode that in a bit $c$, and send it to the Ideal Resource. In other words, the Simulator does not need to perform the computation $C$ on input $\x$. As a consequence, it does not need $\x$, and can instead start with any $n$-bit input $\x_\emptyset$. Yet, one must check that this replacement does not affect its ability to detect a malicious Server. This follows directly from the composability of Protocol~\ref{protocol:mblind_DQC}, which ensures that the cases where $\x$ or $\x_\emptyset$ are used as inputs are indistinguishable from the transcripts available to the Server.

    Below, we express the state of the total system after interaction with an arbitrarily malicious Server in Protocol~\ref{protocol:verification}. When the permutation symbol $\perm$ appears as a superscript of a given quantity, it refers to the fact that the quantity is defined for a fixed permutation $\perm$ of computation rounds and test rounds, together with a random assignment of traps to test rounds. In fact, $\perm$ describes the configuration sampled at random by the Client at the start of the protocol. With a slight abuse of notation, we will simply refer to $\perm$ as a choice of permutation, implicitly meaning a choice of permutation and trap configuration.

    \paragraph{Using the Reduction to Pauli Deviations.}
    Since Protocol~\ref{protocol:mblind_DQC} is composably secure in the AC framework, it is in particular composable in parallel. Therefore, since it is used $N$ times in parallel in Protocol~\ref{protocol:verification}, it benefits from the security properties mentioned in Section~\ref{subsection:blind:MB-DQC}, in particular the Reduction to Pauli Deviations Lemma~\ref{lemma:mblind:Pauli deviations} : for any malicious Server deviation on $N$ parallel usages of Protocol~\ref{protocol:mblind_DQC}, there exists a convex combination of Pauli operators on $N\times (n+t)$ qubits such that the total state after interaction with the Server is
    \begin{align}
        \label{eq:verif:after Pauli}
        \rho_{real}^\perm &=
        \sum_{\b}
        \sum_{\E\in\Pauli_{N\times(n+t)}}
        \abs{\alpha_{\E}}^2
        \times
            \ketbra \b
            \circ
            \E
            [\rho_{cor, \b}^{\perm}]
        \otimes \U_{\E}\left[
            \ketbra0^{\otimes w}
        \right],
    \end{align}
    where we refer to the definitions of Lemma~\ref{lemma:mblind:Pauli deviations} for
        $\rho_{cor, \b}^{\perm}
        =
        \bigotimes_{i=1}^N
        \rho_{cor, \b^{(i)}, C^{(i)}}$.

    \paragraph{Output and abort probability analysis.}
    After interaction with a malicious Server, in both worlds, the traps are checked by analyzing the outcomes of test rounds ($i>d$) of the permutation $\perm$.
    In this step, we want to express the acceptance probability.
    Note that Equation~\ref{eq:verif:after Pauli} implies that for a given permutation $\perm$ and for a fixed Pauli deviation $\E$, the distributions of outcomes on test rounds and of those on computation rounds are independent.

    We thus introduce a random variable $Y^{(\sigma)}_\E$ counting the number of failed test rounds when deviation $\E$ is applied and the choice of permutation is $\sigma$.
    In terms of that random variable, the acceptance probability for permutation $\perm$ when a fixed deviation $\E$ is applied can be expressed as
    \begin{align}
        p_{\E}^{(\perm)}
        &=
        \Pr[Y^{(\sigma)}_\E < w].
    \end{align}
    It is identical in both worlds, because of the above comment. It does not depend on the state used in computation rounds ($\rho$ for the Client in the Real World, $\rho_\emptyset$ for the Simulator in the Ideal World).

    Finally, in the Real World, a majority vote is computed on the decision bits
    $y^{(i)} := \b^{(i)}_1$ of the computation rounds $i \le d$.
    A wrong outcome occurs if the majority vote differs from the correct decision
    bit $z^\star \in \{0,1\}$.
    Let us define a random variable $Z^{(\perm)}_\E$ counting the number of such failures on computation rounds, whether they are caused by the deviation, or simply the inherent $\BQP$ error.
    In terms of this random variable, the probability that a bad result is output after a majority vote, for a given permutation $\perm$ and fixed deviation $\E$ is
    \begin{equation}
        q^{(\perm)}_{\E}
        =
        \Pr[Z^{(\sigma)}_\E > \frac d 2].
    \end{equation}

    The final output state in the Real World can thus be written as
    \begin{multline}
        \rho^{\perm}_{out,\mathrm{real}}
        =
        \sum_{\E\in\Pauli_{N\times(n+t)}}
        |\alpha_\E|^2
        \Big[
            p^{(\perm)}_{\E}\ketbra{\Acc}
            \otimes
            \big(
                q^{(\perm)}_{\E}\ketbra{z^\star\oplus 1}
                +
                (1-q^{(\perm)}_{\E})\ketbra{z^\star}
            \big)
        \\
        +
            (1-p^{(\perm)}_{\E})
            \ketbra{\Rej}\otimes\ketbra{\perp}
        \Big]
        \otimes
        \U_\E\!\left[
            \ketbra0^{\otimes w}
        \right].
    \end{multline}

    Regarding the Ideal World, there is no majority vote: conditioned on acceptance, the Resource outputs the correct decision bit $z^\star$, and otherwise outputs
    $\perp$. As already mentioned, the acceptance probability is the same, because of independence of test and computation rounds.
    Hence,
    \begin{multline}
        \rho^{\perm}_{out,\mathrm{ideal}}
        =
        \sum_{\E\in\Pauli_{N\times(n+t)}}
        |\alpha_\E|^2
        \Big[
            p^{(\perm)}_{\E}\ketbra{\Acc}\otimes\ketbra{z^\star}
            +
            (1-p^{(\perm)}_{\E})
            \ketbra{\Rej}\otimes\ketbra{\perp}
        \Big]
        \\
        \otimes
        \U_\E\!\left[
            \ketbra0^{\otimes w}
        \right].
    \end{multline}

    \paragraph{Reducing the Server deviation to a single $N\times(n+t)$-qubit Pauli.}
    Now, without loss of generality, we make the following two reductions: first, we observe that the working register is always unentangled from the rest of the state that constitutes the output of the protocol. It therefore does not contribute to any distinguishing advantage and can be traced out. Second, since the output contains a convex combination of possible Pauli deviations, it is enough to consider the case where the Server applies a single Pauli deviation $\E$ for which the distinguishing probability is maximal. Together, for a choice of permutation $\perm$ and a fixed deviation $\E$, we obtain
        \begin{multline}
            \rho_{out, real}^\perm
            =
            p_{\E}^{(\perm)}\ketbra\Acc
                \otimes\big(
                    q^{(\perm)}_{\E}\ketbra{z^\star\oplus 1}
                    +
                    (1-q^{(\perm)}_{\E})\ketbra{z^\star}
                \big)
                \\
                +
                (1-p_{\E}^{(\perm)})
                \ketbra\Rej\otimes \ketbra \perp
        \end{multline}
        and
        \begin{equation}
            \rho_{out, ideal}^\perm
            =
                p_{\E}^{(\perm)}\ketbra\Acc
                \otimes\ketbra{z^\star}
                +
                (1-p_{\E}^{(\perm)})
                \ketbra\Rej\otimes \ketbra \perp.
        \end{equation}

         \paragraph{From the point of view of the Distinguisher,} the choice of the permutation and trap configurations $\perm$ is not known. The state is thus a probabilistic mixture of all possible permutations $\perm$ uniformly. Denoting $S$ the set of permutations of $[N]$ and trap configurations, we have
        \begin{equation}
            \rho_{out, ideal}=
            \frac{1}{|S|}
            \sum_{\perm \in S} \rho_{out, ideal}^\perm
        \end{equation}
        and
        \begin{equation}
            \rho_{out, real}=
            \frac{1}{|S|}
            \sum_{\perm \in S} \rho_{out, real}^\perm.
        \end{equation}

    \paragraph{Evaluating the distinguishing advantage.}
    The distinguishing advantage is the maximum probability that the Distinguisher discriminates the correct scenario by observing the transcripts $\rho_{out, ideal}$ and $\rho_{out, real}$, with a maximization taken over the choice of cheating strategies.
    This can thus be captured by the trace distance between the two transcripts, maximized over all the possible Pauli deviations\footnote{The maximization is only over the Server choice of deviation because the working register has no impact on the output state, see the above comment. Otherwise, it would have been a maximization over the deviation and the content of the working register.}. It is indeed equivalent to the concept of the diamond norm. Hence, we can write the distinguishing advantage $p_d$ as
    \begin{align}
        p_d
        &= \max_{\E\in\Pauli_{N\times(n+t)}}
        \norm{\rho_{out, real}-\rho_{out, ideal}}_{\Tr\strut}
        \\
        &=
        \max_{\E\in\Pauli_{N\times(n+t)}}
        \frac{1}{|S|}
        \norm{\sum_{\perm \in S} \rho_{out, real}^\sigma-\rho_{out, ideal}^\sigma}_{\Tr\strut}
        \\
        \label{verif:proof:triangle ineq.}
        &\leq
        \max_{\E\in\Pauli_{N\times(n+t)}}
        \frac{1}{|S|}
        \sum_{\perm \in S}
        \norm{
             \rho_{out, real}^\sigma-\rho_{out, ideal}^\sigma}_{\Tr\strut}
        \\
        \label{verif:proof:final p_d}
        &=
        \max_{\E\in\Pauli_{N\times(n+t)}}
        \frac{1}{|S|}
        \sum_{\perm \in S}
            p_{\E}^{(\perm)}\times q_{\E}^{(\perm)},
    \end{align}
    where eq.~\ref{verif:proof:triangle ineq.} follows from the triangle inequality, and eq.~\ref{verif:proof:final p_d} follows from the definition of $p_\E^{(\perm)}$ and $q_\E^{(\perm)}$.
    We can re-write the above expression as
    $p_d = \max_\E \Pr[Y < w \wedge Z > d/2]$, where $Y$ is the random variable counting
    the number of triggered test rounds and $Z$ counts the number of computation rounds
    whose decision bit is incorrect, under a fixed deviation $\E$ and a uniformly random
    permutation $\perm$.
    Using Lemma~\ref{lemma:security error} proved in Section~\ref{section: verification proofs}, this is negligible in $s$ and $d$ as long as $\tfrac w s< \tfrac\alpha k$, where $\alpha = (1-2\bqp)/(2-2\bqp)$ and $k=|\mathcal Q|=1+t$. This proves the exponential security of the protocol.
\end{proof}

\begin{restatable}[Negligible security error]{lemma}{securityerror}
    \label{lemma:security error}
    Let $C$ be a computation with $\BQP$ error $\bqp$, and $\alpha=(1-2\bqp)/(2-2\bqp)$.
    Let $\mathcal Q$ be a set of traps that detects any harmful deviation, and let $k=|\mathcal Q|$.
    Then, using the notations of Protocol~\ref{protocol:verification}, as long as $\tfrac w s< \tfrac\alpha k$,
     for any Pauli deviation chosen by the Server the
    probability that the deviation triggers less than $w$ rounds and affects more than $d/2$ computation rounds is negligible in $d, s$.
\end{restatable}

\subsection{Towards a verification framework in the circuit-model
}
\label{subsection:framework generalized}
Note that the concepts of Section~\ref{subsection:framework single qubit} can naturally be generalized to constitute a modular framework for composable verification in the circuit-model, similar to \cite{KKLM22unifying} for the measurement-based setting. We hereby outline this generalization, which will be explored more formally and in more detail in future work.

\subsubsection{From single-qubit traps to multi-qubit traps}
A \emph{trap} in this work is simply a qubit index $q\in \mathcal Q$ aiming to yield a deterministic outcome for qubit $q$. A \emph{generalized trap} is a subset of qubit indices $Q\subset \mathcal Q$ aiming to yield a deterministic result for the parity of measurement outcomes of qubits in $Q$.
\begin{definition}[Generalized trap]
    A \emph{generalized trap} $Q$ is a subset of qubit indices $Q\subset \mathcal Q$. An input state for trap $Q$ is a $+1$ eigenstate of $\stab_Q=\G^\dagger \Z_Q\G$ where $\Z_Q = \prod_{q\in Q} \Z_{q}$.
\end{definition}

A generalized trap $Q$ detects Pauli deviations that anti-commute with $\Z_Q$.
\begin{definition}[Detection property of generalized traps]
    Using the same notations as Section~\ref{subsection:framework single qubit}, trap $Q$ detects deviation $\E$ if $\{\Z_Q, \E\}=0$, and does not detect it if $[\Z_Q, \E]=0$.
\end{definition}
Here, it appears that the traps introduced in~\ref{subsection:framework single qubit} are single-qubit traps, where $|Q|=1$.
\paragraph{Adapting the traps check.}
Following this generalization, the \textsc{Traps Check} from Protocol~\ref{protocol:verification} operation can be re-written. Indeed, introduce $\tau_Q(\b)=\bigoplus_{q\in Q} b_q$. Then, if we write $Q_i$ the trap for test round $i$, then the sum in \textsc{Traps Check} becomes $\sum_{i>d} \left(
        \tau_Q(\b^{(i)})
    \right)$.

\paragraph{Adapting the security guarantees.}
Given a set of traps, the same security properties hold if Lemma~\ref{lemma:harmful deviations are detected} can apply, meaning if any harmful deviation is detected by at least one trap of the set. We can thus state the following theorem, informal, adapted version of Theorem 6.

\begin{theorem}[Informal]
    Let $R$ be a set of generalized traps $Q_1, ..., Q_r$. Let Protocol $\ref{protocol:verification}'$ be a variant of Protocol~\ref{protocol:verification} where, in test rounds, the Client chooses a trap $Q_i$ randomly from $R$, delegates it, and uses the adapted trap-check operations above. Then, if the traps $Q_1, ..., Q_r$ detect all harmful deviations, the same security guarantees as in Theorem 6 apply.
\end{theorem}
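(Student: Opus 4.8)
The plan is to exploit the modularity of the proof of Theorem~\ref{theorem:verification security}: essentially nothing in that argument depends on the specific single-qubit trap construction except the detection guarantee supplied by Lemma~\ref{lemma:harmful deviations are detected}. I would therefore re-run the three parts of that proof (correctness, robustness, security) almost verbatim, replacing every invocation of Lemma~\ref{lemma:harmful deviations are detected} by the hypothesis of the present theorem, namely that $Q_1,\ldots,Q_r$ jointly detect every harmful deviation. The \emph{harmful} classification itself (Definition~\ref{verification:def:harmful}) is a property of the Pauli deviation alone and is unchanged, so it can be reused as-is.

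First, for correctness and robustness, I would observe that these parts never use which qubits a trap acts on. A generalized trap $Q$ is still a magic-free instance of $\class$ whose input is a $+1$ eigenstate of $\stab_Q=\G^\dagger\Z_Q\G$; hence in an honest execution the output is a $+1$ eigenstate of $\Z_Q=\prod_{q\in Q}\Z_q$, so the parity $\tau_Q(\b)=\bigoplus_{q\in Q}b_q$ is deterministically $0$. Consequently no trap is triggered in a perfect honest run, and the Hoeffding/majority-vote bounds on $\epsilon_{cor}$ and $\epsilon_{rob}$ carry over unchanged, as they concern only the computation rounds and the aggregate noise rate.

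Second, for security, I would reuse the Reduction to Pauli Deviations (Lemma~\ref{lemma:mblind:Pauli deviations}), which is a property of Protocol~\ref{protocol:mblind_DQC} and is entirely independent of the trap design: any malicious Server strategy over the $N$ parallel delegations reduces to a convex combination of Pauli deviations $\E$ applied before measurement. The only modification in the downstream analysis is the detection predicate induced by the adapted traps check: a deviation $\E$ now triggers trap $Q_i$ exactly when $\{\Z_{Q_i},\E\}=0$. The hypothesis guarantees that for every harmful $\E$ there exists at least one $Q_i\in R$ with $\{\Z_{Q_i},\E\}=0$, which is precisely the single property consumed by Lemma~\ref{lemma:security error}. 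Since the Client samples the trap uniformly from $R$ in each test round, the per-round detection probability of a harmful deviation is bounded below by $1/r$, so the combinatorial bound of Lemma~\ref{lemma:security error}---the probability that $Y<w$ while $Z>d/2$ under a uniformly random partition---still decays exponentially in $s$ and $d$, yielding the same negligible $\epsilon_{sec}$. Feeding this back through Simulator~\ref{simulator:vdqc} then gives the AC-security conclusion exactly as in Theorem~\ref{theorem:verification security}.

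The main obstacle I anticipate is making the security-error argument genuinely insensitive to the trap family. In the single-qubit case the per-round detection probability is bounded below by $1/(n+t)$; here it becomes $1/r$, and one must verify that the bound in Lemma~\ref{lemma:security error} depends only on the \emph{existence} of a strictly positive per-round detection probability (together with the random assignment of rounds to test/computation), and not on its exact value nor on any accidental independence structure of the single-qubit construction. The delicate point is therefore to re-examine the proof of Lemma~\ref{lemma:security error} and confirm that replacing $1/(n+t)$ by $1/r$ merely rescales the constant in the exponent, so that negligibility in $N$ is preserved for \emph{any} family $R$ that detects all harmful deviations.
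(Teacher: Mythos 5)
Your proposal is correct and follows essentially the same route as the paper: the paper justifies this informal theorem precisely by observing that the proof of Theorem~\ref{theorem:verification security} consumes the trap design only through the detection guarantee (Lemma~\ref{lemma:harmful deviations are detected}), which your hypothesis replaces, while correctness/robustness and the Reduction to Pauli Deviations (Lemma~\ref{lemma:mblind:Pauli deviations}) are untouched. Your final check---that in the proof of Lemma~\ref{lemma:security error} the single-qubit detection probability $1/k$ with $k=n+t$ is simply replaced by $1/r$, rescaling the exponent without breaking negligibility---is exactly the point the paper's sketch implicitly relies on when it asserts that ``the same security properties hold'' whenever the trap set detects all harmful deviations.
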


\subsubsection{Combining compatible traps in a single test run}

Furthermore, two traps $Q_1$ and $Q_2$ can have a compatible input state, meaning a tensor product of $n+t$ single-qubit states stabilized by both $\stab_{Q_1}$ and $\stab_{Q_2}$; then the traps can be \emph{merged}: the Client delegates both traps in the same test run, and has to check that \emph{none} was triggered. Indeed, let $\rho\otimes\rho_\A$ be an input state for both traps (which exists by assumption). The Client can delegate the test run $\C_1...\C_{t+1},\A_1...\A_t$ as usual.

Note that this is not always possible. If the stabilizers of traps $Q_1$ and $Q_2$ do not allow it, then they have to be delegated in different test runs. The following definition captures trap compatibility.

\begin{definition}[Compatible traps]
    Generalized traps $Q_1, Q_2$ are compatible if their stabilizers $\stab_{Q_1}, \stab_{Q_2}$ commute on each index. Indeed, only then can there exist an input state consisting of a tensor product of $n+t$ single-qubit states stabilized by both.
\end{definition}
\paragraph{Adapting the traps check.}
Now, let $R=Q_1, ..., Q_r$ be a set of compatible traps. Then, define $\tau_R(\b)= \bigwedge_{Q\in R} \tau_Q(\b)$. The sum in the \textsc{Traps Check} operation can be written
$\sum_{i>d}
\left(
    \tau_R(\b^{(i)})
\right)
$. As a consequence, the same security properties hold if the initial set of traps (without merging) detects all harmful deviations, since such a deviation would trigger at least one of the traps.

\paragraph{Merging traps is equivalent to graph coloring.}
Quite naturally, if $Q_1$ and $Q_2$ are compatible, and $Q_2$ and $Q_3$ are compatible, then $Q_1$ and $Q_3$ are compatible, so the three traps can be done in a single test run. This merging procedure thus reduces the number of types of test runs, and finding the optimal merging procedure can be reduced to a graph coloring problem \cite{BNZ25sampling,VYI19measurement}—finding the minimum number of test runs is thus equivalent to finding the chromatic number of a graph, which is NP-hard.

\paragraph{Connection with \cite{B18how}}
This allows us to showcase \cite{B18how} as an instance of Protocol~\ref{protocol:mblind_DQC}, where the initial circuit was compiled by writing Hadamard gates as $\H=\H\T\T\H\T\T\H\T\T\H$, $\P=\T\T$. It compiles the initial circuit by introducing even more magic-state injection steps. As a direct consequence, it increases the number of qubits that the Client has to send at each round of MB-DQC. Another direct, yet naive, consequence is that it would increase the number of types of test runs. This is true if we stick to the vanilla definition of our traps in Section~\ref{subsection:framework single qubit}. But by exploiting the fact that \emph{traps can be merged}, we can show that this compilation allows the traps to be merged into \emph{only two different types of test runs}. Indeed, after Broadbent's compilation trick, only Hadamard gates and $\CNOT$ remain in the circuit.
It can be shown that the graph corresponding to the traps merging here is a bipartite graph, with a chromatic number of $2$, which explains the protocol with two types of test runs initially obtained by Broadbent in \cite{B18how}.

\section{Discussion}
\label{section:Discussion}

\paragraph{Verification in the Clifford+MSI model.}
The central contribution of this paper is to provide a framework for verification protocols that simultaneously achieves composable security, exponentially small soundness error against malicious behavior, and robustness to circuit-level noise, while being tailored to the Clifford+MSI model.
Compared to the recent result of \cite{BN25noise}, which focuses on noise robustness for verification in the circuit-model, our work additionally provides a composable and modular construction for verification protocols.
This closes the conceptual and practical gaps that had been widening between circuit- and MBQC-model-based verification strategies.

\paragraph{Constructions for composable delegation.}
Along the way, we introduced a delegation protocol that is sufficient to hide whether a computation or a test is being delegated, with full composable security inherited from its sub-components. Working in the Clifford+MSI model, we derived a blind state-injection protocol that hides the potential magic injected by the Client at each layer, together with a protocol that delegates a Clifford circuit and measurements on a blinded quantum state. Since these primitives are composably secure by construction, they can be reused in other protocols and optimized independently, both theoretically and in practical implementations.

\paragraph{A framework for verification in the circuit-model.}
By generalizing the trap design, we have shown that there exists a unified stabilizer-based formalism for circuit-model verification. It allows trap designs on subsets of qubits (rather than single qubits) and supports the combination of test rounds by merging compatible traps. This modularity exposes a whole family of verification protocols—including \cite{B18how,BN25noise}—all enjoying the same security guarantees. This has the potential to enable Clients to derive trap-based protocols optimized for noise-robustness (see paragraph below) and to tailor verification schemes to hardware constraints, in direct analogy with dummyless verification in MBQC \cite{KKLM23asymmetric}.

\paragraph{Noise-robustness and trap engineering.}
We highlight that the verification framework presented here achieves noise-robustness for circuit-level noise, adopting the same noise model as established in \cite{LMKO21verifying, BN25noise}. In our construction, the maximum tolerated noise rate is bounded by $\alpha \cdot \rate$, where $\alpha$ is a constant related to the $\BQP$ error and $\rate$ represents the \textit{trap detection rate} (i.e., the probability that a harmful Pauli-error is detected by at least one of the traps). In the current case, which is the simplest, the protocol chooses uniformly between $k$ types of test rounds, and thus this rate is $\rate = 1/k$. For the specific construction of Protocol \ref{protocol:verification}, we have $\rate = 1/(1+t)$, where $t$ is the number of state injection layers, so the maximum tolerated noise rate is bounded by $\alpha/(1+t)$. On the other hand, the protocols in \cite{BN25noise} and \cite{LMKO21verifying} use two types of test runs only, so they achieve $k=2$, meaning $\rate=1/2$, and tolerate noise up to $\alpha/2$.

This contrast suggests that noise-robustness can be significantly improved by engineering traps to increase $\rate$. While a direct way to improve the rate is to reduce the number of test types $k$---a task made possible by the flexibility of our framework as teased in Section \ref{subsection:framework generalized}---our modular approach enables even more sophisticated optimizations.
Specifically, the detection rate $\rate$ does not need to be restricted to the naive $1/k$ lower bound derived from a uniform choice of test rounds. As showcased in \cite{KKLM22unifying} for the MBQC model, within the stabilizer formalism $\rate$ can be determined through a more refined analysis of the trap-based construction's detection capabilities. By applying similar analytical tools to our circuit-model framework, one could potentially improve noise-robust verification to meet the $\rate=1/2$ detection rate of \cite{BN25noise} without relying on the compilation trick of \cite{B18how} that introduces significant ancilla overhead. As a crucial consequence, this would achieve the same level of noise-tolerance more efficiently.

\paragraph{Magic-blindness is sufficient for verification. Is it necessary?}
We established \emph{magic-blindness} as a sufficient notion of blindness for
 verification in the circuit-model. This raises a more fundamental question: \emph{must a verification protocol necessarily hide some quantum computational resource from the prover?} In our construction, the hidden resource is \emph{magic} (non-stabilizerness), because it is what enables quantum advantage in Clifford+MSI architectures, but there is no a priori reason why this must be the only possibility. This reframes verification as a \emph{resource-hiding task}: is there a different, or minimal, quantum resource whose blindness is required and sufficient to make malicious deviations detectable? If such a minimal resource exists, it would imply fundamental lower bounds on verification overhead. More generally, this question opens the door to a \emph{resource theory of verification}.

\paragraph{Acknowledgements}
Authors acknowledge Elham Kashefi, Dominik Leichtle, and Luka Music for fruitful discussions throughout the work, and Rajarsi Pal for pointing out the link between optimal traps merging procedures and graph coloring. Authors acknowledge funding from the Hybrid Quantum Initiative (HQI) supported by France 2030 under ANR grant ANR-22-PNCQ-0002.

\bibliography{qubib.bib} \bibliographystyle{linksen-sorted}

\appendix

\section{Proofs for Blindness}
\subsection{Correctness of Blind-State Injection}
\label{appendix: correctness of blind-gate}
\begin{proof}[Proof of correctness of Protocol~\ref{protocol:blind-gate}]
    When both parties behave honestly, we here show that the output of the protocol is identical to the one of the Ideal Resource.
    \paragraph{State received by the Server}
    Below, we write the state held by the Server after receiving encrypted qubits from the Client, including the ancilla register. The Client samples $n$-bit strings, additional encryption bits on the $n+1$-th index, and simply writes the total $n+1$-bit encryption keys $\a, \r$.
    Taking into account that the ancillary state (in register $n+1$) is pre-rotated by $\theta$, the Server receives
    \begin{align}
        \rho_{in}
        &=
        \X^{\a}
        \Z^{\r}
        \circ
        \Z_{n+1}(\theta)
        [
            \rho\otimes \rho_\A
        ]
    \end{align}
    \paragraph{State Injection by the Server.}
    The Server starts by performing the Clifford circuit $\C$ on the first $n$ qubits, and applies $\F$ (CNOT followed by a SWAP) on the total system.
    \begin{align}
        \rho_{inj} &=
        \F
            \circ
            \C
            [
                \rho_{in}
            ]
    \end{align}

    \paragraph{Measurement and Rotation.}
    The Server then measures qubit $n+1$ in the computational basis. Assuming it yielded outcome $b$, the Client computes $b'$ as mentioned in the protocol, computes the appropriate angle and blinds it, and the Server receives $\delta_{b'}$. In total, from the Server's point of view we can write that the system is in the state $\Z^{\dagger}_n(\delta_{b'})[\rho_{inj}]$ on which we apply the projector $\ketbra b$ on the $n+1$-th qubit. However since the Client stores the output bit as $b'$, the total system is actually in the state
    \begin{align}
        \rho_{dec}
        &=
        \ketbra{b'}{b}_{n+1}
        \circ
        \Z_n^\dagger(\delta_{b'})
        \left[
            \rho_{inj}
            \right]
        \end{align}

        \paragraph{Commuting the encryption.}
        We now re-write the above state by commuting the encryption.
        In the first line, we simply expand the above expression. In the second line, we commute the Pauli encryption through the Clifford circuit $\C$, then $\F$, on $n+1$ qubits. Note that this gives the Pauli $\Enc{\a'}{\r'}$ that the Client computes as well, in the protocol. Also, the $\theta$ rotation acts on the $n+1$-th bit, so it commutes trivially through $\C$ (which acts on the $n$ first qubits only). Then, through $\F$ it commutes with the $\CNOT$ since it acts on qubit $n+1$ which is the controlled qubit, and after the $\SWAP$ it ends up on the $n$-th qubit.
        This results in
        \begin{align}
            \rho_{dec}
            &=
            \ketbra{b'}{b}_{n+1}
            \circ
            \Z_n^\dagger(\delta_{b'})
            \circ
            \F
            \circ
            \C
            \circ
            \X^{\a}
            \Z^{\r}
            \circ
            \Z_{n+1}(\theta)
        [
            \rho\otimes \rho_\Apattern
        ]
        \\
            &=
            \ketbra{b'}{b}_{n+1}
            \circ
            \Z_n^\dagger(\delta_{b'})
            \circ
            \X^{\a'}
            \Z^{\r'}
            \circ
            \Z_{n}(\theta)
            \circ
            \F
            \circ
            \C
        [
            \rho\otimes \rho_\Apattern
        ]
        \end{align}
        And now, we are interested in commuting the encryption through the $\delta$ rotation on the $n$-th qubit. In the first line we just use the fact that this rotation acts on the $n$-qubit only, so the encryption on the other qubits commute trivially. On the second line, we use the fact that $\Z$ commutes through $\Z$-rotation and $\X$ commute up to a sign flip of the angle, and in the third line we just combine the operators again to have a compact notation. Finally in the last three lines we use the fact that $\Z(\alpha)\circ \Z(\beta) = \Z(\alpha+\beta)$, and replaced $\delta$ by its definition, which cancels out the $(-1)^{a_n'}$ sign introduced by commuting the encryption.
        \begin{align}
            \Z_n^{\dagger}(\delta) \circ \X^{\a'}\Z^{\r'} \circ \Z_{n}(\theta)
            &=
            \left(\bigotimes_{i\neq n}\X^{a'_i}\Z^{r'_i}\right)
            \otimes
            \left(\Z^{\dagger}(\delta) \circ \X^{a_n'}\Z^{r_n'}\right)
            \circ \Z_{n}(\theta)
            \\
            &=
            \left(\bigotimes_{i\neq n}\X^{a'_i}\Z^{r'_i}\right)
            \otimes
            \left(
                \X^{a_n'}\Z^{r_n'}
                \circ
                \Z^{\dagger}((-1)^{a_n'}\delta)
            \right)
            \circ \Z_{n}(\theta)
            \\
            &=
            \Enc{\a'}{\r'}
            \circ
            \Z_n^{\dagger}((-1)^{a_n'}\delta)
            \circ \Z_{n}(\theta)
            \\
            &=
            \Enc{\a'}{\r'}
            \circ
            \Z_n^{\dagger}((-1)^{a'_n}\times(-1)^{a_n'}(\phi_{b'}+\theta)-\theta)
            \\
            &=
            \Enc{\a'}{\r'}
            \circ
            \Z_n^{\dagger}(\phi_{b'})
        \end{align}

        Altogether, we can thus write
        \begin{align}
            \rho_{dec}
            &=
            \ketbra{b'}{b}_{n+1}
            \circ
            \X^{\a'}
            \Z^{\r'}
            \circ
            \Z_n^\dagger(\phi_{b'})
            \circ
            \F
            \circ
            \C
        [
            \rho\otimes \rho_\Apattern
        ]\; .
        \end{align}
        The next re-writing step is to interpret the Client's decoding of the measurement outcome as a decryption of the quantum state before measurement. This can be done by changing variables for $b$, as $b\mapsto b\oplus a_{n+1}'$. The consequence is that $b'$ becomes $b$ and $\ketbra{b'}{b}$ becomes $\ketbra{b}{b'} = \ketbra{b} \circ \X^{a_{n+1}'}$: flipping the measurement outcome is equivalent to applying a Pauli $\X$ upon measurement in computational basis. Furthermore, for the sake of homogeneity, we note that we can write it as a full decryption $\Dec{a'_{n+1}}{r'_{n+1}}$ as $\ketbra{b} \circ \X^{a_{n+1}'} = \ketbra{b} \circ \Z^{r'_{n+1}}\X^{a_{n+1}'}$ since up to a global and irrelevant phase, $\bra b \Z^r = \bra b$. Thus, the state of the total system, before decryption by the Client, is
        \begin{align}
            \rho_{dec, b}
            &=
            \left(
            \ketbra{b}{b}
            \circ
                \Dec{a'_{n+1}}{r'_{n+1}}\right)_{n+1}
            \circ
            \Enc{\a'}{\r'}
            \circ
            \Z_n^\dagger(\phi_{b})
            \circ
            \F
            \circ
            \C
        [
            \rho\otimes \rho_\A
        ]
    \end{align}
    \paragraph{Client decryption.}
    Finally, the Client applies $\Dec{\tilde\a'}{\tilde\r'}$ defined as $\Dec{\a'}{\r'}$ if $\A\neq \Tlabel$, and $\X^{b}_n\circ\Dec{\a'}{\r'}$ if $\A=\Tlabel$. In both cases, the additional correction is Pauli and acts on the $n$-th qubit, so it commutes with $\ketbra b_{n+1}$.
    \begin{itemize}
        \item In the $\A\neq \Tlabel$ case, then it is straightforward that
        \begin{align}
            \rho_{out, b}
            &=
            \ketbra b_{n+1}
            \circ
            \underbrace{(\bigotimes_{i\leq n}\Dec{a'_i}{r'_i})
            \circ
            (\Dec{a'_{n+1}}{r'_{n+1}})_{n+1}}_{\Dec{\a'}{\r'}}
            \circ
            \Enc{\a'}{\r'}
            \circ
            \Z_n^\dagger(\phi_{b})
            \circ
            \F
            \circ
            \C
        [
            \rho\otimes \rho_\A
        ]
        \nonumber
        \\
        &=
            \ketbra b_{n+1}
            \circ
            \underbrace{\Dec{\a'}{\r'}
            \circ
            \Enc{\a'}{\r'}}_{\id_{n+1}}
            \circ
            \Z_n^\dagger(\phi_{b})
            \circ
            \F
            \circ
            \C
        [
            \rho\otimes \rho_\A
        ]
        \\
        &=
            \ketbra b_{n+1}
            \circ
            \Z_n^\dagger(\phi_{b})
            \circ
            \F
            \circ
            \C
        [
            \rho\otimes \rho_\A
        ]
        \\
        &=
            (\id_n\otimes \ketbra b)
            \circ
            \F
            \circ
            \C
        [
            \rho\otimes \rho_\A
        ]
    \end{align}
    which is the output of the Ideal Resource.
    \item In the $\A=\Tlabel$ case, we have something similar, except the Pauli correction is being performed by the Client as part of the decryption process. Hence the net transformation that will be implemented on $\rho$ is $\T_n\circ \C$. Indeed, for the same reason as above but introducing the required correction, by replacing $\phi_b=b\times\tfrac\pi2$ (which is the appropriate value for $\A=\Tlabel$):
        \begin{align}
            \rho_{out, b}
            &=
            \ketbra b_{n+1}
            \circ
            \X^{b}_{n}
            \circ
            \Z_n^\dagger({b}\times \tfrac{\pi}{2})
            \circ
            \F
            \circ
            \C
        [
            \rho\otimes \rho_\A
        ]
        \\
        &=
            \T_n\circ\C[\rho]
            \otimes
            \ketbra b
    \end{align}
    \end{itemize}

    Since the Client's output of the protocol is the $n$-qubit state $\Apattern_n
        [\rho]$, this concludes the proof of correctness, as this is the output of the Ideal Resource.
\end{proof}

 \subsection{Security of Blind Measurements Protocol}
\label{section:blindmeas-secproof-formal}
\begin{proof} Here, we prove that Simulator~\ref{simulator:blind-meas} allows one to reproduce any Server deviation by requiring knowledge of the Clifford circuit $\C$ only, not the $n$-qubit input state $\rho$.
\paragraph{State sent to the Distinguisher.}
Unsurprisingly, it emulates the application of a quantum one-time pad by teleportation, without learning the Client's inputs. Instead, it creates $n$ EPR-pairs, gives halves to the Server interface of the Distinguisher, the other halves to the Resource, with the instruction to perform a Bell measurement on them and the Client's input state $\rho$, as per the EPR-encryption explained in the preliminaries. The result is that once the measurements are done, the Resource holds the outcomes $\a, \r$ interpreted as keys of the one-time pad, while the Server holds the state
\begin{equation}
    \rho^{\a, \r}_{ideal, in} = \Enc{\a}{\r}[\rho] = \rho^{\a, \r}_{real, in}\; .
\end{equation}
Intuitively, from then on, we will show that any malicious Server interacts with the received state identically in both worlds, hence the proof can be concluded by showing that the received states are themselves initially identical, as we have just shown. The rest of the proof will formalize this.

\paragraph{State after interaction with a malicious Server.}
Without loss of generality, the Server's behavior is a CPTP map on $n$ qubits, followed by computational basis measurements yielding outcomes $\z$. The CPTP map can be written as a unitary $\D$ on the $n$ qubits and a working register of fixed size $w$, initialized at $\ketbra0^{\otimes w}$.
In the Ideal World, the bitstring is returned to the Ideal Resource, which sets $\z\oplus\a$ as the output of the protocol. In the Real World, the Client sets the exact same output.
In total, the system's state can thus be written identically in both worlds as
\begin{equation}
    \rho^{\a, \r}_{real, out, \z\oplus\a}=
    \rho^{\a, \r}_{ideal, out, \z\oplus\a}=
    \rho^{\a, \r}_{out, \z\oplus\a}=
    \left( \ket{\z\oplus\a}\bra{\z}\otimes \id_w \right) \circ \D \circ \left( \Enc{\a}{\r} \otimes \id_w \right)[\rho\otimes \ketbra0^{\otimes w}]
\end{equation}
We apply the same change of variables as in the proof of correctness, to obtain
\begin{align}
    \rho^{\a, \r}_{out, \z}
    &=
    \left( \ket{\z}\bra{\z\oplus\a}\otimes \id_w \right)\circ \D \circ \left( \Enc{\a}{\r} \otimes \id_w \right)[\rho\otimes \ketbra0^{\otimes w}]
    \\
    &=\left( \ket{\z}\bra{\z} \otimes \id_w \right)\circ \left( \X^\a \otimes \id_w \right)\circ \D \circ \left( \Enc{\a}{\r} \otimes \id_w \right)[\rho\otimes \ketbra0^{\otimes w}]
    \\
    &=\left( \ket{\z}\bra{\z} \otimes \id_w \right)\circ \left( \Dec{\a}{\r} \otimes \id_w \right)\circ \D \circ \left( \Enc{\a}{\r} \otimes \id_w \right)[\rho\otimes \ketbra0^{\otimes w}]
\end{align}
where in the third line we used the fact that $\bra z \Z^r = \bra z$ up to an irrelevant global sign for any $z, r \in\{0,1\}$.

\paragraph{From the point of view of the Distinguisher.}
The values of $\a, \r$ are unknown to the Distinguisher, and they follow a uniform distribution. Hence the Distinguisher only knows a statistical mixture of all possible values, identical in both worlds:
\begin{equation}
    \rho_{out, \z} =
    \dfrac{1}{4^n}
    \sum_{\a, \r\in\{0,1\}^n}
    \rho^{\a, \r}_{out, \z}
    =
    \dfrac{1}{4^n}
    \left( \ket{\z}\bra{\z} \otimes \id_w \right)
    \left[ \sum_{\a, \r\in\{0,1\}^n}
    \left( \Dec{\a}{\r} \otimes \id_w \right)\circ \D \circ \left( \Enc{\a}{\r} \otimes \id_w \right)[\rho\otimes \ketbra0^{\otimes w}] \right]
\end{equation}
From this we can deduce that $\rho_{ideal, out, \z} = \rho_{real, out, \z}$ and so $\norm{\rho_{ideal, out, \z}-\rho_{real, out, \z}}_{\Tr\strut} = 0$, which concludes the proof.

\end{proof}

\subsection{Proof of Reduction to Pauli Deviations Lemma}
\label{subsection:proof of reduction to Pauli Dev}
In this section, we aim to prove the important Reduction to Pauli Deviations lemma, that we hereby re-state.
\paulidev*

\begin{proof}
    In this proof, we aim to show that the total system can be described by the state formulated in the lemma,
     after interacting with any unbounded Server. In what follows, $\rho$ represents the classical input $\x$ encoded in a quantum state.

    \paragraph{State sent to the Server.}
    We start by writing the total state sent to the Server alongside computation branch $\b'$. This includes the initial one-time-pad of all qubits with keys $\a, \r$ and secret pre-rotation of ancillas with angles $\boldsymbol{\theta}=(\theta_1, ..., \theta_t)$, as well as the classical registers carrying rotation angles denoted by the $2\times t$-bit string $\boldsymbol{\delta}_{\b'}$. Let $\rho\otimes \rho_\A$ denote the input state tensored with the ancilla qubits, and without loss of generality, let $\ketbra0^{\otimes w}$ be the initial state of a private working register held by the Server. From the point of view of the Client having sampled secrets $\a, \r, \boldsymbol{\theta}$, the description of the total state held by the Server is thus
    \begin{equation}
        \rho_{in, \b'}^{\a, \r, \boldsymbol{\theta}}
        =
        \Enc{\a}{\r}
            \circ
            \Z(\boldsymbol\theta)
            \left[\rho
            \otimes
            \rho_\A
            \otimes
            \ketbra{\boldsymbol\delta_{b'}}
            \otimes
            \ketbra{0}^{\otimes w}\right]\; .
    \end{equation}

    \paragraph{After interaction with a malicious Server.}
    An arbitrary malicious Server can be modelled as performing unitary deviations on the computations qubits and some internal work register at each layer.
    We now use the same trick as \cite{FK17unconditionally,KKLM22unifying}: use the fact that the computation branch is fixed to $\b'$, so we can model $\Z^\dagger(\delta)$ at each layer as classically-controlled gates, controlled by the angles register containing two qubits per layer, sent by the Client. Then, in all generality, an interaction with a malicious Server, on the first layer, writes as follows:
    \begin{itemize}
        \item The Server might apply a deviation $\D_{in}$ when receiving the $n$ qubits and the first ancilla.
        \item Then, we can always assume that the Server performs
        $\F_1\circ \C_1$ followed by a unitary deviation $\D_{pre}$.
        \item Finally we can assume that the Server performs a rotation on the $n$-th qubit, that is classically conditioned by the angles register, which we call $\CR^\dagger$, followed by a final unitary deviation $\D_{fin}$.
    \end{itemize}
        Since they are all unitary operators, we can gather them as one global unitary $\D_1$ applied before the measurement of the ancilla, for the first layer. This allows to then replace the honest $\CR^\dagger$ by a $\Z$ rotation with the first angle sent by the Client.
    \begin{align}
        &=
        \D_{fin}
        \circ
        \CR^\dagger
        \circ
        \D_{pre}
        \circ
        \F_1
        \circ
        \C_1
        \circ
        \D_{in}
        \left[
            \rho_{in, \b'}^{\a, \r, \boldsymbol{\theta}}
            \otimes
            \ketbra0^{\otimes w}
         \right]
         \\
         &=
         \D_1
         \circ
        \CR^\dagger
        \circ
        \F_1
        \circ
        \C_1
        \left[
            \rho_{in, \b'}^{\a, \r, \boldsymbol{\theta}}
            \otimes
            \ketbra0^{\otimes w}
         \right]
         \label{eq:unitarization:avant}
         \\
         &=
         \D_1
         \circ
        \Z^\dagger_n\left( \delta_{\b'}^{(1)} \right)
        \circ
        \F_1
        \circ
        \C_1
        \left[
            \rho_{in, \b'}^{\a, \r, \boldsymbol{\theta}}
            \otimes
            \ketbra0^{\otimes w}
         \right]
         \label{eq:unitarization:fin}
    \end{align}
    where we have set
        \begin{equation}
        \D_1 = \D_{fin}\circ \CR^\dagger \circ \D_{pre}\circ \F \circ \C \circ \D_{in}\circ \C^\dagger \circ \F^\dagger \circ \CR\; .
    \end{equation}
    This could not have been done without unitarizing the protocol, since otherwise commuting the deviation would have carried a dependency on the rotation angle.

    We can apply the same reasoning for all the layers. Overall, this trick allows to extract the honest unitary part from the deviation, followed by a pure deviation term that we write $\D$. The resulting state is represented in Figure~\ref{fig:mb-DQC-secproof-before commutation} for the case $n=3, t=2$, and can thus be expressed as
    \begin{multline}
        \rho_{out, \b'}^{\a, \r, \boldsymbol{\theta}}
        =
       \ketbra{\b'}{\b}
        \circ
        \D
        \circ
        \C_{t+1}
        \circ
        \Z^\dagger_n(\delta^{(t)}_{\b'})
            \circ \F_{t}
            \circ \C_{t}
        \circ
        \cdots
        \circ
        \Z^\dagger_n(\delta^{(1)}_{\b'})
            \circ \F_{1}
            \circ \C_{1}
            \left[
                \rho_{in, \b'}^{\a, \r, \boldsymbol{\theta}}
            \right]\; .
        \nonumber
    \end{multline}
    \begin{figure}[h]
        \centering
        \includegraphics[width=\linewidth]{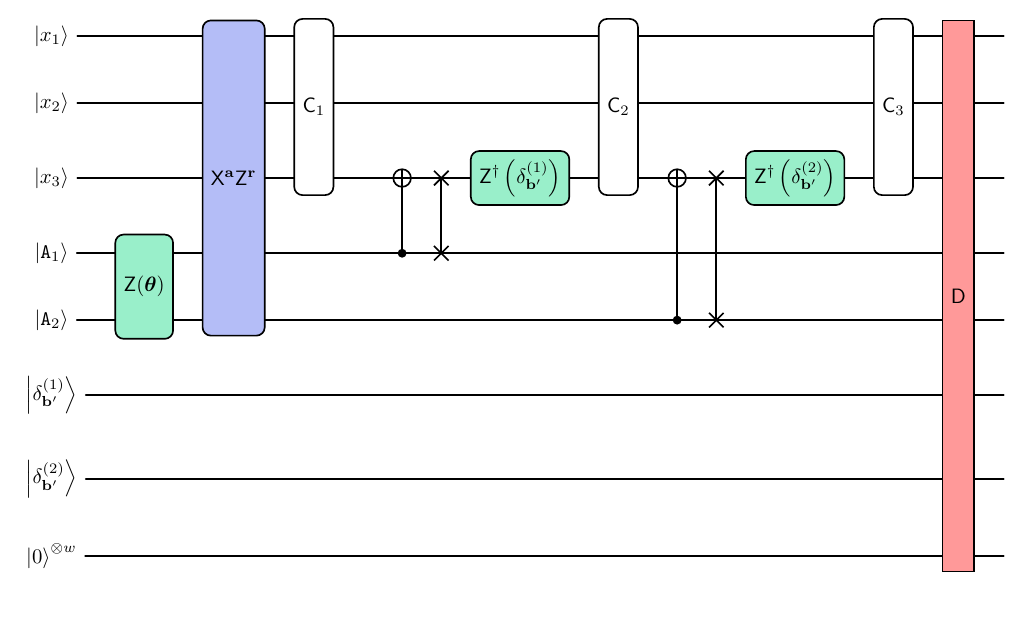}
        \caption{Representation of the state after interaction with a malicious Server, alongside the computation branch $\b'$, on a computation where $n=3, t=2$. The $\ketbra{\b'}{\b}$ operation is to be applied on the output wires.
        The protocol was unitarized and any malicious behavior can be written as a unitary operator $\D$ (red box) performed by the Server after executing the protocol honestly.
        This unitarization has transformed the classically-controlled rotations of Equation~\ref{eq:unitarization:avant} into simple $\Z^\dagger(\delta)$ rotations (green boxes in the circuit and see Equation~\ref{eq:unitarization:fin}) while the deviation is applied on the angles sent by the Client at the end of the honest execution.
        In this figure, the Client has sent the input and the ancillas $\rho\otimes \rho_{\A_1}\otimes... \otimes \rho_{\A_t}$ encrypted with a Pauli operator (blue box) and a pre-rotation on the ancillas (green box on the left), which later gets compensated during the $\Z^\dagger(\delta)$ rotations (green boxes in the middle).}
        \label{fig:mb-DQC-secproof-before commutation}
    \end{figure}

    \paragraph{Commuting the encryption.}
    The next step is to commute the initial Pauli encryption through the honest, unitary sequence of Clifford and state-injection layers. Similarly, as in the proof of correctness of Protocol~\ref{protocol:blind-gate}, after each layer the Pauli encryption is mapped to another Pauli encryption (a bijective mapping since the operations are Clifford). Also, the $\Z(\theta)$ pre-rotation can be commuted and gets cancelled in the $\Z(\delta_{\b'})$ rotation that becomes $\Z(\phi_{\b'})$, \emph{i.e.}, the correct rotation alongside branch $\b'$. This can be applied to all layers successively: the state of the total system before deviation remains encrypted.
    If we write the final encryption keys as $\a', \r'$ that, given $\b'$, are a deterministic Clifford mapping from $\a, \r$, the result can be depicted on Figure~\ref{fig:mb-DQC-secproof-after commutation}, and the state can be written as
    \begin{equation}
        \rho_{out, \b'}^{\a, \r, \boldsymbol{\theta}}
        =
        \ketbra{\b'}{\b}
        \circ
        \D
        \circ \Enc{\a'}{\r'}
        \left[
            \rho_{cor, \b', C}
            \otimes
            \ketbra{\boldsymbol\delta_{b'}}
            \otimes
            \ketbra{0}^{\otimes w}
        \right]
    \end{equation}
    where
         $\rho_{cor, \b, C} =
        \C_{t+1}
        \circ
        \Z^\dagger_n(\phi^{(t)}_{\b})
                \circ \F_{t}
                \circ \C_{t}
            \circ
            \cdots
            \circ
            \Z^\dagger_n(\phi^{(1)}_{\b})
                \circ \F_{1}
                \circ \C_{1}
        [\rho\otimes \rho_\A]
        $ is the correct state after honest Clifford and state injection layers for computation $C$ alongside computation branch $\b$ (used for branch $\b'$ in the above, since the Client stores $\b'$).

            \begin{figure}[h]
        \centering
        \includegraphics[width=\linewidth]{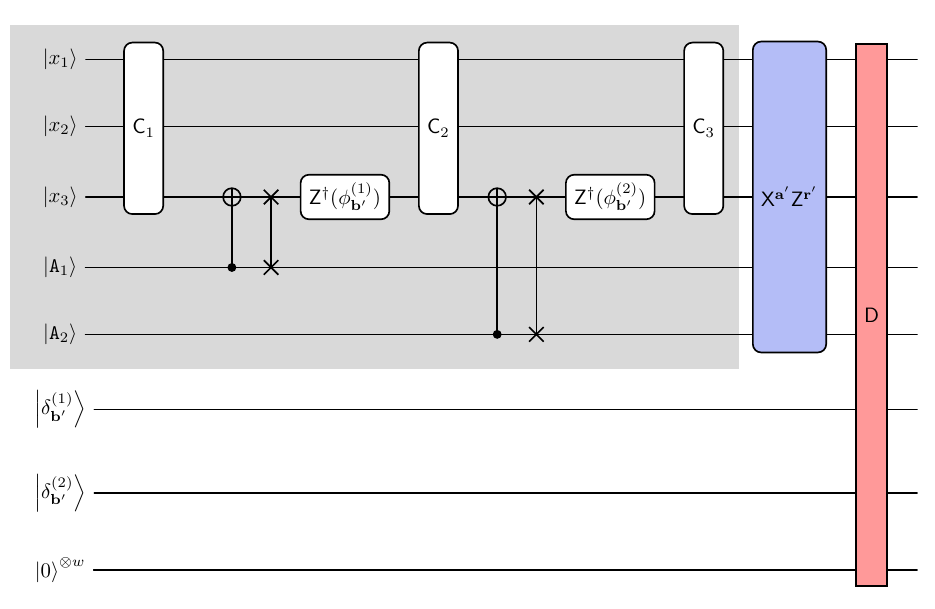}
        \caption{Same as Figure~\ref{fig:mb-DQC-secproof-before commutation}, but with a commuted encryption. On each layer, the $\Z({\theta})$ encryption is absorbed into $\Z^\dagger(\delta)$ to turn it into $\Z(\phi)$, so the green boxes of Figure~\ref{fig:mb-DQC-secproof-before commutation} disappear. The Pauli encryption $\Enc{\a}{\r}$ is commuted through the entire Clifford circuit and results in $\Enc{\a'}{\r'}$ (blue box in the figure). The state in the gray box is $\rho_{cor, \b', C}$, \emph{i.e.}, the correct state after honest execution of the protocol alongside branch $\b'$.}
        \label{fig:mb-DQC-secproof-after commutation}
    \end{figure}

    Then, we show that decoding measurement outcomes can be seen as applying a decryption operator before the measurements, and that is the conjugate of the encryption operator. This is straightforward: the Client's decoding corresponds to storing $\b'=\b\oplus \a'$, since indeed at each step, the Client decrypts the measurement outcomes with the $\X$-factor of the encryption. On the level of the entire computation, it amounts to storing $\b$ as $\b'=\b\oplus \a'$. Hence, it is equivalent to applying a decryption $\X^{\a'}$ before the $\Z$-basis measurement, which is equivalent to a $\Dec{\a'}{\r'}$ decryption since the measurement basis is invariant under $\Z$. Formally, let the following change of variables: $\b'' = \b\oplus \a'$, so that $\b$ becomes $\b''\oplus \a'$, and $\b'$ becomes $\b''$. Then, re-label $\b''$ as $\b$. We thus have:
        \begin{equation}
            \rho_{out, \b}^{\a, \r, \boldsymbol{\theta}}
            =
            \ketbra\b
            \circ
            \Dec{\a'}{\r'}
            \circ
            \D
            \circ
            \Enc{\a'}{\r'}
            \left[
                \rho_{cor, \b, C}
                \otimes
            \ketbra{\boldsymbol\delta_{b}}
            \otimes
            \ketbra{0}^{\otimes w}
            \right]
        \end{equation}

    \paragraph{From the point of view of the Distinguisher.}
    From the point of view of any unbounded Distinguisher, the keys $\a, \r$ are unknown, so the resulting state is a statistical mixture of all the possible values of $\a, \r$ over $\bin^{n+t}$ since they follow a uniform distribution. The updated keys $\a', \r'$ do as well since at each layer they are updated according to a Clifford transformation, which is a bijection. The result is that the sum can be similarly taken over $\a', \r'\in\bin^{n+t}$, which we re-label $\a, \r$ for simplicity. Furthermore, the angles $\theta_i$ for $i\leq t$ are not known neither, so a sum must be taken on the possible values of $\boldsymbol{\theta}$ over $\Theta^{t}$.
        We get:
        \begin{align}
            \rho_{out, \b}&=
            \dfrac{1}{4^{n+t}}
            \sum_{\a, \r\in \bin^{n+t}}
            \dfrac{1}{2^{2t}}
            \sum_{\boldsymbol{\theta}}
            \ketbra\b
            \circ
            \Dec{\a}{\r}
            \circ
            \D
            \circ
            \Enc{\a}{\r}
            \left[
                \rho_{cor, \b, C}
                \otimes
            \ketbra{\boldsymbol\delta_{b}}
            \otimes
            \ketbra{0}^{\otimes w}
            \right]
        \end{align}
        Now we notice that the values of $\theta$ only appears in the angles register. Since their distribution is uniform, they perfectly one-time pad that register and yield the maximally mixed state on $2$ qubits for each layer so $2t$ qubits in total, which can be discarded from the state since it contains no useful information. We organize the above sum to make that clear
        \begin{align}
            \rho_{out, \b}
            &=
            \dfrac{1}{4^{n+t}}
            \ketbra\b
            \left[
                \sum_{\a, \r\in \bin^{n+t}}
                \Dec{\a}{\r}
                \circ
                \D
                \circ
                \Enc{\a}{\r}
                \left[
                    \rho_{cor, \b, C}
                    \otimes
                \left( \dfrac{1}{2^{2t}}
                \sum_{\boldsymbol{\theta}}
                \ketbra{\boldsymbol\delta_{b}} \right)
                \otimes
                \ketbra{0}^{\otimes w}
                \right]
            \right]
            \\
            &=
            \dfrac{1}{4^{n+t}}
            \ketbra\b
            \left[
                \sum_{\a, \r\in \bin^{n+t}}
                \Dec{\a}{\r}
                \circ
                \D
                \circ
                \Enc{\a}{\r}
                \left[
                    \rho_{cor, \b, C}
                    \otimes
                \left( \dfrac{\id_{2t}}{2^{2t}}
                 \right)
                \otimes
                \ketbra{0}^{\otimes w}
                \right]
            \right]
            \\
            \label{eq:paulidev:beforePOV}
            &=
            \dfrac{1}{4^{n+t}}
            \ketbra\b
            \left[
                \sum_{\a, \r\in \bin^{n+t}}
                \Dec{\a}{\r}
                \circ
                \D
                \circ
                \Enc{\a}{\r}
                \left[
                    \rho_{cor, \b, C}
                \otimes
                \ketbra{0}^{\otimes w}
                \right]
            \right]
        \end{align}

        Without loss of generality, we can always decompose $\D$ in the $n+t$-qubit Pauli basis as $\D=\sum_{\E\in\Pauli_{n+t}}\alpha_\E \E\otimes \U_\E$ where $\E$ acts on the $n+t$-qubit system sent by the Client while $\U_\E$ acts on the rest of the system (meaning the working register, since the angle register has been traced out). This choice is convenient because it is on this $n+t$ qubit subsystem that the encryption and decryption occurs, and hence the Pauli Twirling lemma will apply. Indeed, the following simplification can now occur:
        \begin{align}
            \label{eq:twirl begin}
            &\dfrac{1}{4^{n+t}}\sum_{\a, \r\in\bin^{n+t}}
                \Dec{\a}{\r}
                \circ
                \D
                \circ
                \Enc{\a}{\r}
                \left[
                    \rho_{cor, \b, C}
                \otimes
                \ketbra{0}^{\otimes w}
                \right]
                \\
                &=
            \dfrac{1}{4^{n+t}}\sum_{\a, \r\in\bin^{n+t}}
                \Dec{\a}{\r}
                \circ
                \left(
                    \sum_{\E\in\Pauli_{n+t}}\alpha_\E \E\otimes \U_\E
                 \right)
                \circ
                \Enc{\a}{\r}
                \left[
                    \rho_{cor, \b, C}
                \otimes
                \ketbra{0}^{\otimes w}
                \right]
                \\
                &=
                \dfrac{1}{4^{n+t}}\sum_{\E, \E'\in \Pauli_{n+t}}
                \alpha_\E
                \alpha^*_{\E'}
                \sum_{\Q \in \Pauli_{n+t}}
                \Q^\dagger
                \E
                \Q
                \;
                \rho_{cor, b}
                \;
                \Q^\dagger
                \E^{\prime \dagger}
                \Q
                \otimes \U_{\E}\ketbra0^{\otimes w}\U^\dagger_{\E'}
                \\
                &=
                \dfrac{1}{4^{n+t}}\sum_{\E\in \Pauli_{n+t}}
                \abs*{\alpha_\E}^2
                \sum_{\Q \in \Pauli_{n+t}}
                \Q^\dagger
                \E
                \Q
                \;
                \rho_{cor, b}
                \;
                \Q^\dagger
                \E^{\dagger}
                \Q
                \otimes \U_{\E}\ketbra0^{\otimes w}\U^\dagger_{\E}
                \\
                \label{eq:twirl end}
                &=
                \dfrac{1}{4^{n+t}}\sum_{\E\in \Pauli_{n+t}}
                \abs*{\alpha_\E}^2
                \sum_{\Q \in \Pauli_{n+t}}
                \Q^\dagger
                \circ
                \E
                \circ
                \Q[\rho_{cor, b}]
                \otimes \U_\E[
                    \ketbra0^{\otimes w}
                ]
                \\
                \label{eq:encryption commute}
                &=
                \sum_{\E\in \Pauli_{n+t}}
                \abs*{\alpha_\E}^2
                \E
                [\rho_{cor, b}]
                \otimes \U_\E[
                    \ketbra0^{\otimes w}
                ]
        \end{align}
        where in Equations~\ref{eq:twirl begin} to~\ref{eq:twirl end} we simply applied the Pauli Twirl Lemma~\ref{lemma:Pauli Twirl}, and to obtain Equation~\ref{eq:encryption commute} we used the fact that Pauli operators commute up to an irrelevant global phase.
        Hence,
        \begin{equation}
            \rho_{out, \b} =
            \sum_{\E\in \Pauli_{n+t}}
            \abs*{\alpha_\E}^2
            \ketbra \b
            \circ
            \E
            [\rho_{cor, b}]
            \otimes \U_\E[
                \ketbra0^{\otimes w}
            ]
        \end{equation}

        Averaging over the all the computation branches yields the result stated in the initial lemma:
        \begin{align}
            \rho_{out}
        =
            \dfrac{1}{2^{n+t}}
            \sum_{\b\in\bin^{n+t}}
            \rho_{out, \b}
        \end{align}

\end{proof}
 \subsection{Security of Magic-Blind Delegated Quantum Computation}
\label{subsection:security proof MB-DQC}

\begin{proof}
    The security proof of Magic-Blind DQC in the Abstract Cryptography framework amounts to proving indistinguishability between the Real World and the Ideal World. In this proof, we analyze the transcript in the Real World, and present a Simulator that, once plugged into the Resource, allows one to generate the same transcript, concluding the proof. In what follows, let the inputs (computation and input state) chosen by the Distinguisher be $C$ and $\rho$.

    \paragraph{In the Real World.}
    We start with the transcript in the Real World. Here, we can use the work done in the proof of Lemma~\ref{lemma:mblind:Pauli deviations}, \textit{i.e.}, Section~\ref{subsection:proof of reduction to Pauli Dev}, which analyzes the evolution of the state when interacting with a malicious Server in Protocol~\ref{protocol:mblind_DQC}. Namely, after averaging over the possible secrets, using Equation~\ref{eq:paulidev:beforePOV} we can write the transcript in the Real World as
    \begin{equation}
        \rho_{real, out, \b}
        =
        \dfrac{1}{4^{n+t}}
            \ketbra\b
            \left[
                \sum_{\a, \r\in \bin^{n+t}}
                \Dec{\a}{\r}
                \circ
                \D
                \circ
                \Enc{\a}{\r}
                \left[
                    \rho_{cor, \b, C}
                \otimes
                \ketbra{0}^{\otimes w}
                \right]
            \right]\; .
    \end{equation}

    \paragraph{In the Ideal World.}
    We now present a Simulator that interacts with the malicious Server and Resource~\ref{resource:mblind_DQC}, and prove that it generates the same transcript as in the Real World. To do so, we proceed by reduction, as in the security proof of the Blind State Injection Protocol~\ref{protocol:blind-gate}. We present Reduction~\ref{reduction:MB-DQC}, show that it generates the same transcript as Protocol~\ref{protocol:mblind_DQC} (namely $\rho_{real, out, \b}$ for the same computation branch $\b$), and finally present a version using the Simulator and the Resource that performs the same steps and therefore generates the same transcript.

    \subparagraph{Presenting the Reduction.}
    To obtain Reduction~\ref{reduction:MB-DQC}, we replace the initial $n$-qubit encryption by an EPR encryption, and each "Clifford + Blind state injection" layer by its EPR-reduction version with delayed rotations, as already presented and analyzed in Reduction~\ref{protocol:blind-gate-reduction}. Finally, the Client receives $n$ bits from the Server and decodes them in the same way as in the initial protocol.

    \begin{reduction}
    \caption{MB-DQC, EPR-reduction and delayed rotations}
    \label{reduction:MB-DQC}
    \begin{algorithmic}[1]
        \Procedure{Client — EPR-encryption of $\rho$}{}
        \State Prepare $n$ EPR Pairs, and send each half to the Server.
        \State Perform Bell measurements on the other halves and $\rho$, set the $n$-bit outcome strings $\a, \r$ as encryption keys on $n$ bits.
        \EndProcedure

        \Procedure{Client — Perform Clifford and Injection layers}{}

        \For{$i=1, ..., t$}
        \State Prepare an EPR pair and send half to the Server.
        \State EPR-encrypt ancilla $\rho_{\A_i}$, part $1$, get outcome $x_i$.
        \State Set $\a'\gets \a||x_i, \r'\gets \r||0$
        \State Set $\a', \r'\gets \F_i\circ\C_{i}(\a', \r')$
        \State Receive bit $b_i$ from the Server.
        \State Decode $b_i \gets b_i \oplus a'_{n+i}$
        \State Sample $\delta_i \gets \$\Theta$ and sends it to the Server.
        \State Compute $\phi_i(\A_i, b_i)$ according to Equation~\ref{eq:blind:phi}
        \State Compute $\theta_i(\phi_i, a'_n, \delta_i)$ according to Equation~\ref{eq:blinded angle inverted}
        \State EPR-Encrypt ancilla $\rho_{\A_i}$, part $2$, pre-rotation angle $\theta$, get outcome $z_i$
        \State Set $\a\gets\a||x_i, \r\gets \r||z_i$, and re-update $\a', \r'\gets \F_i\circ \C_i(\a, \r)$
        \State Update $a'_n \gets a'_n\oplus b_i$ if $\A_i=\Tlabel$

        \EndFor
        \EndProcedure

        \Procedure{Client — Blind measurements}{}
        \State Truncate $\a, \r$ to the first $n$ bits.
        \State Receive $n$-bit string $\z$ from the Server.
        \State Update $\a, \r\gets \C_{t+1}(\a, \r)$
        \State Decode $\z\gets \z\oplus \a$
        \EndProcedure

        \State Output $\z$ as Client output if $\star$ ; else $\b||\z$.
    \end{algorithmic}
    \end{reduction}

    \subparagraph{Transcript in Reduction~\ref{reduction:MB-DQC}.}
To write the transcript that is generated when a Distinguisher interacts with Reduction~\ref{reduction:MB-DQC} instead of Protocol~\ref{protocol:mblind_DQC}, we proceed with the same logic as the proof of Lemma~\ref{lemma:mblind:Pauli deviations} in Section~\ref{subsection:proof of reduction to Pauli Dev}. We start by writing the state sent to the Server, by fixing the randomness—which here consists of the outcomes of the Bell measurements $\a, \r$ and the angle sampled uniformly by the Client $\boldsymbol\delta$—and the computation branch $\b$. We can note the following: after the EPR-encryption of $\rho$, the Bell measurement outcomes are $n$-bit strings $\a, \r$ and the Server holds $\Enc{\a}{\r}[\rho]$.

    Fixing the randomness $\boldsymbol{\delta}$ fixes the angles register to $\ketbra{\boldsymbol{\delta}}$. Then, fixing the computation branch $\b$ fixes for each $i\leq t$ the angle $\theta_i$. Using the same reasoning as for Reduction~\ref{protocol:blind-gate-reduction}, the qubit that the Server receives for each layer is in state $\Enc{x_i}{z_i}\circ\Z(\theta_i)[\rho_{\A_i}]$ after the Client's operations (EPR-encryption part $1$ and later $2$).
    Note that at each layer, one qubit is added so there is one more bit to the encryption keys: at the end $\a$ and $\r$ are $n+t$-bit encryption keys.
    Thus, without loss of generality, we can write that from the point of view of the Client that knows the outcomes $\a, \r$ and random angle $\boldsymbol{\delta}$, the qubits that are sent to the Server are, after the Client's operations, in the state
    \begin{equation}
        \rho_{red, in, \b}^{\a, \r, \boldsymbol{\delta}}
        =
        \Enc{\a}{\r}
            \circ
            \Z(\boldsymbol\theta)
            \left[\rho
            \otimes
            \rho_\A
            \otimes
            \ketbra{\boldsymbol\delta}
            \otimes
            \ketbra{0}^{\otimes w}\right]\; .
    \end{equation}

    The interaction with the malicious Server can be analyzed the same way as in the Real World. See Section~\ref{subsection:proof of reduction to Pauli Dev} for more details. Hence, after interaction, the state is
    \begin{equation}
            \rho_{red, out, \b}^{\a, \r, \boldsymbol{\delta}}
            =
            \ketbra\b
            \circ
            \Dec{\a'}{\r'}
            \circ
            \D
            \circ
            \Enc{\a'}{\r'}
            \left[
                \rho_{cor, \b, C}
                \otimes
            \ketbra{\boldsymbol\delta}
            \otimes
            \ketbra{0}^{\otimes w}
            \right]\; .
        \end{equation}
        From the point of view of the Distinguisher, the secrets $\a, \r, \boldsymbol{\delta}$ are unknown so we average over all the possible secret values (see Section~\ref{subsection:proof of reduction to Pauli Dev} for more details). Again, we clearly see that since $\boldsymbol{\delta}$ only appears in the angle register, it implements a one-time pad on $2t$ qubits and can safely be discarded. Thus,
            \begin{equation}
        \rho_{red, out, \b}
        =
        \dfrac{1}{4^{n+t}}
            \ketbra\b
            \left[
                \sum_{\a, \r\in \bin^{n+t}}
                \Dec{\a}{\r}
                \circ
                \D
                \circ
                \Enc{\a}{\r}
                \left[
                    \rho_{cor, \b, C}
                \otimes
                \ketbra{0}^{\otimes w}
                \right]
            \right]
            \; .
    \end{equation}
    We thus have $\rho_{red, out, \b} = \rho_{real, out, \b}$.

    \paragraph{Introducing the Simulator and concluding the proof.}
    We now present Simulator~\ref{simulator:MB-DQC}: in total, the operations performed by the Simulator and the Resource (via the CPTP map instructed by the Simulator) are the same as in Reduction~\ref{reduction:MB-DQC}

    \begin{simulatorr}
    \caption{MB-DQC}
    \label{simulator:MB-DQC}
    \begin{algorithmic}[1]
        \Procedure{Simulator - Emulates Client interaction with Server}{}
        \State Prepare $n$ EPR Pairs, and send each half to the Server.
        \For{$i=1, ..., t$}
        \State Prepare an EPR pair and send half to the Server.
        \State Receive bit $b_i$ from the Server.
        \State Sample $\delta_i \gets \$\Theta$ and send it to the Server.
        \EndFor
        \State Receive $n$-bit string $\z$ from the Server.
        \EndProcedure

        \Procedure{Simulator - Make Resource reproduce the Server behavior}{}
        \State Send the remaining $n+t$ EPR halves, the $n+t$-bit string $\b||\z$, and the angles $\boldsymbol\delta=(\delta_1, ..., \delta_t)$ to Ideal Resource
        \State Send the below instructions as a CPTP map to the Resource to perform.
        \EndProcedure

        \State \textbf{Resource Inputs:} $\C_1, ..., \C_{t+1}$ (public info.), $\rho$ and $\A_1, ..., \A_t$ from the Client, and $n+t$ EPR-halves, $\b||\z, \boldsymbol{\delta}$ from the Simulator.
        \Procedure{Resource - EPR-encryption of $\rho$}{}
        \State Perform Bell measurements on the received EPR pairs and $\rho$, and set the $n$-bit strings $\a, \r$ of measurement outcomes as encryption keys.
        \EndProcedure

        \Procedure{Resource - Perform Clifford and Injection layers}{}
        \For{$i=1, ..., t$}
        \State EPR-Encrypt ancilla $\rho_{\A_i}$, part $1$, get outcome $x_i$.
        \State Set $\a'\gets \a||x_i, \r'\gets \r||0$
        \State Set $\a', \r'\gets \F_i\circ\C_{i}(\a', \r')$
        \State Decode $b_i \gets b_i \oplus a'_{n+i}$
        \State Compute $\phi_i(\A_i, b_i)$ according to Equation~\ref{eq:blind:phi}
        \State Compute $\theta_i(\phi_i, a'_n, \delta_i)$ according to Equation~\ref{eq:blinded angle inverted}
        \State EPR-Encrypt ancilla $\rho_{\A_i}$, part $2$, pre-rotation angle $\theta$, get outcome $z_i$
        \State Set $\a\gets\a||x_i, \r\gets \r||z_i$, and re-update $\a', \r'\gets \F_i\circ \C_i(\a, \r)$
        \State Update $a'_n \gets a'_n\oplus b_i$ if $\A_i=\Tlabel$
        \EndFor
        \EndProcedure

        \Procedure{Resource - Perform Blind Measurements}{}
        \State Truncate $\a, \r$ to the first $n$ bits.
        \State Update $\a, \r\gets \C_{t+1}(\a, \r)$
        \State Decode $\z\gets \z\oplus \a$
        \EndProcedure

        \State Output $\z$ as Client output if $\star$ ; else $\b||\z$.
    \end{algorithmic}
    \end{simulatorr}

    Therefore, for the same randomness we have $\rho_{red, out, \b}^{\a, \r, \boldsymbol{\delta}} = \rho_{ideal, out, \b}^{\a, \r, \boldsymbol{\delta}}$, where $\rho_{ideal, out, \b}^{\a, \r, \boldsymbol{\delta}}$ is the transcript in the Ideal World.
    Hence, using the same reasoning, we conclude that $\rho_{ideal, out, \b} = \rho_{real, out, \b}$, and thus the protocol is perfectly secure.

\end{proof}

\section{Proofs for Verification}
\label{section: verification proofs}

First, we remind useful tools from probability theory.
\begin{lemma}[Hoeffding bound for the binomial distribution]
    \label{lemma:hoeffding:binomial}
    Let $X\sim \mathrm{Binomial}(n,p)$.
    For any $k\le np$,
    \[
    \Pr[X\le k]
    \;\le\;
    \exp\!\left(-\frac{2(np-k)^2}{n}\right),
    \]
    and for any $k\ge np$,
    \[
    \Pr[X\ge k]
    \;\le\;
    \exp\!\left(-\frac{2(np-k)^2}{n}\right).
    \]
    \end{lemma}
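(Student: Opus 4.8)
The plan is to prove both tail bounds by the standard exponential-moment (Chernoff--Hoeffding) method, exploiting that $X\sim\mathrm{Binomial}(n,p)$ is a sum $X=\sum_{i=1}^n X_i$ of $n$ independent Bernoulli($p$) variables, each valued in $\{0,1\}$ and hence of range width $1$. First I would treat the upper tail. Fix $k\ge np$ and a parameter $\lambda>0$, and apply Markov's inequality to the nonnegative variable $e^{\lambda X}$:
\[
\Pr[X\ge k]=\Pr\!\big[e^{\lambda X}\ge e^{\lambda k}\big]\le e^{-\lambda k}\,\mathbb{E}\!\big[e^{\lambda X}\big].
\]
By independence the moment generating function factorises, $\mathbb{E}[e^{\lambda X}]=\prod_{i=1}^n\mathbb{E}[e^{\lambda X_i}]$, so it remains only to control the centred summands $X_i-p\in[-p,\,1-p]$.

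The key ingredient is Hoeffding's lemma: a mean-zero random variable $Y$ supported on an interval of width $b-a$ satisfies $\mathbb{E}[e^{\lambda Y}]\le \exp\!\big(\lambda^2(b-a)^2/8\big)$. Applying it to $Y=X_i-p$, whose support $[-p,1-p]$ has width $1$, gives $\mathbb{E}[e^{\lambda(X_i-p)}]\le e^{\lambda^2/8}$, hence $\mathbb{E}[e^{\lambda(X-np)}]\le e^{n\lambda^2/8}$. Substituting back,
\[
\Pr[X\ge k]\le \exp\!\Big(-\lambda(k-np)+\tfrac{n\lambda^2}{8}\Big).
\]
Optimising the exponent over $\lambda>0$ (a quadratic minimised at $\lambda^\star=4(k-np)/n$) yields $\exp\!\big(-2(k-np)^2/n\big)=\exp\!\big(-2(np-k)^2/n\big)$, which is exactly the claimed upper-tail inequality.

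For the lower tail, fix $k\le np$. I would simply apply the upper-tail result to the complementary variable $n-X\sim\mathrm{Binomial}(n,1-p)$ with threshold $n-k\ge n(1-p)$, since $\{X\le k\}=\{n-X\ge n-k\}$ and $\big(n(1-p)-(n-k)\big)^2=(np-k)^2$, so the same exponent is reproduced. Equivalently one repeats the Chernoff argument with $\lambda<0$. The only non-routine step is Hoeffding's lemma itself, i.e.\ the sub-Gaussian moment bound with the sharp constant $1/8$; this follows from the convexity of $t\mapsto e^{\lambda t}$ together with a second-order expansion of the cumulant generating function $\psi(\lambda)=\log\mathbb{E}[e^{\lambda Y}]$, whose second derivative is the variance of a tilted distribution supported in $[a,b]$ and is therefore at most $(b-a)^2/4$ by Popoviciu's inequality, giving $\psi(\lambda)\le (b-a)^2\lambda^2/8$. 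Since this lemma is entirely standard, I would either cite it or include its short convexity proof, and the remainder is the direct optimisation above.
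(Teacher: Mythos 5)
Your proof is correct and complete: the Chernoff--Hoeffding argument (Markov's inequality on $e^{\lambda X}$, Hoeffding's lemma with the sharp constant $1/8$ applied to the centred Bernoulli summands, optimisation at $\lambda^\star = 4(k-np)/n$, and the lower tail obtained by passing to $n-X\sim\mathrm{Binomial}(n,1-p)$) is the canonical derivation, and every step, including the Popoviciu bound on the tilted variance inside Hoeffding's lemma, checks out. Note that the paper itself offers no proof of this lemma at all---it is stated in the appendix as a standard probability tool and used as a black box---so your proposal supplies exactly the standard argument the authors implicitly rely on; there is nothing in the paper to diverge from.
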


    \begin{lemma}[Hoeffding bounds for the hypergeometric distribution]
        \label{lemma:hoeffding:hypergeometric}
        Let $X \sim \mathrm{Hypergeometric}(N,K,n)$ be the number of marked items when drawing $n$ samples without replacement from a population of size $N$ containing $K$ marked elements.
        Then $\mathbb{E}[X]=\frac{K}{N}n$.
        For any $\chi\ge 0$ such that
        \[
        \frac{K}{N}-\chi \ge 0
        \quad\text{and}\quad
        \frac{K}{N}+\chi \le 1,
        \]
        we have
        \[
        \Pr\!\left[
        X \le \Big(\frac{K}{N}-\chi\Big)n
        \right]
        \;\le\;
        \exp(-2\chi^2 n),
        \]
        and
        \[
        \Pr\!\left[
        X \ge \Big(\frac{K}{N}+\chi\Big)n
        \right]
        \;\le\;
        \exp(-2\chi^2 n).
        \]
        \end{lemma}

Now, we prove the main Lemma useful in the security proof of the Verification protocol, namely Lemma~\ref{lemma:security error}, that we hereby re-state.
\securityerror*

\begin{proof}
In the following, we let $\E\in\Pauli_{N\times(n+t)}$ be a fixed Pauli deviation on $N$ rounds, where $N=d+s$, and let $\sigma$ be a partition of $[N]$ drawn uniformly at random.
Then, we can define the following random variables:
\begin{itemize}
    \item $Z$ describing the number of computation rounds affected, meaning the number of $i$ ($\leq d$) yielding a decision bit $\b^{(i)}_1 = y^{(i)}\neq z^\star$ (either by the Server deviation or by inherent failure of the algorithm);
    \item $Y$ describing the number of failed test rounds, meaning the number of $i$ ($>d$) yielding measurement outcomes $\b^{(i)}$ such that $b^{(i)}_{q_i}\neq0$.
\end{itemize}
The statement is thus about bounding the quantity $\max_\E \Pr[Y < w \wedge Z > d/2]$.

\begin{enumerate}[label=\arabic*)]
    \item
    In what follows, we will parametrize the Server cheating strategy by the number of rounds on which it decides to act with a non-trivial deviation.
    We use $\E^{(i)}\in\Pauli_{n+t}$ for the ``per-round Pauli deviation'', meaning such that $\E=\bigotimes_{i\leq N}\E^{(i)}$.
    \item
    With this notation, we can parametrize the deviation on the number of rounds on which it is harmful (see Definition \ref{verification:def:harmful}).
    Let us note $m$ that number. Formally, $$
    m=
    \Big|\left\{ i\leq N \; \text{s.t}\;  W_{\X,\Y}\left( \E^{(i)} \right)\cap Q\neq \emptyset  \right\}\Big|
    .$$ The distinguishing advantage parametrized by $m$ writes
    \begin{equation}
        \label{eq:pd}
        p_d = \max_{m\leq N} \mathrm{Pr}\left[Z > \frac{d}{2} \wedge Y<w\right].
    \end{equation}
    \item
    Now, we focus on the values of $m$ around which the highest values of $p_d$ are met.
    Note that we can always let $m_0\leq N$ and decompose Equation~\ref{eq:pd} into two regimes, $m\leq m_0$ and $m>m_0$.
    \begin{eqnarray}
        \label{eq:pd_dec}
        p_d
        &=& \max
        \left\{
            \max_{0< m\leq m_0}\mathrm{Pr}\left[Z > \frac{d}{2} \wedge Y<w\right],
            \max_{m_0 < m \leq N}\mathrm{Pr}\left[Z > \frac{d}{2} \wedge Y<w\right]
         \right\}
         \\
        &=& \max
        \left\{
            \max_{0< m\leq m_0}\mathrm{Pr}\left[Z > \frac{d}{2}\right] \mathrm{Pr} [Y<w],
            \max_{m_0 < m \leq N}\mathrm{Pr}\left[Z > \frac{d}{2}\right] \mathrm{Pr} [Y<w]
         \right\}
         \\
        &\leq& \max
        \left\{
            \max_{m\leq m_0}\mathrm{Pr}\left[Z > \frac{d}{2}\right] ,
            \max_{m> m_0} \mathrm{Pr} [Y<w]
         \right\}
    \end{eqnarray}
    The reason we did that simplification is the following: when $m$ is upper-bounded, the least probable event is $Z>\frac{d}{2}$, as this event captures the probability of \textit{many computation rounds} to be detected.
    Likewise, when $m$ is lower-bounded, it means that at least $m_0$ rounds are being attacked: in this case the least probable event is to be undetected by those deviations.
    We will in~\ref{item: m_0} develop an intuition on where to place the barrier $m_0$.
    \item
    For the moment, let us notice that in both terms, the maximal value is obtained at $m=m_0$.
    The distinguishing advantage can therefore be written as
    \begin{equation}
        p_d = \max(\epsilon, \nu)
    \end{equation}
    where
    \begin{equation}
        \nu =
        \mathrm{Pr}\left[Z > \frac{d}{2} \;|\; m_0 \; \text{rounds attacked}\right]
        ,\qquad
        \epsilon =
        \mathrm{Pr} [Y<w\;|\; m_0 \; \text{rounds attacked}]
    \end{equation}
    \item With that being said, let us first calculate $\epsilon$.
    First, let $X$ denote a random variable describing the number of test rounds affected by a non-trivial deviation.
    Then, let $x>0$. We can always write
    \begin{eqnarray}
        \epsilon &=&
        \mathrm{Pr} [Y<w\;|\; X>x] \quad \mathrm{Pr}[X>x]
        +
        \mathrm{Pr} [Y<w\;|\; X\leq x] \quad \mathrm{Pr}[X\leq x]
        \\
        &\leq&
        \mathrm{Pr}[X\leq x]
        +
        \mathrm{Pr} [Y<w\;|\; X> x]
    \end{eqnarray}
    From there, we can notice that $X$ is naturally upper-bounded in the stochastic order by a hypergeometrically-distributed random variable $\tilde X \sim H(N, m_0, s)$.
    We use this fact to compute the first term of the sum.
    By using the tail bounds for such the hypergeometric variable, we have for $\chi_\epsilon\geq 0$:
    \begin{equation}
        \mathrm{Pr}\left[
            \tilde X \leq \left(
                \dfrac{m_0}{N} - \chi_\epsilon
             \right)s
         \right] \leq \exp(-2\chi_\epsilon^2 s)
    \end{equation}
    Using this tail bounds, the parametrization on $\chi_\epsilon$ requires to set $x = ({m_0}/{N} - \chi_\epsilon)s$ and we get
    \begin{equation}
        \mathrm{Pr}[X\leq x]
        \leq
        \mathrm{Pr}\left[
            \tilde X \leq x
         \right]
         \leq \exp(-2\chi_\epsilon^2 s)
    \end{equation}
    Now we compute an upper-bound on the second term of the sum, noting that $$\mathrm{Pr} [Y<w\;|\;  X> x] \leq \mathrm{Pr} [Y<w\;|\; X= x]$$ so that $X$ is fixed and parametrized according to $\chi_\epsilon$ as previously.
    Then, we notice that $Y$ conditioned on $X=x$ is upper-bounded in the stochastic order by a binomially-distributed variable $\tilde Y \sim (x, 1/k)$.
    Indeed: the number of affected test rounds has been fixed to $x$, there are $k=|\mathcal Q|$ types of traps, and any non-trivial deviation is detected by at least one type of trap.
    For Protocol~\ref{protocol:verification}, this is guaranteed by Lemma~\ref{lemma:harmful deviations are detected}, where the set of traps $\mathcal Q$ defined by \ref{eq: set of traps} was used. Consequently, $k=1+t$.
    The probability that a given test round detects the non-trivial deviation of the Server is thus at least $1/k$.
    The worst case is when the deviation is detected by only one type of trap, so we can fix the detection rate to $1/k$.
    Using the tail bounds for the binomial variable, we have for
    \begin{equation}
        \mathrm{Pr}\left[
            \tilde Y < w
         \right]
         \leq
         \exp\left[
            -2\dfrac{(x/k-w)^2}{x} \; \right]
    \end{equation}
    for $w < x/k$.
    In the parametrization on $\chi_\epsilon$, this means $\chi_\epsilon < \frac{m_0}{N}-\frac{w}{s}k$.
    In short, we can upper-bound $\epsilon$ as follows:
    \begin{equation}
        \label{eq:epsilon}
        \epsilon \leq
        \min_{\chi_\epsilon\in[0, \frac{m_0}{N}-\frac{w}{s}k]}
        \exp(-2\chi_\epsilon^2 s) + \exp\left[
            -2\dfrac{(
                (\frac{m_0}{N}-\chi_\epsilon)
                \frac{1}{k}-\frac{w}{s}
                 )^2}{
                \frac{m_0}{N}-\chi_\epsilon
            } \; s
          \right]
    \end{equation}

    \item Then, let us calculate $\nu$.
    To this end, let us decompose the random variable $Z$ into $Z_1$ and $Z_2$, where $Z_1$ describes the number of computation rounds affected by a non-trivial deviation while $Z_2$ describes the number of computation rounds non-affected by the Server deviation but that nevertheless yield an incorrect outcome due to the probabilistic nature of the BQP computation.
    Then, for $z_1>0$, $\nu$ can be written as
    \begin{eqnarray}
        \nu &=& \mathrm{Pr}\left[Z_1 + Z_2 > \frac{d}{2}\right] \\
        &\leq& \mathrm{Pr}\left[Z_1 + Z_2 > \frac{d}{2} \;\; | Z_1> z_1\right] \;\; \mathrm{Pr}[Z_1>z_1]
            + \mathrm{Pr}\left[Z_1 + Z_2 > \frac{d}{2} \;\; | Z_1\leq  z_1\right] \;\; \mathrm{Pr}[Z_1\leq z_1] \\
        &\leq& \mathrm{Pr}[Z_1>z_1]
            + \mathrm{Pr}\left[Z_1 + Z_2 > \frac{d}{2} \;\; | Z_1\leq  z_1\right] \\
    \end{eqnarray}
    Let us now study both terms separately.
    The first term can be upper-bounded as follows. $Z_1$ is lower-bounded in the usual stochastic order by a $H(N, m_0, d)$ hypergeometrically distributed random variable $\tilde Z_1$.
    Therefore, using the tails bound for such a distribution, we get for $\chi_\nu>0$, the desired bound if we re-write $z_1 = (m_0/N+\chi_\nu)d$:
    \begin{equation}
        \mathrm{Pr}\left[ Z_1 > \left( \dfrac{m_0}{N} + \chi_\nu \right)d \right] \leq \exp(-2\chi_\nu^2 d)\; .
    \end{equation}
    For the first term, let us note that
    \begin{equation}
        \mathrm{Pr}\left[Z_1 + Z_2 > \frac{d}{2} \;\; | Z_1\leq  z_1\right] \leq \mathrm{Pr}\left[Z_2 > \frac{d}{2} - z_1 \;\; | Z_1\leq  z_1\right] \leq \mathrm{Pr}\left[Z_2 > \frac{d}{2} - z_1 \;\; | Z_1= z_1\right]
    \end{equation}
    Now, with $Z_1$ fixed to $z_1$, $Z_2$ becomes upper-bounded in the usual stochastic order by a $B(d-z_1, \bqp)$ binomially-distributed random variable $\tilde Z_2$, as among the $d$ computation rounds, only $d-z_1$ are non-affected by a deviation, and the failure probability is $\bqp$ for each.
    The tails bound for $X \sim B(n,\bqp)$ states that, for $n\bqp \le k$
    \begin{equation}
        \mathrm{Pr}[X> k] \leq \exp\left[
            -2\dfrac{(n\bqp-k)^2}{n}
         \right]
    \end{equation}
    Here, replacing $n$ by $d-z_1 = d(1-\frac{m_0}{N}-\chi_\nu)$, and $k$ by $\frac{d}{2}-z_1 = d(\frac{1}{2}-\frac{m_0}{N}-\chi_\nu)$, we get for $(d-z_1)\bqp < \frac{d}{2}-z_1$, or equivalently $\chi_\nu < \frac{1-2\bqp}{2-2\bqp}-\frac{m_0}{N}$ :
    \begin{equation}
        \mathrm{Pr}\left[Z_2 > \frac{d}{2}-z_1\right] \leq \exp\left[
            -2
            \dfrac{
                [\left( 1-\frac{m_0}{N}-\chi_\nu \right)d\bqp - \left( \frac{1}{2}-\frac{m_0}{N}-\chi_\nu \right)d]^2
            }{
                \left( 1-\frac{m_0}{N}-\chi_\nu \right)d
            }
        \right]
    \end{equation}
    Finally, after simplifying and putting back the pieces together, we get the following
    \begin{equation}
        \label{eq: nu}
        \nu \leq \min_{
            \chi_\nu \in [0, \frac{1-2\bqp}{2-2\bqp}-\frac{m_0}{N}]
        }
        \exp[ -2\chi_\nu^2 d ] + \exp\left[
            -2
            \dfrac{
                \left( (1-\frac{m_0}{N}-\chi_\nu)(1-\bqp)-\frac{1}{2} \right)^2
            }{1-\frac{m_0}{N}-\chi_\nu} d
         \right]
    \end{equation}

    \item \label{item: m_0}
    Lastly, we need to develop an intuition on where to place $m_0$.
    For this barrier to be meaningful, let the following scenario, in which a fraction $f$ of the total rounds are attacked by the Server with a non-trivial deviation.
    Note that to obtain the classical output, the Client performs a majority vote over the obtained results.
    Therefore, we are interested in quantifying the critical fraction that corrupts more than half of the outcomes.
    However, we know that there is an inherent probability $\bqp$ of failure for a BQP computation.
    Hence, our criteria becomes the following: if $f$ is the fraction of rounds attacked by a Server deviation, then this must satisfy $(1-\bqp)(1-f)>1/2$, in the sense the fraction $1-f$ of non-attacked rounds that succeed (with probability $1-\bqp$) are more than one half of the total rounds.
    This becomes $f<(1-2\bqp)/(2-2\bqp)$. We then let $\varphi>0$ and parametrize $m_0$ according to how much the Server is far or close to attacking the critical fraction.
    \begin{equation}
        m_0 = \left( \dfrac{1-2\bqp}{2-2\bqp}-\varphi\right)N
    \end{equation}
    We can thus replace $m_0$ by this expression parametrized by $\varphi$ in Equations~\ref{eq:epsilon} and~\ref{eq: nu}.
    \item Then, it means that for different values of $\varphi$, different values of $\epsilon, \nu$ are obtained: these have become function of $\varphi$.
    \begin{equation}
        \label{eq: epsilon_varphi}
        \epsilon(\varphi) \leq
        \min_{\chi_\epsilon\in[0, (\frac{1-2\bqp}{2-2\bqp}-\frac{w}{s}k)-\varphi ]}
        \exp(-2\chi_\epsilon^2 s) + \exp\left[
            -2\dfrac{( (\frac{m_0}{N}-\chi_\epsilon) \frac{1}{k}-\frac{w}{s} )^2}{ \frac{m_0}{N}-\chi_\epsilon } \; s
          \right]
    \end{equation}

    \begin{equation}
        \label{eq: nu_varphi}
        \nu(\varphi) \leq \min_{ \chi_\nu \in [0, \varphi] }
        \exp[ -2\chi_\nu^2 d ] + \exp\left[
            -2 \dfrac{ \left( (1-\frac{m_0}{N}-\chi_\nu)(1-\bqp)-\frac{1}{2} \right)^2 }{1-\frac{m_0}{N}-\chi_\nu} d
         \right]
    \end{equation}

    However, the distinguishing advantage can always be upper-bounded by the minimal one.
    Furthermore, note that for the bounds on $\chi_\epsilon$ in Equation~\ref{eq: epsilon_varphi} to be well-defined, we need to have $\varphi < \frac{1-2\bqp}{2-2\bqp}-\frac{w}{s}k$.
    Finally, the distinguishing advantage can be written as follows:
    \begin{equation}
        p_d \leq \min_{\varphi \in [0, \frac{1-2\bqp}{2-2\bqp}-\frac{w}{s}k]} (\max(\nu(\varphi), \epsilon(\varphi)))
    \end{equation}
    where again, as a reminder, $k=|\mathcal Q|$.
    For Protocol \ref{protocol:verification}, $k=1+t$.

    \item Now, to set a concrete bound, we can set the buffers $\chi_\epsilon, \chi_\nu, \varphi$ to concrete values.
    Indeed, one can re-define the following constants:
    \begin{itemize}
        \item $\alpha = (1-2\bqp)/(2-2\bqp)$
        \item $\Delta = \alpha-wk/s$
        \item $m_0/N = \alpha-\varphi$
    \end{itemize}
    With this, the range of $\varphi$ is $[0, \Delta]$, so we can set $\varphi = \Delta/2$, in the middle of the range.
    Then, the range for $\chi_\epsilon$ becomes $[0, \Delta/2]$, thus we can set $\chi_\epsilon = \chi_0:=\Delta/4$, again in the middle of the range.
    Finally, for $\chi_\nu$ in the expression of $\nu$, the range is $[0, \varphi]=[0, \Delta/2]$ so we can again set $\chi_\nu=\chi_0$.
    When we replace these buffer values in the above expressions, we get values of $\epsilon, \nu$ that upper-bound the actual values suggested by the minimisation problem above, but nevertheless negligible in the number of rounds $d, s$:
    \begin{align}
        \epsilon &\leq \exp( -2(\Delta/4)^2 s ) + \exp\left( -2\dfrac{ ((\alpha-3\Delta/4)/k-\frac{w}{s})^2 }{\alpha-3\Delta/4} s \right)
        \\
        \nu&\leq \exp( -2(\Delta/4)^2 d ) + \exp\left( -2\dfrac{ (((1-\alpha)+\Delta/4)(1-\bqp)-1/2)^2 }{(1-\alpha)+\Delta/4} d \right)
    \end{align}

\end{enumerate}
\end{proof}

\end{document}